\newcommand\pa[1]{#1}
\newcommand\cg[1]{}
\newcommand\good{tame}
\newcommand\Dtame{\mathcal{D}_{\text{tame}}}
\newcommand\dtame[1]{\mathcal{D}_{{#1},\text{tame}}}
\newcommand\myconst{(1.01/\unif)^{1/(\unif-1)}}
\newcommand\hyp{\cG}
\newcommand\fhyp{G}
\newcommand\COLS{B(\hyp)}
\newcommand\COLSH{B(\fhyp)}
\newcommand\clusterH{\cC({\fhyp,\sigma)}}
\newcommand\cols{q}
\newcommand\unif{k}
\newcommand\hnc{\hyp(n,\unif,cn)}
\newcommand\plantedhyp{\hyp(n,\unif,cn,\sigma)}
\newcommand\phyp{\hyp_\sigma}
\newcommand\gnp{G(n,p)}
\numberwithin{equation}{section}
\def\vec#1{\mathchoice{\mbox{\boldmath$\displaystyle#1$}}
{\mbox{\boldmath$\textstyle#1$}}
{\mbox{\boldmath$\scriptstyle#1$}}
{\mbox{\boldmath$\scriptscriptstyle#1$}}}
\newcommand{\Zkc}{Z_{\cols}}
\newcommand{\Zkb}{Z_{\cols,\mathrm{bal}}}
\newcommand{\Zkg}{Z_{\cols,\mathrm{\good}}}
\newcommand\COMB{Combinatorica}
\DeclareMathOperator{\pr}{\mathbf P}
\newtheorem{definition}{Definition}[section]
\newtheorem{claim}[definition]{Claim}
\newtheorem{theorem}[definition]{Theorem}
\newtheorem{lemma}[definition]{Lemma}
\newtheorem{proposition}[definition]{Proposition}
\newtheorem{corollary}[definition]{Corollary}
\newtheorem{fact}[definition]{Fact}
\newcommand\id{\mathrm{id}}
\newcommand\cA{\mathcal{A}}
\newcommand\cC{\mathcal{C}}
\newcommand\cD{\mathcal{D}}
\newcommand\cG{\mathcal{G}}
\newcommand\cE{\mathcal{E}}
\newcommand\cS{\mathcal{S}}
\newcommand\cT{\mathcal{T}}
\def\cR{{\mathcal R}}
\def\cC{{\mathcal C}}
\def\cE{{\mathcal E}}
\newcommand\eps{\varepsilon}
\newcommand\Erw{\mathbf{E}}
\newcommand{\vecone}{\vec{1}}
\newcommand\bc[1]{\left({#1}\right)}
\newcommand\cbc[1]{\left\{{#1}\right\}}
\newcommand\brk[1]{\left\lbrack{#1}\right\rbrack}
\newcommand\norm[1]{\left\|{#1}\right\|}
\newcommand\abs[1]{\left|{#1}\right|}
\newcommand{\whp}{w.h.p.}
\newcommand{\Erdos}{Erd\H{o}s}
\newcommand{\Renyi}{R\'enyi}
\newcommand{\Bollobas}{Bollob\'as}
\newcommand{\Luczak}{\L uczak}
\newcommand\Def{Definition}
\newcommand\Lem{Lemma}
\newcommand\Prop{Proposition}
\newcommand\Thm{Theorem}
\newcommand\Cor{Corollary}
\newcommand\Sec{Section}
\begin{document}

\title{{Hypergraph coloring up to condensation}
	}

\author{Peter Ayre\\
\small School of Mathematics and Statistics\\[-0.5ex]
\small UNSW Sydney\\[-0.5ex]
\small NSW 2052, Australia\\
\small \texttt{peter.ayre@unsw.edu.au}\\
\and Amin Coja-Oghlan\thanks{The research leading to these results has received funding from the European Research Council under the European Union's Seventh Framework Programme (FP/2007-2013) / ERC Grant Agreement n.\ 278857--PTCC.}\\
\small Mathematics Institute\\[-0.5ex]
\small Goethe University \\[-0.5ex]
\small Frankfurt 60325, Germany\\
\small \texttt{acoghlan@math.uni-frankfurt.de} \\
\and Catherine Greenhill\thanks{Research supported by the Australian Research Council Discovery Project DP140101519.}\\
\small School of Mathematics and Statistics\\[-0.5ex]
\small UNSW Sydney\\[-0.5ex]
\small NSW 2052, Australia\\
\small \texttt{c.greenhill@unsw.edu.au}
}

\date{April 13, 2018}

\maketitle

\begin{abstract}
\noindent
Improving a result of Dyer, Frieze and Greenhill [Journal of Combinatorial Theory, Series B, 2015],
we determine the $\cols$-colorability threshold in random $\unif$-uniform hypergraphs up to an additive error of
$\ln2+\eps_\cols$, where $\lim_{\cols\to\infty}\eps_\cols=0$.
The new lower bound on the threshold matches the ``condensation phase transition'' predicted by statistical physics considerations
	[Krzakala et al., PNAS~2007].

\noindent\medskip
\emph{Mathematics Subject Classification:} 05C80 (primary), 05C15 (secondary)
\end{abstract}

\section{Introduction}

Recent work on random constraint satisfaction problems has focused either on the case of binary variables and $k$-ary constraints ({e.g.,} random $k$-SAT)
or on the case of $k$-ary variables and binary constraints (e.g., random graph coloring) for some $k\geq3$.
In these two cases substantial progress has been made over the past few years.
For instance, the $k$-SAT threshold has been identified precisely for large enough $k$~\cite{DSS3}.
Moreover, in the random hypergraph $2$-coloring problem (or equivalently the $k$-NAESAT problem) the threshold is known up to an error term
that tends to $0$ rapidly in terms of the size $k$ of the edges~\cite{KostaNAE}.
In addition, the best current upper and lower bounds on the $k$-colorability threshold of the \Erdos-\Renyi\ random graph are within a small {additive} constant~\cite{Danny}.
By comparison, little is known about problems in which both the arity of the constraints and the domain of the variables have size greater than two.
Although it has been asserted that the techniques developed in recent work should carry over~\cite{Danny}, this claim has hardly been put to the test.

The present paper deals with one of the most natural examples of a problem with $\unif$-ary constraints and $\cols$-ary variables with $\cols,\unif\geq3$,
namely $\cols$-colorability of random $\unif$-uniform hypergraphs.
Let $[m]$ denote the set $\{1,\ldots, m\}$ for any positive integer $m$.
To be precise, by a $\cols$-coloring of $\fhyp=(V,E)$ we mean a map $\sigma:V\to[\cols]$ such that $|\sigma(e)|>1$ for all $e\in E$, i.e., no edge is monochromatic.
The chromatic number of {$\fhyp$} is the least $\cols$ for which a $\cols$-coloring exists.
The random hypergraph model that we consider is the most natural one, i.e., $\hyp\in\hyp(n,\unif,m)$ is a (simple) $\unif$-uniform hypergraph
on the vertex set $[n]:=\{1,2,3,\dots,n\}$ with a set of precisely $m$ edges chosen uniformly at random. 

For every $\cols\geq2,\unif\geq3$ there exists a (non-uniform) sharp threshold $c_{\cols,\unif}=c_{\cols,\unif}(n)$ for $\cols$-colorability~\cite{HatamiMolloy}.
That is, if $m=m(n)$ is a sequence such that for some fixed $\eps>0$ we have $m(n)<(1-\eps)nc_{\cols,\unif}(n)$, then $\hyp(n,\unif,m)$ is $\cols$-colorable \whp,
whereas {\whp}\ the random hypergraph fails to be $\cols$-colorable if $m(n)>(1+\eps)nc_{\cols,\unif}(n)$.
The best prior bounds on this threshold, obtained by Dyer, Frieze and Greenhill~{\cite[Remark~2.1,~(82)]{DFG}}, read
	\begin{align}\label{eqDFG}
	(\cols^{\unif-1}-1)\ln \cols-1-\eps_{\cols,\unif}
	\leq\liminf_{n\to\infty}\, c_{\cols,\unif}(n)\leq\limsup_{n\to\infty}c_{\cols,\unif}(n)\leq\bc{\cols^{\unif-1}-{1/2}}\ln \cols,
	\end{align}
where $\lim_{\cols\to\infty}\eps_{\cols,\unif}=0$ for any fixed $\unif\geq3$.
Thus, the upper and the lower bound differ by an additive $\frac12\ln \cols+1+\eps_{\cols,\unif}$, a term that diverges in the limit of large $\cols$.
The main result of this paper provides an improved lower bound that is 
within an additive $\ln2$ 
of the upper bound from~(\ref{eqDFG}),
{in the large-$q$ limit}.

\begin{theorem}\label{Thm_main}
For each $\unif\geq3$ {there is a number $q_0=q_0(k)>0$ such that for all $q>q_0$ we have}
	\begin{align*}
	\liminf_{n\to\infty} \, c_{\cols,\unif}(n)&\geq(\cols^{\unif-1}-1/2)\ln \cols-\ln2-{1.01\ln \cols/\cols}.
	\end{align*}
\end{theorem}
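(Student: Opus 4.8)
The plan is to prove that the random hypergraph $\hnc$ is $\cols$-colorable with high probability whenever the density $c$ is below the claimed bound $(\cols^{\unif-1}-1/2)\ln\cols-\ln2-1.01\ln\cols/\cols$, via the second moment method applied not to the raw number of colorings but to a carefully truncated partition function. Let $Z=Z(\hyp)$ count the proper $\cols$-colorings of $\hyp$. A plain first-and-second-moment computation gives only the weaker bound~\eqref{eqDFG}: the second moment is inflated by ``lopsided'' pairs of colorings whose overlap matrix deviates from the flat $\frac1\cols\matone$ profile, and this costs exactly the $\frac12\ln\cols+1$ gap. The key idea, following the strategy that achieved the analogous result for $k$-NAESAT~\cite{KostaNAE} and graph coloring~\cite{Danny}, is to restrict attention to \emph{\good} colorings — those that are, in an appropriate sense, well-balanced and locally rigid — so that the dominant contribution to the truncated second moment comes precisely from pairs with overlap close to $\frac1\cols\matone$.

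First I would set up the planted model $\plantedhyp$: draw $\SIGMA$ uniformly from $[\cols]^n$ and then draw $m=cn$ edges uniformly among the $\unif$-sets that are non-monochromatic under $\SIGMA$. Standard arguments (the ``planting trick'', quiet planting) show that establishing $Z>0$ \whp\ in $\hnc$ reduces to controlling $Z$, or rather $Z_{\mathrm{\good}}$, in the planted model, where the reference coloring $\SIGMA$ is available for free. Next I would define the \good\ event precisely: a coloring $\sigma$ is \good\ if every color class has size $(1+o(1))n/\cols$ \emph{and} the empirical distribution of messages/marginals produced by a bounded number of rounds of Belief Propagation from $\sigma$ is close to the Bethe fixed point — equivalently, $\sigma$ has few ``free'' or ``blocked'' variables. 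One then shows $\Erw[Z_{\mathrm{\good}}]\geq\exp(o(n))\,\Erw[Z]$, i.e. restricting to \good\ colorings loses only a subexponential factor in the first moment; this is where the precise constant $(\cols^{\unif-1}-1/2)\ln\cols$ — with the crucial $-1/2$ rather than $-1$ — enters, and it requires a delicate expansion of $\Erw[Z]=\cols^n(1-\cols^{1-\unif})^{cn}$ together with the local-structure analysis showing typical colorings are already essentially \good.

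The technical heart is the truncated second moment $\Erw[Z_{\mathrm{\good}}^2]$. Writing $\Erw[Z_{\mathrm{\good}}^2]=\sum_{\rho}\Erw[Z_{\rho,\mathrm{\good}}]$ where $\rho$ ranges over $\cols\times\cols$ doubly-stochastic overlap matrices (scaled), I would show that (i) for $\rho$ bounded away from $\frac1\cols\matone$ the \good ness constraint kills the contribution, so that $\sum_{\rho\text{ far}}\Erw[Z_{\rho,\mathrm{\good}}]=\exp(o(n))\Erw[Z_{\mathrm{\good}}]$; and (ii) near $\rho=\frac1\cols\matone$ a local Laplace-type expansion gives $\sum_{\rho\text{ near}}\Erw[Z_{\rho,\mathrm{\good}}]=O(\Erw[Z_{\mathrm{\good}}]^2)$, so that $\Erw[Z_{\mathrm{\good}}^2]=O(\Erw[Z_{\mathrm{\good}}]^2)$. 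The Paley–Zygmund inequality then yields $\pr[Z_{\mathrm{\good}}>0]=\Omega(1)$, and a sharp-threshold result (the threshold $c_{\cols,\unif}$ exists by~\cite{HatamiMolloy}) boosts this to \whp. The main obstacle I anticipate is step (i): ruling out the wild overlaps. The relevant entropy functional on overlap matrices has a saddle-type landscape, and near the ``boundary'' of the balanced region there is competition between the entropy of the overlap and the edge-probability factor; one must show that the \good ness penalty — quantitatively, a bound on how many \good\ colorings can have a prescribed non-flat overlap with a fixed \good\ coloring — beats this. This is exactly the place where the extra $\ln2$ slack (versus the information-theoretic/condensation threshold) gets spent, and where one needs the convexity/large-deviations estimates for the overlap distribution restricted to \good\ pairs, analogous to the ``$s$-stable'' analysis controlling $\Zrg$ and $\Dgs$ hinted at by the notation in the preamble.
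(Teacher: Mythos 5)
Your overall strategy is the same as the paper's: a second moment argument for a truncated count of ``well-behaved'' colorings, a planted-model computation for the first moment, a Laplace expansion near the flat overlap $\bar a$, and the Hatami--Molloy sharp threshold to boost a positive probability to \whp. (One framing remark: the planted model is only needed to compare $\Erw[\Zkg]$ with $\Erw[\Zkb]$; the second moment itself is computed in $\hnc$, not ``reduced to the planted model''.) The genuine gap is in your step (i). It is false that every overlap bounded away from $\bar a$ contributes only $\exp(o(n))\,\Erw[\Zkg]$: pairs of colorings lying in the \emph{same cluster} have overlap close to a diagonal-heavy matrix such as $a_{\mathrm{stable}}$, and their number is of order $\Zkg\cdot|\clusterH|$, where typical cluster sizes are $\exp(\Theta(n))$ (roughly $\exp\{n\ln 2\cdot\cols^{1-\unif}+n\cols^{-\unif}\ln\cols\}$). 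This region therefore contributes on the order of $\Erw[\Zkb]^2$, vastly more than $\exp(o(n))\Erw[\Zkg]$, and it cannot be ``killed'' by local rigidity or BP-marginal conditions alone: the paper must build an explicit cluster-size cap into the definition of tame (condition \textbf{T3}, $|\clusterH|\leq\Erw[\Zkb]$), which converts the same-cluster contribution into the bound $\cols!\,\Erw[\Zkb]^2$ (Lemma~\ref{secmomlaplace_2}). Moreover this is exactly where the quantitative form of the theorem comes from: requiring $\Erw[\Zkb]$ to dominate the typical cluster size is what forces $c\leq(\cols^{\unif-1}-1/2)\ln\cols-\ln2-1.01\ln\cols/\cols$, so the slack is spent on the same-cluster (condensation) region, not on the intermediate wild overlaps, which are handled by separability together with the constrained maximisation of $F$ over the $s$-stable faces (Lemma~\ref{Lemma_smmTame}) --- a substantial piece of work your outline only gestures at.

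A second, smaller but real, problem is the precision of your first-moment step. Knowing only that $\Erw[\Zkg]$ is within a factor $\exp(o(n))$ of $\Erw[\Zkc]$ is not enough. The Laplace expansion and the cluster cap give $\Erw[\Zkg^2]=O(\Erw[\Zkb]^2)$, and to turn this into $\Erw[\Zkg^2]=O(\Erw[\Zkg]^2)$ --- which you need so that Paley--Zygmund yields $\pr[\Zkg>0]=\Omega(1)$, the only regime in which the sharp-threshold theorem can be invoked --- you must have $\Erw[\Zkg]$ within a constant factor of $\Erw[\Zkb]$. This is the content of Proposition~\ref{Prop_first}, and proving it requires showing that the planted coloring fails to be tame with probability $o(n^{-1/2})$, so as to beat the $O(\sqrt n)$ loss in the planted-model transfer (Lemma~\ref{Lemma_planted}). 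With only an $\exp(-o(n))$-factor comparison, Paley--Zygmund gives $\pr[\Zkg>0]\geq\exp(-o(n))$, which is compatible with $\pr[\Zkg>0]\to0$ and hence yields nothing when combined with the sharp threshold.
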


The proof of \Thm~\ref{Thm_main} is based on the second moment method.
So is~\cite{DFG}, which generalises the second moment argument of Achlioptas and Naor~\cite{AchNaor} from graphs to hypergraphs.
The result of Achlioptas and Naor was recently improved by Coja-Oghlan and Vilenchik~\cite{Danny},
and in this paper we generalise the argument from that paper to hypergraphs.
While numerous details need adjusting, the basic proof strategy that we pursue is similar to the one suggested in~\cite{Danny}.
In particular, the improvement over~\cite{DFG} results from studying the second moment of a subtly chosen random variable.
While the random variable considered in~\cite{DFG} is just the number of (balanced) $\cols$-colorings of the random hypergraph,
here we use a random variable that is inspired by ideas from statistical mechanics; we will give a more detailed outline in \Sec~\ref{sec:outline} below.
Thus, the present paper shows that, indeed, with a fair number of careful modifications the method from~\cite{Danny} can be generalised to hypergraphs.

\paragraph{Notation.}\label{Apx_notation}
We assume throughout that the number of vertices, $n$, is sufficiently
large for our estimates to hold. We also assume that the number of
colors, $\cols$, exceeds some large enough constant $\cols_0=\cols_0(\unif)$.
But of course $\cols,\unif$ are always assumed to remained fixed as $n\to\infty$.

We use the $O$-notation to refer to the limit $n\rightarrow\infty$. 
For example, $f(n)=O(g(n))$ means that there exists some $C>0, n_0>0$ such that for all $n>n_0$ we have $|f(n)|\leq C\cdot|g(n)|$. 
In addition, $o(\cdot), \Omega(\cdot),\Theta(\cdot)$ take their usual definitions, {except that we assume the expression $\Omega(n)$ is positive (for
sufficiently large $n$) whenever we write $\exp(-\Omega(n))$}. We write $f(n)\sim g(n)$ if $\lim_{n\rightarrow\infty}f(n)/g(n)=1$.

When discussing estimates that hold in the limit of large $\cols$ we will make 
this explicit by adding the subscript $\cols$ to the asymptotic notation. 
Therefore, $f(\cols)=O_\cols(g(\cols))$ means that 
there exists positive constants $C$, $q_0$ such that for all $\cols>\cols_0$ we have $|f(\cols)|\leq C\cdot |g(\cols)|$. 
Furthermore, we will write $f(\cols)=\widetilde{O}_\cols(g(\cols))$ to indicate that there exists positive $C$, $\cols_0$ such that for all $\cols>\cols_0$ we have $|f(\cols)|\leq (\ln \cols)^C\cdot |g(\cols)|$.

\section{Related work}

The quest for the chromatic number of random graphs (i.e., $\hyp(n,2,m)$) goes back to the seminal 1960 paper of \Erdos\ and \Renyi\ in which they established
the ``giant component'' phase transition~\cite{ER}.
But it took almost thirty years until a celebrated paper of \Bollobas~\cite{BBColor} determined the asymptotic value of the chromatic number of dense random graphs.
His proof used martingale tail bounds, which were introduced to combinatorics by Shamir and Spencer~\cite{ShamirSpencer} to investigate
the concentration of the chromatic number.
Building upon ideas of Matula~\cite{Matula}, \Luczak~\cite{LuczakColor} determined the asymptotic value of the chromatic number
of the \Erdos-\Renyi\ random graph in the case that $m=m(n)$ satisfies $m/n\to\infty$.
However, the results from~\cite{BBColor,LuczakColor} only determine the chromatic number up to a {\em multiplicative} error of $1+o(1)$ as $n\to\infty$,
and the resulting error term exceeds the width within which the chromatic number is known to be concentrated.
Indeed, in the case that $m=m(n)\leq {n^{3/2-\Omega(1)}}$ it is known that the chromatic number of the random graph is concentrated on two subsequent integers~\cite{AlonKriv,Luczak}.
In the sparse case $m=O(n)$ the precise values of these two integers are implied by the current bounds on the
$\cols$-colorability threshold~\cite{AchNaor,Covers,Danny}.

The $2$-colorability problem in random hypergraphs, which is essentially equivalent to the random $k$-NAESAT problem, has also been studied.
Achlioptas and Moore~\cite{nae,AchMooreHyp2} showed that the $2$-colorability threshold can be approximated within a small additive constant
via the second moment method.
Furthermore, Coja-Oghlan and Zdeborov\'a~\cite{Lenka} established the existence of a further phase transition apart from the threshold for $2$-colorability,
	the ``condensation phase transition''.
The name derives from an intriguing connection to the statistical mechanics of glasses~\cite{Kauzmann48,pnas}.
Moreover, {the argument of }Coja-Oghlan and Panagiotou~\cite{KostaNAE} determines the $2$-colorability threshold in $k$-uniform random hypergraphs up to an additive error
term $\eps_k$ that tends to $0$ exponentially as a function of $k$.

Prior to the aforementioned work of Dyer, Frieze and Greenhill~\cite{DFG} the $\cols$-colorability problem in hypergraphs
was studied by Krivelevich and Sudakov~\cite{KrivelevichSudakov}, who also considered other possible notions of colorings.
Their results are of a similar nature to \Luczak's~\cite{LuczakColor} in the case of graphs.
That is, they determine the value of the chromatic number up to a multiplicative $1+o(1)$ {factor}, with $o(1)$ hiding a term that vanishes as $m/n\to\infty$.
The same is true of the results of Kupavskii and Shabanov~\cite{KS}, which partly improve upon~\cite{KrivelevichSudakov}.
{However}, the bounds on the $\cols$-colorability threshold that can be read out of~\cite{KrivelevichSudakov,KS} are less precise than those obtained in~\cite{DFG}
	(upon which \Thm~\ref{Thm_main} improves).

\section{Outline}\label{sec:outline}

{Throughout, we assume that $n$ is sufficiently large for our error
estimates to hold, and that $q  > q_0$. Further, we assume that {$m=\lceil cn\rceil$} and for ease of notation will often write $cn$ rather than $\lceil cn\rceil$.}
\paragraph{The second moment method.}
The second moment method has become the mainstay for lower-bounding satisfiability thresholds~\cite{nae,ANP,FriezeWormald}.

Suppose that we can construct a non-negative random variable {$Z$} 
{on $\hyp(n,k,cn)$} such that the event $Z(\fhyp)>0$ implies 
$\cols$-colorability, and such that 
	\begin{align}\label{eqsmm}
	\Erw[Z^2]=O(\Erw[Z]^2)\qquad\mbox{as }n\to\infty.
	\end{align}
Then the Paley-Zygmund inequality implies that
	\begin{align}\label{eqPZ}
	\liminf_{n\to\infty} \, \pr\brk{Z>0}&\geq\liminf_{n\to\infty} \, \frac{\Erw[Z]^2}{\Erw[Z^2]}>0.
	\end{align}
{Combining (\ref{eqPZ})}
with the sharp threshold result from~\cite{HatamiMolloy}, which establishes the existence of a sharp
threshold sequence $c_{\cols,\unif}(n)$,  
{yields}
	$\liminf_{n\to\infty} \, c_{\cols,\unif}(n)\geq c$.
Hence, the second moment method can be summarised as follows.

\begin{fact}\label{Fact_smm}
If there is a  {non-negative} random variable $Z$ {on $\hyp(n,k,cn)$}
such that $Z > 0$ implies $\cols$-colorability and (\ref{eqsmm}) holds, then
 $\liminf_{n\to\infty} \, c_{\cols,\unif}(n)\geq c$. 
\end{fact}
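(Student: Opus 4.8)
The plan is simply to combine the Paley--Zygmund inequality with the sharp threshold result of Hatami and Molloy, exactly as sketched in the paragraph preceding the statement. First I would recall the Paley--Zygmund inequality in the form $\pr[Z>0]\geq\Erw[Z]^2/\Erw[Z^2]$, which holds for any non-negative random variable $Z$ with $0<\Erw[Z^2]<\infty$ (on our finite probability space this is automatic unless $Z$ is identically zero, in which case hypothesis (\ref{eqsmm}) is degenerate and nothing is claimed). Feeding in (\ref{eqsmm}), which supplies a constant $C>0$ with $\Erw[Z^2]\leq C\,\Erw[Z]^2$ for all sufficiently large $n$, gives $\pr[Z>0]\geq 1/C$ for all large $n$, and hence $\delta:=\liminf_{n\to\infty}\pr[Z>0]\geq 1/C>0$. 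Since $\{Z>0\}$ is contained in the event that $\hyp(n,\unif,cn)$ is $\cols$-colorable, this yields
\begin{align*}
\liminf_{n\to\infty}\pr\brk{\hyp(n,\unif,cn)\text{ is }\cols\text{-colorable}}\geq\delta>0.
\end{align*}

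Next I would invoke the sharp threshold sequence $c_{\cols,\unif}(n)$ from~\cite{HatamiMolloy} and argue by contradiction. Suppose $c':=\liminf_{n\to\infty}c_{\cols,\unif}(n)<c$, and set $c'':=(c'+c)/2$, so that $c'<c''<c$. By definition of the $\liminf$ there is an infinite subsequence $(n_j)_j$ along which $c_{\cols,\unif}(n_j)<c''$. Choosing $\eps>0$ small enough that $(1+\eps)c''<c$, we get $\lceil cn_j\rceil\geq cn_j>(1+\eps)c''n_j>(1+\eps)n_j\,c_{\cols,\unif}(n_j)$ for every $j$. The sharp threshold statement of~\cite{HatamiMolloy} therefore applies and shows that $\hyp(n_j,\unif,cn_j)$ fails to be $\cols$-colorable \whp{} along $(n_j)_j$, i.e.\ $\pr[\hyp(n_j,\unif,cn_j)\text{ is }\cols\text{-colorable}]\to 0$ as $j\to\infty$. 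This contradicts the displayed bound, since a $\liminf$ over all $n$ cannot exceed a limit taken along a subsequence. Hence $\liminf_{n\to\infty}c_{\cols,\unif}(n)\geq c$, which is the assertion.

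There is essentially no obstacle here: the entire content is packaged into the two external inputs (Paley--Zygmund and the Hatami--Molloy sharp threshold), and the only point deserving a word of care is the subsequence argument in the last step, which relies on the threshold statement being phrased for \emph{every} edge density exceeding $(1+\eps)n\,c_{\cols,\unif}(n)$, not merely for the threshold itself; consequently no separate monotonicity of $\cols$-colorability in the number of edges has to be invoked. The substantive work of the paper lies entirely in \emph{constructing} a random variable $Z$ for which $Z>0$ forces $\cols$-colorability and (\ref{eqsmm}) holds at the claimed density $c=(\cols^{\unif-1}-1/2)\ln\cols-\ln2-1.01\ln\cols/\cols$, which is the subject of the later sections.
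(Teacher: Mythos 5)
Your proposal is correct and follows essentially the same route as the paper, which justifies the Fact in the paragraph preceding it via the Paley--Zygmund inequality (\ref{eqPZ}) combined with the sharp threshold result of Hatami and Molloy. The only difference is that you spell out the subsequence/contradiction step that the paper leaves implicit, which is a harmless and accurate elaboration.
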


\noindent
Thus, our task is to exhibit a random variable $Z$ {on $\hyp(n,k,cn)$} that satisfies (\ref{eqsmm}) for as large a value of $c$ as possible.

\paragraph{Balanced colorings.}
Certainly the most natural choice for $Z$ seems to be the number $\Zkc$ of $\cols$-colorings of the random hypergraph.
Clearly, $\Zkc\geq0$ and {$\Zkc(\fhyp)>0$} only if {$\fhyp$} is $\cols$-colorable.
However, technically $\Zkc$ is a bit unwieldy.
Therefore, following Achlioptas and Naor~\cite{AchNaor}, Dyer, Frieze and Greenhill~\cite{DFG} considered a 
slightly modified random variable.
Namely, let us call a map $\sigma:[n]\to[\cols]$ {\em balanced} if {$|\sigma^{-1}(j)-n/\cols|\leq \sqrt{n}$} for all 
$j\in[\cols]$ and let $\Zkb$ be the number of balanced $\cols$-colorings of $\hyp$.

\begin{lemma}[{\cite{DFG}}]\label{Lemma_DFGfirst}
For any $\cols,\unif\geq3$ and any $c>0$ we have
\begin{equation}
\label{perfect}  \Erw[\Zkb] = 
		\Theta\brk{\left( q\big(1-q^{1-k}\big)^c\right)^n}
\end{equation}
and
\begin{align}\label{eqLemma_DFGfirst}
\lim_{n\to\infty}\frac1n\ln\Erw[\Zkc]=
	\lim_{n\to\infty}\frac1n\ln\Erw[\Zkb] 
=\ln\cols+c\ln\bc{1-\cols^{1-\unif}}.
\end{align}
\end{lemma}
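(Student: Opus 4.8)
The plan is to evaluate $\Erw[\Zkb]$ (and to bound $\Erw[\Zkc]$ from above) directly by linearity of expectation. Write $N=\binom{n}{\unif}$ for the number of potential edges, and for a map $\sigma:[n]\to[\cols]$ with colour class sizes $n_1,\dots,n_\cols$ put $M_\sigma=\sum_{j=1}^{\cols}\binom{n_j}{\unif}$ for the number of monochromatic $\unif$-sets. Since the $m=cn$ edges of $\hnc$ are a uniformly random $m$-subset of the $N$ potential edges, the probability that $\sigma$ is a proper colouring equals $\binom{N-M_\sigma}{m}\big/\binom{N}{m}=\prod_{i=0}^{m-1}\bc{1-\tfrac{M_\sigma}{N-i}}$, whence
	\[
	\Erw[\Zkb]=\sum_{\sigma\text{ balanced}}\ \prod_{i=0}^{m-1}\bc{1-\frac{M_\sigma}{N-i}},\qquad
	\Erw[\Zkc]=\sum_{\sigma}\ \prod_{i=0}^{m-1}\bc{1-\frac{M_\sigma}{N-i}}.
	\]
So the proof reduces to estimating the inner product for fixed $\sigma$ and then summing the multinomial weights $\binom{n}{n_1,\dots,n_\cols}$.

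For the inner product, the key first observation is that $m=cn$ and $N=\Theta(n^{\unif})$ with $\unif\ge3$ give $m^2/N=o(1)$; taking logarithms and expanding then yields $\prod_{i=0}^{m-1}(1-\tfrac{M_\sigma}{N-i})=(1+o(1))(1-M_\sigma/N)^m$, uniformly over all $\sigma$ for which $M_\sigma/N$ stays bounded away from $1$. It remains to pin down $M_\sigma/N$. Convexity of $x\mapsto\binom{x}{\unif}$ gives $M_\sigma\ge\cols\binom{\lfloor n/\cols\rfloor}{\unif}$ for every $\sigma$, and $\cols\binom{\lfloor n/\cols\rfloor}{\unif}\big/\binom{n}{\unif}=\cols^{1-\unif}(1-O(1/n))$, so $M_\sigma/N\ge\cols^{1-\unif}(1-O(1/n))$ uniformly. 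For \emph{balanced} $\sigma$ a matching upper bound holds: writing $n_j=n/\cols+\delta_j$ with $|\delta_j|\le\sqrt n$ and expanding $\binom{n_j}{\unif}$ around $(n/\cols)^{\unif}/\unif!$, the first-order term $\propto\sum_j\delta_j$ vanishes, while the remaining corrections are of relative order $\sum_j\delta_j^2/n^2=O(1/n)$; hence $M_\sigma/N=\cols^{1-\unif}(1+O(1/n))$ for all balanced $\sigma$. Plugging back in, for balanced $\sigma$ one gets $(1-M_\sigma/N)^m=(1-\cols^{1-\unif})^m\exp(O(m/n))=\Theta\bc{(1-\cols^{1-\unif})^m}$ with hidden constants depending only on $c$, and for every $\sigma$ the one-sided bound $(1-M_\sigma/N)^m=O\bc{(1-\cols^{1-\unif})^m}$.

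For the multinomial sum, the multinomial theorem gives $\sum_\sigma\binom{n}{n_1,\dots,n_\cols}=\cols^n$ exactly, while a local limit theorem (or a direct Stirling estimate, using that a uniformly random $\sigma:[n]\to[\cols]$ is balanced with probability bounded away from $0$) yields $\sum_{\sigma\text{ balanced}}\binom{n}{n_1,\dots,n_\cols}=\Theta(\cols^n)$. Combining the two steps gives $\Erw[\Zkb]=\Theta(\cols^n)\cdot\Theta\bc{(1-\cols^{1-\unif})^{cn}}=\Theta\bc{\bc{\cols(1-\cols^{1-\unif})^c}^n}$, which is~(\ref{perfect}); and $\Erw[\Zkc]\le\cols^n\cdot O\bc{(1-\cols^{1-\unif})^{cn}}$, while $\Erw[\Zkc]\ge\Erw[\Zkb]$ is immediate. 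Taking logarithms, dividing by $n$ and letting $n\to\infty$ then gives~(\ref{eqLemma_DFGfirst}).

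The single delicate point is that the \emph{relative} error in $M_\sigma/N$ for balanced $\sigma$ has to be $O(1/n)$ rather than merely $O(1/\sqrt n)$: since the product has $m=cn$ factors, a relative error of order $1/\sqrt n$ would introduce a spurious factor $\exp(\Theta(\sqrt n))$ and wreck the $\Theta(\cdot)$ estimate in~(\ref{perfect}). This is precisely why the definition of ``balanced'' is used together with the identity $\sum_j\delta_j=0$, which cancels the first-order deviation and leaves an error governed by $\sum_j\delta_j^2\le\cols n$. The remaining ingredients --- the convexity bound, the logarithmic expansion of the product, and the multinomial/Stirling estimate --- are routine.
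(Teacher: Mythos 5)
Your proposal is correct and follows essentially the same route as the paper's proof: a uniform $\Theta\bigl((1-\cols^{1-\unif})^{cn}\bigr)$ estimate of the colouring probability for balanced $\sigma$, relying on the cancellation of the first-order deviation (since $\sum_j\delta_j=0$) so that the error is $O(1/n)$, combined with the $\Theta(\cols^n)$ count of balanced maps and a convexity/extremality bound showing the unrestricted $\Zkc$ gains at most a constant factor. The only difference is cosmetic: you expand the falling product $\prod_{i<m}(1-M_\sigma/(N-i))$ directly using $m^2/N=o(1)$, whereas the paper applies Stirling to the ratio of binomial coefficients and Taylor-expands $\ln(1-\sum_i\alpha_i^{\unif})$ around the uniform point.
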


\begin{proof}
Calculations similar to the following ones were performed in~\cite{DFG}; we repeat them here to keep the paper self-contained.
Given a balanced map $\sigma:[n]\to[q]$, let $\alpha_i=|\sigma^{-1}(i)|/n$ for $i\in [q]$ and define
$\alpha=(\alpha_1,\ldots,\alpha_q)$.
Stirling's formula yields
	\begin{align}\label{eqFirstMmt1}
	\pr\brk{\sigma\mbox{ is a proper $q$-coloring of $\hyp$}}&=
	\binom{\binom{n}{k}- \sum_{i=1}^q \binom{\alpha_in}{k}}{cn}\, \binom{\binom{n}{k}}{cn}^{-1}
		=\Theta\bc{\exp\brk{cn\, \ln\bc{1-\sum_{i\in [q]}\alpha_i^k}}},
	\end{align}
cf.\ \cite[equation (8)]{DFG}.
Let $\bar\alpha=(1/q,\ldots,1/q)$ denote the uniform distribution on $[q]$.
The gradient of the function 
	$f:(x_1,\ldots,x_k)\mapsto\ln\bc{1-\sum_{i\in [q]} x_i^k}$
at the point $\bar\alpha$ is simply the vector 
	$\nabla f(\bar\alpha)$ with every entry equal to $k(1-q^{k-1})^{-1}$.
Consequently, because $\sum_{i\in [q]} (\alpha_i-1/q)=0$, expanding $f$ to the second order around $\bar\alpha$ yields
	\begin{align}\label{eqFirstMmt2}
	\ln\bc{1-\sum_{i\in [q]}} \alpha_i^k &=f(\alpha)=f(\bar\alpha)+\nabla f(\bar\alpha)(\alpha-\bar\alpha)+O(\|\alpha-\bar\alpha\|_2^2)
		=\ln(1-q^{1-k})+O(\|\alpha-\bar\alpha\|_2^2).
	\end{align}
Since $\sigma$ is balanced, we have the bound $\|\alpha-\bar\alpha\|_2^2=O(1/n)$.
Therefore, combining (\ref{eqFirstMmt1}) and (\ref{eqFirstMmt2}), we obtain
	\begin{align}\label{eqFirstMmt3}
	\pr\brk{\sigma\mbox{ is a proper $q$-coloring of $\hyp$}}&=\Theta{\bc{(1-q^{1-k})^{cn}}},
	\end{align}
uniformly for all balanced $\sigma$.
Finally, the number of balanced maps corresponding to a given $\alpha$ is 
$\binom{n}{\alpha_1 n,\ldots, \alpha_q n} = \Theta(n^{(1-q)/2})\, q^n$, by Stirling's formula, 
and the number of choices for the vector $\alpha$ is $\Theta(n^{(q-1)/2})$.
Hence the total number of balanced maps $\sigma$ is $\Theta(\cols^n)$.
Combining this with (\ref{eqFirstMmt3}) implies (\ref{perfect}).

Next, as observed in the proof of~\cite[Lemma 2.1]{DFG}, the probability that a map $\sigma:[n]\to [q]$
is a $q$-colouring of  $\hyp$ is maximised when $\sigma$ is perfectly balanced,
and this probability equals $O(1)\, \big(1 - q^{1-k})^{cn}$. (Here the $O(1)$ factor is
needed only when $q$ does not divide $n$.)
Hence, by linearity of expectation, 
\[ \Erw[\Zkb]  \leq \Erw[\Zkc] = O(1)\, \left( q(1-q^{1-k})^c\right)^n,\]
which differs from (\ref{perfect}) by at most a constant factor. This implies (\ref{eqLemma_DFGfirst}),
completing the proof.
\end{proof}

It is easily verified that the r.h.s.\ of (\ref{eqLemma_DFGfirst}) is positive if $c<(\cols^{\unif-1}-\frac12)\ln\cols-\ln2$.
Hence, for such $c$, both $\Erw[\Zkc]$ and  $\Erw[\Zkb]$ are exponential in $n$.
They differ only in their sub-exponential terms.
Consequently, we do not give anything away by confining ourselves to balanced colorings only.
In the following we will see why neither $\Zkc$ nor $\Zkb$ is a good random variable to work with and why neither can be used to prove \Thm~\ref{Thm_main}.
What we learn will guide us towards constructing a better random variable.

While working out the first moment of $\Zkb$ (i.e., the proof of \Lem~\ref{Lemma_DFGfirst}) is pretty straightforward,
getting a handle on the second moment is not quite so easy.
Of course, the second moment of $\Zkb$ is nothing but the expected number 
of {\em pairs} of balanced $\cols$-colorings.
Moreover, the probability that two maps $\sigma,\tau:\brk n\to\brk\cols$ simultaneously happen to be $\cols$-colorings of $\hyp$
will depend on how ``similar'' $\sigma,\tau$ are.
{To gauge similarity, define the} {\em overlap} of $\sigma,\tau$ as the $\cols\times\cols$-matrix $a(\sigma,\tau)=(a_{ij}(\sigma,\tau))_{i,j\in[\cols]}$ with entries
	\begin{align*}
	a_{ij}(\sigma,\tau)&=n^{-1}|\sigma^{-1}(i)\cap\tau^{-1}(j)|.
	\end{align*}
In words, $a_{ij}(\sigma,\tau)$ is the probability that a random vertex $v\in[n]$ has color $i$ under $\sigma$ and color $j$ under $\tau$.
Then we can cast the second moment in terms of the overlap as follows.
Let $\cR=\cR_{n,\cols}$ be the set of all overlaps $a(\sigma,\tau)$ of balanced $\sigma,\tau:\brk n\to\brk\cols$.
Though the results of the next lemma can be found in~\cite{DFG}, for completeness we provide a brief proof here.

\begin{lemma}[{\cite{DFG}}]\label{Lemma_DFGsecond}
Let $\norm a_{\unif}=\brk{\sum_{i,j\in[\cols]}a_{ij}^{\unif}}^{1/{\unif}}$ be the $\ell_\unif$-norm and define
	\begin{align*}
	H(a)&=-\sum_{i,j\in[\cols]}a_{ij}\ln a_{ij},&
	E(a)&=E_{\cols,c,\unif}(a)=c\ln\brk{1-2\cols^{1-\unif}+\norm a_{\unif}^{\unif}}.
	\end{align*}
Let $F(a)=H(a)+E(a)$ and suppose that $a^\ast\in \mathcal{R}$ satisfies $F(a^\ast) = \max_{a\in\mathcal{R}} F(a)$.
Then
	\begin{equation}\label{eqLemma_DFGsecond}
  \Erw[\Zkb^2] = \exp\brk{n F(a^\ast) + o(n)}.
	\end{equation}
Next, let 
$\xi$ be a positive constant and suppose that $\mathcal{A}\subseteq \mathcal{R}$ has the
following property: $a_{ij}\geq \xi $ for all $a\in \mathcal{A}$ and all $i,j\in [q]$. 
Then
\begin{equation}
\label{eqLemma_DFGsecond_precise}
	\Erw[\Zkb^2\cdot \textbf{1}_{\mathcal{A}} ] = \Theta(n^{(1-q^2)/2})\, \sum_{a\in\mathcal{A}} \exp\brk{n F(a)}.
\end{equation}
\end{lemma}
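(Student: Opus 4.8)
The plan is to compute $\Erw[\Zkb^2]$ by writing it as a sum over overlap matrices $a\in\cR$ and identifying the exponential order term by term. First I would use linearity of expectation to write
\[
\Erw[\Zkb^2]=\sum_{\sigma,\tau\text{ balanced}}\pr\brk{\sigma,\tau\text{ are both proper }\cols\text{-colorings of }\hyp},
\]
and then group the pairs $(\sigma,\tau)$ according to their overlap matrix $a=a(\sigma,\tau)$. The number of pairs with a given overlap $a\in\cR$ is the multinomial coefficient $\binom{n}{(a_{ij}n)_{i,j\in[\cols]}}$, which by Stirling's formula equals $\exp\brk{nH(a)+O(\ln n)}$, with the $O(\ln n)$ error uniform over $a\in\cR$ (here one uses that all nonzero entries $a_{ij}$ are $\Omega(1/n)$, or more simply that there are only polynomially many overlap matrices so a crude bound suffices). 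For the probability, the key observation is that an edge is bad for the pair $(\sigma,\tau)$ iff it is monochromatic under $\sigma$ \emph{or} monochromatic under $\tau$; by inclusion–exclusion the fraction of $\unif$-subsets that are good for both equals $1-\sum_i\alpha_i^\unif-\sum_j\beta_j^\unif+\sum_{i,j}a_{ij}^\unif$, where $\alpha,\beta$ are the marginals of $a$. Since $\sigma,\tau$ are balanced, $\alpha_i,\beta_j=1/\cols+O(n^{-1/2})$, so $\sum_i\alpha_i^\unif=\cols^{1-\unif}+O(n^{-1/2})$ and likewise for $\beta$; hence this fraction is $1-2\cols^{1-\unif}+\norm a_\unif^\unif+O(n^{-1/2})$. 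Exactly as in the first-moment calculation (cf.\ (\ref{eqFirstMmt1})), the probability that all $cn$ edges avoid being monochromatic under either map is then
\[
\Theta\bc{\exp\brk{cn\ln\bc{1-2\cols^{1-\unif}+\norm a_\unif^\unif+O(n^{-1/2})}}}=\exp\brk{nE(a)+o(n)},
\]
uniformly over balanced pairs. Multiplying the two contributions gives $\exp\brk{nF(a)+o(n)}$ per overlap class, and summing over the polynomially many $a\in\cR$ yields (\ref{eqLemma_DFGsecond}) after absorbing the polynomial count into the $o(n)$ and taking the dominant term $F(a^\ast)$.

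For the refined statement (\ref{eqLemma_DFGsecond_precise}), I would redo the same computation but keep track of the sub-exponential factors, which is where the hypothesis $a_{ij}\geq\xi$ for all $a\in\cA$ is used: when every entry is bounded below by a constant, Stirling's formula gives the sharper asymptotic $\binom{n}{(a_{ij}n)_{ij}}=\Theta(n^{(1-\cols^2)/2})\exp\brk{nH(a)}$, with the $\Theta$ uniform over $a\in\cA$ (all entries being $\Theta(n)$, the usual $\sqrt{2\pi a_{ij}n}$ denominators contribute a clean $n^{(1-\cols^2)/2}$ and the constant is bounded away from $0$ and $\infty$ since $a_{ij}\in[\xi,1]$). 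Similarly the probability that a balanced pair of given overlap is simultaneously proper is $\Theta(1)\cdot(1-2\cols^{1-\unif}+\norm a_\unif^\unif)^{cn}=\Theta(1)\exp\brk{nE(a)}$, uniformly over $\cA$, since the argument of the logarithm stays bounded away from $0$ and $1$ (it is at least $\norm a_\unif^\unif\geq\cols^2\xi^\unif>0$ and at most $1$). Multiplying, summing over $a\in\cA$, and pulling the common $\Theta(n^{(1-\cols^2)/2})$ out front gives (\ref{eqLemma_DFGsecond_precise}).

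I expect the main obstacle to be not any single estimate but the bookkeeping needed to make the error terms genuinely uniform. In particular, the second-order expansion controlling the marginals of a balanced overlap (the analogue of (\ref{eqFirstMmt2})) must be applied simultaneously to both $\alpha$ and $\beta$, and one must check that the $O(n^{-1/2})$ perturbation inside the logarithm, once raised to the power $cn$, contributes only a $\exp(O(\sqrt n))=o(\exp(\eps n))$ factor — acceptable for (\ref{eqLemma_DFGsecond}) but it means (\ref{eqLemma_DFGsecond_precise}) really does require the lower bound $\xi$ on the entries, since without it the argument of the logarithm could approach $1-2\cols^{1-\unif}$ and the $\Theta(1)$ in front of $(1-2\cols^{1-\unif}+\norm a_\unif^\unif)^{cn}$ would not be uniform. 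Since both identities are asserted to be already in~\cite{DFG}, the write-up can afford to be brief, citing the first-moment computation in the proof of \Lem~\ref{Lemma_DFGfirst} for the edge-probability step and invoking Stirling's formula with explicit constants for the counting step.
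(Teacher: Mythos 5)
Your overall route is the same as the paper's: decompose $\Erw[\Zkb^2]$ over overlap matrices, count pairs with a given overlap by the multinomial coefficient via Stirling, and compute the per-edge probability by inclusion--exclusion; for (\ref{eqLemma_DFGsecond}) this works and your write-up is essentially the paper's argument. However, there is a genuine gap in your treatment of the refined statement (\ref{eqLemma_DFGsecond_precise}). You only establish the per-edge bad probability as $2\cols^{1-\unif}-\norm a_\unif^\unif+O(n^{-1/2})$, because you expand the marginals only to first order ($\alpha_i,\beta_j=1/\cols+O(n^{-1/2})$, hence $\sum_i\alpha_i^\unif=\cols^{1-\unif}+O(n^{-1/2})$). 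Raised to the power $cn$ this leaves an uncontrolled factor $\exp(\Theta(\sqrt n))$, which is harmless for the $e^{o(n)}$ accuracy of (\ref{eqLemma_DFGsecond}) but is fatal for (\ref{eqLemma_DFGsecond_precise}), where the claimed answer is sharp up to a $\Theta(n^{(1-\cols^2)/2})$ factor. The paper removes this error by the second-order expansion you allude to but do not carry out: since the linear term in the expansion of $\sum_i\bc{\sum_j a_{ij}}^\unif$ around the balanced point is proportional to $1-\sum_{i,j}a_{ij}=0$, and balancedness makes the quadratic correction $O(1/n)$, the marginal sums equal $\cols^{1-\unif}+O(1/n)$; hence the per-edge probability is $2\cols^{1-\unif}-\norm a_\unif^\unif+O(1/n)$ and the $cn$-fold product is genuinely $\Theta(1)\exp\brk{nE(a)}$, uniformly over \emph{all} balanced pairs.

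Relatedly, you misattribute the role of the hypothesis $a_{ij}\geq\xi$: it has nothing to do with the probability term (whose logarithm's argument is bounded below by $1-2\cols^{1-\unif}>0$ in any case, and whose uniformity is fixed by the second-order expansion above). In the actual proof the lower bound $\xi$ is used only on the counting side, to make the Stirling correction $\prod_{i,j}(a_{ij}\vee 1/n)^{-1/2}$ a $\Theta(1)$ factor, so that the multinomial coefficient equals $\Theta(n^{(1-\cols^2)/2})\exp\brk{nH(a)}$ uniformly over $\cA$; without the entry lower bound one only gets the one-sided bounds $\Omega(n^{(1-\cols^2)/2})\exp\brk{nH(a)}$ and $O(1)\exp\brk{nH(a)}$, which is why (\ref{eqLemma_DFGsecond}) carries an $e^{o(n)}$ slack while (\ref{eqLemma_DFGsecond_precise}) does not. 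Once you (i) replace your $O(n^{-1/2})$ edge estimate by the $O(1/n)$ estimate via the cancellation of the linear term, and (ii) reassign the role of $\xi$ to the multinomial asymptotics as just described, your argument matches the paper's proof.
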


\begin{proof}
First, observe that for a given $a\in\cR$, the number of $\sigma,\tau$ with overlap $a$ is given by 
the multinomial coefficient
\[\binom{n}{ a_{11}n, a_{12}n,\ldots, a_{qq}n}.
\] 
Next, fix balanced maps $\sigma,\tau$ with overlap $a$. 
By inclusion-exclusion,
the probability that a random edge chosen uniformly out of all 
$\binom{n}{\unif}$ possible edges is monochromatic under either $\sigma$ or $\tau$ 
equals
	\begin{align}
\label{eqFirstMmt10}
	\binom nk^{-1}&\brk{\sum_{i\in [q]}\,
   \brk{\binom{\sum_{j\in [q]} a_{ij}n}k+\binom{\sum_{j\in [q]}a_{ji}n}k}-\sum_{i,j\in [q]}\binom{a_{ij}n}k} \nonumber \\
		&=\sum_{i\in [q]}\brk{\bc{\sum_{j\in [q]} a_{ij}}^k+\bc{\sum_{j\in [q]} a_{ji}}^k}-\sum_{i,j\in [q]}a_{ij}^k+O\bc{\frac1n}.
	\end{align}
To simplify this expression we observe that since $\sigma,\tau$ are balanced,
	\begin{align*}
	\sum_{i\in [q]}\bc{\sum_{j\in [q]} a_{ij}}^k
   &=\sum_{i\in [q]}\bc{q^{-1}-\left(q^{-1}-\sum_{j\in [q]} a_{ij}\right)}^k\\
		&= q^{1-k}-kq^{1-k}\bc{1-\sum_{i,j\in [q]} a_{ij}}
			+O\bc{\sum_{i\in [q]}\bc{\frac1q-
   \sum_{j\in [q]} a_{ij}}^2}\\
		&=q^{1-k}+O(1/n),
	\end{align*}
because $\sum_{i,j\in [q]}a_{ij}=1$ and 
$|q^{-1}-\sum_{j\in [q]} a_{ij}|=O(n^{-1/2})$.
Of course, the same steps apply to $\sum_{i\in [q]}\bc{\sum_{j\in [q]} a_{ji}}^k$.
Hence, since $\sigma$ and $\tau$ are balanced, (\ref{eqFirstMmt10})
can be written as
	\begin{align*}
		2q^{1-k}-\sum_{i,j=1}^qa_{ij}^k+O\bc{\frac1n}.
	\end{align*}
Therefore
\begin{equation}
\label{Esum}
 \Erw[\Zkb^2] \sim \sum_{a\in\mathcal{R}}\, \binom{n}{a_{11}n,\, a_{12}n,\ldots, a_{qq}n}\,
   \exp(n E(a) + O(1)).
\end{equation}
Let $b\vee 1$ denote $\max\{ b,1\}$. We give upper and lower bounds on the multinomial
coefficient by applying Stirling's formula in the form 
\[ b! = \sqrt{2\pi (b\vee 1)}\, \left(\frac{b}{e}\right)^b\, 
   \left[ 1 + O\left(\frac{1}{b+1}\right)\right],\]
which holds for all nonnegative integers $b$.  This gives
\begin{equation}
  \binom{n}{ a_{11}n, a_{12}n,\ldots, a_{qq}n} \sim (2\pi n)^{(1-q^2)/2}\, \exp\brk{n H(a)}\, 
  \prod_{i,j\in [q]} (a_{ij}\vee 1/n)^{-1/2}.\label{intermediate}
\end{equation}
Since $1/n\leq a_{ij}\vee 1/n \leq 2/q$ for all $i,j\in [q]$, and since each row and column sum equals
$1/q + o(1)$, the product over $i,j\in [q]$ in (\ref{intermediate}) is always bounded below by a constant
and (easily) bounded above by $O(n^{(q^2-1)/2})$. Therefore
\[ \Omega(n^{(1-q^2)/2})\, \exp\brk{n H(a)} \leq \binom{n}{a_{11}n, a_{12}n,\ldots, a_{qq}n} 
     = O(1)\, \exp\brk{n H(a)}.\]
Combining the above leads to 
\[ \Omega(n^{(1-q^2)/2})\, \sum_{a\in\mathcal{R}} \exp\brk{n F(a)} \leq \Erw[\Zkb^2]
   \leq O(1)\, \sum_{a\in\mathcal{R}} \exp\brk{n F(a)}.
\]
Taking just the term corresponding to $a=a^\ast$ in the lower bound gives
the lower bound of
(\ref{eqLemma_DFGsecond}), and the upper bound follows using the fact that 
$|\cR|\leq n^{\cols^2}$. 

Next, observe that 
if $a\in\mathcal{A}$ then $\prod_{i,j\in [q]} (a_{ij}\vee 1/n)^{-1/2} = \Theta(1)$.
Substituting this into (\ref{intermediate}) and restricting the sum in
(\ref{Esum}) to $\mathcal{A}$
completes the proof of (\ref{eqLemma_DFGsecond_precise}). 
\end{proof}

Let $\cD\subseteq \mathbb{R}^{q^2}$ be the polytope comprising of all $a=(a_{ij})_{i,j\in\cols}$ such that
	\begin{align*}
	\sum_{j\in\brk\cols}a_{ij}&=\sum_{j\in\brk\cols}a_{ji}=1/\cols\quad\mbox{for all }i\in\brk\cols,
	&a_{ij}&\geq0\quad\mbox{for all }i,j\in[\cols].
	\end{align*}
Then $\cD$ is the Birkhoff polytope, scaled by a constant factor, and
$\cR\cap\cD$ is dense in $\cD$ as $n\to\infty$.
Therefore, (\ref{eqLemma_DFGsecond}) yields
	$$\lim_{n\to\infty}\frac1n\ln\Erw[\Zkb^2]=\max_{a\in\cD}F(a).$$
Further, evaluating the function $F(a)$ from \Lem~\ref{Lemma_DFGsecond} at the ``flat'' overlap $\bar a=(\bar a_{ij})$ with
$\bar a_{ij}=\cols^{-2}$ for all $i,j\in[\cols]$, we find
\begin{equation} 
\label{Fbara}
 F(\bar a)=2\brk{\ln\cols+c\ln(1-\cols^{1-\unif})}.
\end{equation}
This term { is precisely twice}  
the exponential order of the first moment from (\ref{eqLemma_DFGfirst}).
Consequently, the second moment bound $\Erw[\Zkb^2]=O(\Erw[\Zkb]^2)$ can hold {\em only if}
	\begin{equation}\label{eqsmmMax}
	F(\bar a)=\max_{a\in\cD}F(a).
	\end{equation}
In fact, the Laplace method applied along the lines of~\cite[Theorem 2.3]{GJR} shows that the condition (\ref{eqsmmMax}) is both necessary and sufficient for
the success of the second moment method.
In summary, the second moment argument reduces to the analytic problem of maximising the function $F$ over the polytope $\cD$. 

\paragraph{A relaxation.}
This maximisation problem is anything but straightforward.
Following~\cite{AchNaor}, Dyer, Frieze and Greenhill~\cite{DFG} consider a relaxation.
Namely, instead of optimising $F$ over $\cD$, they consider the (substantially) bigger domain $\cS$ of all
$a=(a_{ij})_{i,j\in[\cols]}$ such that
	$\sum_{j=1}^\cols a_{ij}=1/\cols$ for all $i\in\brk\cols$ and
	$a_{ij}\geq0$ for all $i,j\in[\cols]$,
dropping the constraint that the ``column sums'' $\sum_j a_{ji}$  equal $1/q$. 
Note that $\mathcal{S}$ is the set of singly (row) stochastic matrices, scaled by a constant factor.
Clearly,
	$\max_{a\in\cD}F(a)\leq\max_{a\in\cS}F(a)$.
Furthermore, Dyer, Frieze and Greenhill solve the latter maximisation problem precisely
by generalising the techniques from~\cite{AchNaor}, 
{requiring rather lengthy technical arguments}.
The result is that for $c$ up to the lower bound in (\ref{eqDFG}) we indeed have $\max_{a\in\cS}F(a)=F(\bar a)$.

But this method does not work up to the density promised by \Thm~\ref{Thm_main}.
	There are two obstacles.
	First, not far beyond the lower bound in~(\ref{eqDFG}) the maximum of $F$ over $\cS$ is attained at a point $ a'\in\cS\setminus\cD$, i.e.,
	 $F( a')>F(\bar a)$.
	Thus, relaxing $\cD$ to the larger domain $\cS$ gives too much away.
	Second, there exists a constant $\gamma>\ln2$ such that for $c=(q^{k-1}-1/2)\ln q-\gamma$, the value of $F$ attained	at 
	\[ a_{\mathrm{stable}}=(\cols^{-1}-\cols^{-\unif})\, \id+\cols^{-\unif}(\cols-1)^{-1} \big(q^2\bar a-\id\big) \in\cD\]
is strictly greater than $F(\bar a)$.
(Note that every entry of the matrix $q^2\bar a$ equals 1.)
	Consequently, even if we could solve the analytic problem of maximising $F$ over the actual domain $\cD$
	 it would be insufficient to prove \Thm~\ref{Thm_main}.

\paragraph{Tame colorings.}
The above discussion shows that it is impossible to prove \Thm~\ref{Thm_main} via the second moment method applied to $\Zkb$.
A similar problem occurs in the case of random graphs ($\unif=2$), see ~\cite{Danny}.
To remedy this problem in the hypergraph case we will generalise the strategy from~\cite{Danny}.

The key idea is to introduce a random variable {$\Zkg$}
that takes the 
typical geometry of the set $\COLS$ of all balanced
$\cols$-colorings of {$\hyp\in\hnc$} 
into account, {such that $0\leq\Zkg\leq\Zkb$}. 
According to predictions based on non-rigorous physics considerations~\cite{pnas}, the set
$\COLS$ 
has a geometry that is very different from that of a random subset  of the cube $[\cols]^{\brk{n}}$ of the same size.
More precisely, {for almost all $k$-uniform hypergraphs $\fhyp$ with $cn$
edges}, the set $\COLSH$ 
 decomposes into well-separated ``clusters'' which each 
contains an exponential number of colorings.
However, the fraction of colorings that any single cluster contains is only an exponentially small fraction of the total number of $\cols$-colorings {of $\fhyp$}.
Furthermore, while it is possible to walk inside the set $\COLSH$ from any coloring 
to any other colouring in the same cluster by only changing the {colors}
of $O(\ln n)$ vertices at a time, it is impossible to get from one cluster to another without changing the colors of $\Omega(n)$ vertices in a single step.
Now, the basic idea is to let
$\Zkg=\Zkb\cdot\vecone\{\cT\}$, where $\cT$ is the event that the geometry of the set $\COLSH$ has the aforementioned properties.

To make this rigorous, we define the {\em cluster} of a $\cols$-coloring $\sigma$ of a hypergraph {$\fhyp$} as the set
	\begin{align*}
	\clusterH &=\cbc{\tau\in\COLSH\, :\, \min_{i\in\brk\cols}{a_{ii}(\sigma,\tau)}>\cols^{-1}\myconst}.
	\end{align*}
In words, $\clusterH $ contains all balanced $\cols$-colorings $\tau$ { of $\fhyp$} where, for each color $i$, at least a $\myconst$ fraction of all
vertices colored $i$ under $\sigma$ retain color $i$ under $\tau$.
Call a $\cols$-coloring $\sigma$ of $\fhyp$ {\em separable} if 
	\begin{align}\label{eqsep}
	\forall\tau\in\COLSH,\, \, \forall i,j\in\brk\cols,\,\,  a_{ij}(\sigma,\tau)\not\in(\cols^{-1}\myconst,\cols^{-1}(1-\kappa))\qquad\mbox{where }\kappa=\cols^{1-\unif}\ln^{20}\cols.
	\end{align}

\begin{definition}\label{Def_tame}
A $\cols$-coloring $\sigma$ of the {(fixed)} hypergraph {$\fhyp$} is 
{\em tame} if
	\begin{align*}
		\text{\textbf{T1:}\, $\sigma$ is balanced,\hspace{1cm}
			\textbf{T2:}\, $\sigma$ is separable, \hspace{1cm}
			\textbf{T3:}\, $|\clusterH|\leq
				\Erw[\Zkb].$}
	\end{align*}
\end{definition}
\noindent
\Def~\ref{Def_tame} generalises the concept of ``tame graph colorings'' 
from~\cite[\Def~2.3]{Danny}.

The set of tame colorings of {a given hypergraph $\fhyp$} decomposes into well-separated clusters.
Indeed, the separability condition ensures that the clusters of two tame colorings $\sigma,\tau$ {of $\fhyp$} are either disjoint or identical.
Furthermore, {\bf T3} ensures that no cluster size exceeds the expected number of balanced colorings, i.e., the clusters are ``small''. This will allow us
to control the contribution to the second moment from colourings which lie
in the same cluster (see Lemma~\ref{secmomlaplace_2}).
Furthermore, if $\sigma,\tau$ are tame colorings
then the overlap $a(\sigma,\tau)$ cannot
equal the matrix $a_{\mathrm{stable}}$ defined above, as this matrix
fails {\bf T2}.  So, restricting attention to tame colourings
excludes the matrix $a_{\mathrm{stable}}$.

Let $\Zkg$ be the number of tame $\cols$-colorings {of $\hyp(n,k,cn)$}.
With the right random variable in place, our task boils down to calculating the first and the second moment.
In \Sec~\ref{Sec_first} we will prove that the first moment of $\Zkg$ is asymptotically equal to the first moment of $\Zkb$. {For the following two propositions we assume that }
\begin{align*}
{(\cols^{\unif-1}-1/2)\ln\cols-2\leq c\leq (\cols^{\unif-1}-1/2)\ln\cols-\ln2-{{1.01\ln \cols}/{\cols}}.}
\end{align*}
That is, we consider values of $c$ which lie between the standard 
second-moment lower bound (on the $q$-colorability threshold $c_{q,k}$) and the one we prove here.

\begin{proposition}\label{Prop_first}
There is a number $\cols_0>0$ such that for all $\cols>\cols_0$ 
we have
$\Erw[\Zkg]\sim\Erw[\Zkb].$
\end{proposition}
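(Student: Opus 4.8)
The plan is to show that the expected number of tame colorings is asymptotically equal to $\Erw[\Zkb]$ by proving that a uniformly random balanced coloring of $\hnc$ is tame with probability $1-o(1)$, in fact with probability $1-\exp(-\Omega(n))$ after reweighting by the planting trick. Since $\Zkg\le\Zkb$ trivially, it suffices to prove $\Erw[\Zkb-\Zkg]=o(\Erw[\Zkb])$, i.e.
\begin{align*}
\sum_{\sigma\text{ balanced}}\pr[\sigma\text{ colors }\hnc,\ \sigma\text{ not tame}] = o(\Erw[\Zkb]).
\end{align*}
By Lemma~\ref{Lemma_DFGfirst} each balanced $\sigma$ that colors $\hnc$ contributes $\Theta((1-q^{1-k})^{cn})$, so the sum over all balanced $\sigma$ contributes $\Erw[\Zkb]$; thus the point is to bound, for a \emph{planted} coloring $\sigma$ (i.e.\ under the distribution $\plantedhyp$ where a balanced $\sigma$ is fixed and $cn$ random non-monochromatic edges are added), the probability that $\sigma$ violates one of \textbf{T1}, \textbf{T2}, \textbf{T3}. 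The standard equivalence between the planted model and the uniform model conditioned on $\sigma$ being a coloring (``the planting trick'', cf.\ the first-moment computation) reduces everything to estimates in $\plantedhyp$.

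The three conditions are handled separately. \textbf{T1} (balancedness) holds by construction of the sum. For \textbf{T2} (separability): fix $i,j$ and a ``forbidden'' value $x\in(q^{-1}\myconst,q^{-1}(1-\kappa))$ for $a_{ij}(\sigma,\tau)$; we must show that w.h.p.\ (in fact with probability $1-\exp(-\Omega(n))$) no balanced coloring $\tau$ of the planted hypergraph has an overlap with $\sigma$ in this window. This is a first-moment/union-bound argument over $\tau$: the expected number of colorings $\tau$ at a prescribed overlap $a$ with the planted $\sigma$ is, by a computation parallel to Lemma~\ref{Lemma_DFGsecond} but in the planted model, of order $\exp[n(F(a)-F(\bar a)+o(1))]$ (the shift by $F(\bar a)$ coming from conditioning on $\sigma$). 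So separability follows once one shows $F(a)<F(\bar a)$ — with room to spare, an $\Omega(n)$ gap — for every $a\in\cD$ whose $(i,j)$ entry lies in the separability window. This is the analytic heart of the matter: one must analyze the function $F$ on the relevant slice of the Birkhoff polytope and rule out the ``partially aligned'' overlaps. I expect this to reuse the region analysis already needed for the second-moment argument (the same $F$, the same polytope $\cD$), localizing the maximum of $F$ to a small ball around $\bar a$ together with the isolated near-diagonal cluster region, and the separability window is precisely designed to fall in the ``valley'' between these. For \textbf{T3} (cluster size): the planted cluster $\cluster$ consists of $\tau$ with $\min_i a_{ii}(\sigma,\tau)>q^{-1}\myconst$; by the same planted first-moment bound, $\Erw[|\cluster|]$ in $\plantedhyp$ is at most $\exp[n\max\{F(a)-F(\bar a):a\in\cD,\ \min_i a_{ii}\ge q^{-1}\myconst\}+o(n)]$, and one shows this maximum is $\le 0$, hence $\Erw[|\cluster|]\le\exp(o(n))\le\Erw[\Zkb]$ (which is exponentially large); Markov's inequality then gives $\pr[|\cluster|>\Erw[\Zkb]]=\exp(-\Omega(n))$.

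Assembling: by a union bound over the $O(q^2)$ choices of $(i,j)$ and over a fine net of values $x$ in the separability window (continuity of $F$ makes a net of polynomial size suffice, since $|\cR|\le n^{q^2}$), the probability in $\plantedhyp$ that the planted $\sigma$ fails \textbf{T2} or \textbf{T3} is $\exp(-\Omega(n))$. Translating back through the planting trick, $\Erw[\Zkb-\Zkg]\le\exp(-\Omega(n))\,\Erw[\Zkb]=o(\Erw[\Zkb])$, and combined with $\Zkg\le\Zkb$ this yields $\Erw[\Zkg]\sim\Erw[\Zkb]$.

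The main obstacle is the analytic step for \textbf{T2} and \textbf{T3}: proving, in the relevant density window $c$ just below $(q^{k-1}-\tfrac12)\ln q-\ln 2$, that $F(a)<F(\bar a)-\Omega(1)$ uniformly over the portion of $\cD$ with some $a_{ij}$ in the separability window, and $F(a)\le F(\bar a)$ over the cluster region $\{\min_i a_{ii}\ge q^{-1}\myconst\}$. This requires a careful large-$q$ expansion of $F$ around $\bar a$ and around the near-diagonal stable point, controlling both the entropy term $H$ and the energy term $E$ to sufficient precision; the choice $\kappa=q^{1-k}\ln^{20}q$ and the constant $\myconst$ in the cluster definition are tuned exactly so that the separability window lands strictly inside the region where $F$ is below $F(\bar a)$, and verifying this is where essentially all the work lies.
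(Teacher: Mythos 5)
Your overall frame—reduce to showing that a planted balanced $\sigma$ is tame in $\plantedhyp$ with probability $1-o(n^{-1/2})$ and transfer back via the planting trick—is exactly the paper's (\Lem~\ref{Lemma_planted}), and your treatment of \textbf{T1} is fine. The genuine gap is in how you handle \textbf{T2} and \textbf{T3}: you propose annealed (first-moment/union-bound) estimates over colorings $\tau$ at prescribed overlaps, plus Markov, and this method fails in the density window (\ref{c-bounds}). First, a bookkeeping error: in the planted model the expected number of balanced colorings at overlap $a$ is $\exp(n[F(a)-F(\bar a)/2]+o(n))$, not $\exp(n[F(a)-F(\bar a)]+o(n))$, since the conditioning removes only one factor of the first moment and $F(\bar a)$ is \emph{twice} the first-moment exponent (cf.\ (\ref{Fbara})). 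More seriously, the maxima you need to be negative are in fact positive precisely in the regime that matters. For \textbf{T3}, your claim $\Erw[|\cluster|]\leq\exp(o(n))$ is false outright: clusters are exponentially large (the paper's own bound is $\tfrac1n\ln|\cluster|\leq \ln 2\,\cols^{1-\unif}+\ln \cols\,\cols^{-\unif}+\widetilde O_\cols(\cols^{-\unif-1})$, and Claim~\ref{Claim_quenchedAvg} exhibits $2^{|F_0|}$ colorings inside the cluster with $|F_0|=\Theta(n\cols^{1-\unif})$). Worse, the planted-model \emph{expectation} of the cluster size is governed by $\max\{F(a)-F(\bar a)/2\}$ over the near-diagonal region, and at $a_{\mathrm{stable}}$ (and small perturbations of it) one has $F(a_{\mathrm{stable}})>F(\bar a)$ for $c$ in the upper part of the range of \Thm~\ref{Thm_main}—this is exactly the obstruction discussed in \Sec~\ref{sec:outline}. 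Hence $\Erw[|\cluster|]$ can exceed $\Erw[\Zkb]$, and Markov cannot give \textbf{T3}: the annealed cluster count is dominated by exponentially rare hypergraphs (this is the condensation phenomenon itself). The same defect kills your route to \textbf{T2} on the upper subinterval of the separability window: a matrix agreeing with $a_{\mathrm{stable}}$ except that $a_{11}$ is pushed down to $\cols^{-1}(1-\kappa)$ differs from $a_{\mathrm{stable}}$ by $\widetilde O_\cols(\cols^{-\unif})$ in $F$, so its planted annealed count is $\exp(n[F(\bar a)/2-\widetilde O_\cols(\cols^{-\unif})])=\exp(\Omega(n))$ once $F(a_{\mathrm{stable}})\geq F(\bar a)$; the union bound over colorings therefore cannot rule out overlaps near the top of the window, even though the whp statement is true.

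This is why the paper does not argue via $F$ at all in the first-moment section: separability is proved by a union bound over \emph{independent sets} (\Lem~\ref{Lemma_independentSets}), which only covers intersections up to $\cols^{-1}(1-\cols^{(1.01-\unif)/2})$, followed by a local expansion argument (\Lem~\ref{Lemma_manyNeighbors}) for the remaining subinterval up to $\cols^{-1}(1-\kappa)$; and \textbf{T3} is proved by a whp \emph{structural} bound on the actual cluster size via the core construction (\Lem s~\ref{Lemma_core} and~\ref{Lemma_coreConnect}), not by bounding its expectation. Note also how thin the margin is there: the cluster-size exponent $\ln2\,\cols^{1-\unif}+\ln\cols\,\cols^{-\unif}$ must be compared with $\tfrac1n\ln\Erw[\Zkb]=\ln2\,\cols^{1-\unif}+1.01\ln\cols\,\cols^{-\unif}+\widetilde O_\cols(\cols^{-\unif-1})$, so the slack is only $0.01\ln\cols\,\cols^{-\unif}$—this is where the $1.01\ln\cols/\cols$ term in \Thm~\ref{Thm_main} comes from, and any annealed argument with $\widetilde O_\cols(\cols^{-\unif})$-sized losses of the kind you sketch could not see it. So the proposal identifies the right events to control but the proposed mechanism for controlling them does not work; the expansion/core arguments (or some substitute that bounds typical rather than expected quantities) are not an optional refinement but the essential content of the proof of \Prop~\ref{Prop_first}.
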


\noindent
Further, in \Sec~\ref{Sec_second} 
we establish the following bound on the second moment.

\begin{proposition}\label{Prop_second}
There is a number $\cols_0>0$ such that for all $\cols>\cols_0$,  
if $\,\Erw[\Zkg]\sim\Erw[\Zkb]$ then 
\[ \Erw[\Zkg^2]=O(\Erw[\Zkb]^2) = O(\Erw[\Zkg]^2).\]
\end{proposition}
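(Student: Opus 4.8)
The plan is to prove \Prop~\ref{Prop_second} by decomposing the second moment according to the overlap $a(\sigma,\tau)$ of the two tame colorings, exactly as in \Lem~\ref{Lemma_DFGsecond}, and then showing that the dominant contribution comes from overlaps near the flat matrix $\bar a$ while all other overlaps contribute a negligible amount. Writing $\Erw[\Zkg^2]=\sum_{\sigma,\tau}\pr[\sigma,\tau\text{ both tame colorings}]$, I would split the range of the overlap into three regimes: (i) the \emph{central regime}, where $\|a-\bar a\|_2$ is small (say $\widetilde O_\cols(\cols^{-(k+1)/2})$ or so), where one expects a local expansion of $F$ around $\bar a$ to show that $F(a)\le F(\bar a)$ with the second-order term strictly negative, giving $\sum$ over this regime $=O(\Erw[\Zkb]^2)$ via the Laplace/Gaussian-summation argument already packaged in \Lem~\ref{Lemma_DFGsecond} (using \eqref{eqLemma_DFGsecond_precise} with $\cA$ the set of overlaps bounded below by a constant); (ii) the \emph{cluster regime}, where $\min_i a_{ii}(\sigma,\tau)>\cols^{-1}\myconst$, i.e.\ $\tau\in\cluster$ — here \textbf{T3} is the whole point: the number of such pairs is at most $\sum_\sigma |\cluster|\cdot\vecone\{\sigma\text{ tame}\}\le \Erw[\Zkb]\cdot\Erw[\Zkg]\le\Erw[\Zkb]^2$ (using $\Erw[\Zkg]\sim\Erw[\Zkb]$); (iii) the \emph{separated regime}, the rest, where separability \textbf{T2} forces every entry $a_{ij}$ to avoid the window $(\cols^{-1}\myconst,\cols^{-1}(1-\kappa))$, so the overlap is ``far'' from $\bar a$ in every coordinate, and one must show $F(a)<F(\bar a)$ with room to spare.

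Regime (iii) is where the real work — and the main obstacle — lies. The strategy, following~\cite{Danny}, is to bound $F(a)=H(a)+E(a)$ over the sub-polytope of $\cD$ (or even $\cS$) consisting of matrices whose entries avoid the forbidden interval, and show the maximum there is strictly below $F(\bar a)=2(\ln\cols+c\ln(1-\cols^{1-k}))$ by a margin that beats the $n^{O(\cols^2)}$ number of overlap classes. The key analytic facts to exploit are: (a) $H(a)\le 2\ln\cols$ with equality only at $\bar a$, and $H$ drops quadratically in $\|a-\bar a\|$; (b) $E(a)$ is controlled through $\|a\|_k^k=\sum_{ij}a_{ij}^k$, which is minimized at $\bar a$ (value $\cols^{2-2k}$) and which, crucially, changes only by $\widetilde O_\cols(\cols^{-(2k-1)})$-type amounts in the relevant range, so $E(a)-E(\bar a)$ is small unless $a$ has some large entry. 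The separability window is engineered precisely so that the ``diagonal-heavy'' near-identity matrices (like $a_{\mathrm{stable}}$) — the configurations where $\|a\|_k^k$ becomes large enough to make $E(a)$ exceed $E(\bar a)$ and overwhelm the entropy deficit — are excluded. So the argument splits again within (iii) according to whether $a$ has an entry exceeding $\cols^{-1}(1-\kappa)$ (in which case it must be ``close to a permutation matrix'', and one shows $H(a)$ is then so small that even the favourable $E(a)$ cannot compensate) or all entries are $\le\cols^{-1}\myconst$ (in which case $\|a\|_k^k$ stays close to its minimum and the entropy deficit dominates). Each of these requires a careful quantitative estimate, and getting the constants to line up against $\gamma>\ln2$ and the $1.01\ln\cols/\cols$ slack is the delicate part.

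Concretely, the steps I would carry out are: (1) invoke \Lem~\ref{Lemma_DFGsecond}, in particular \eqref{eqLemma_DFGsecond_precise}, to reduce everything to summing $\exp(nF(a))$ over overlap classes $a\in\cR\cap\cD$; (2) handle the cluster regime (ii) directly by the counting bound above, using \textbf{T3} and $\Erw[\Zkg]\sim\Erw[\Zkb]$ — this step is short; (3) for the central regime (i), Taylor-expand $F$ at $\bar a$, check the Hessian is negative definite on the tangent space of $\cD$ (this needs $c$ not too large, which holds in our window), and conclude the Gaussian sum over the central region is $\Theta(\Erw[\Zkb]^2)$ up to polynomial factors that wash out; (4) for the separated regime (iii), establish $\max F(a)\le F(\bar a)-\Omega(n^{-1}\cdot n)$, i.e.\ a genuine $\exp(-\Omega(n))$ saving, by the case analysis on the size of the largest entry of $a$, using the separability gap $\kappa=\cols^{1-k}\ln^{20}\cols$ to quantify how far from $\bar a$ such $a$ must be and then bounding $H$ and $E$ accordingly. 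I expect step (4), specifically the sub-case where all entries are in $[0,\cols^{-1}\myconst]$ and one must show that the tiny increase in $E$ from any deviation is strictly outweighed by the entropy loss uniformly over that region, to be the technical heart of the proof — this is the hypergraph analogue of the ``tame region'' estimates in~\cite{Danny} and is where $k$-dependence enters through the $\ell_k$-norm rather than the $\ell_2$-norm, so the convexity estimates from the graph case need genuine adaptation.
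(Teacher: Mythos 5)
Your overall architecture coincides with the paper's: the three regimes you describe are exactly the paper's split of $\Erw[\Zkg^2]$ into a neighbourhood $\cE$ of $\bar a$ treated by a Hessian/Laplace computation via (\ref{eqLemma_DFGsecond_precise}) (\Lem~\ref{secmomlaplace}), the fully-stable overlaps $\cD_\cols$ handled directly by \textbf{T3} together with $\Erw[\Zkg]\sim\Erw[\Zkb]$ (\Lem~\ref{secmomlaplace_2}), and the remaining tame overlaps, where one needs $\max F<F(\bar a)$ strictly on a compact set to beat the polynomial number of overlap classes (\Lem~\ref{secmomlaplace_3}, resting on \Lem~\ref{Lemma_Dtame} and \Lem~\ref{Lemma_smmTame}). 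So your steps (1)--(3) are essentially the paper's argument.

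The genuine gap is in your step (4), and it is twofold. First, your dichotomy for the separated regime is too coarse: an overlap with an entry above $\cols^{-1}(1-\kappa)$ is not ``close to a permutation matrix'' --- separability pins only the corresponding row (and column), while the rest of the matrix is unconstrained. The paper instead stratifies $\Dtame$ by the number $s$ of entries exceeding $\cols^{-1}\myconst$ (the classes $\dtame{s}$, $1\leq s<\cols$), flattens the remaining entries row by row using the local-variation/averaging \Lem~\ref{Prop_Averaging}, compares with the block-diagonal reference matrices $\bar a(s)$, and then verifies $F(\bar a(s))+\cols^{0.999-\unif}<F(\bar a)$ as in (\ref{eqLemma_as}); four ranges of $s$ require separate treatment (Claims~\ref{Prop0abar}--\ref{Prop1abar}), and for $s$ close to $\cols$ the comparison is against $F(\cols^{-1}\id)=\tfrac12F(\bar a)$, which lies below $F(\bar a)$ only because $F(\bar a)>0$ in the assumed range of $c$. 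Your sketch covers neither the intermediate-$s$ structure nor these reference points. Second, your premise (b) --- that $\norm a_\unif^\unif$, hence $E$, changes only negligibly unless some entry is large --- is false: with all entries as large as $\cols^{-1}\myconst$ one can have $\norm a_\unif^\unif$ of order $\cols^{1-\unif}/\unif$, so $E(a)-E(\bar a)$ can be of order $\ln\cols/\unif$, which is not small compared with the entropy deficit; the constant $1.01/\unif$ inside $\myconst$ is tuned precisely so that this energy gain is beaten by the entropy loss (Claim~\ref{prop_099101} and \Lem~\ref{Prop_Averaging}), and carrying out that comparison, not dismissing the energy change, is the content of the $s=0$ case. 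Without an analogue of the averaging lemma and the $s$-stratified comparison with $\bar a(s)$, the maximisation step as you describe it would not go through.
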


Thus, while moving to tame colorings has no discernible effect on the first moment, \Prop~\ref{Prop_second} shows that the impact on the second moment is dramatic.
Indeed, {the matrix $a_{\mathrm{stable}}$ shows}  
that $\Erw[\Zkb^2]\geq\exp(\Omega(n))\Erw[\Zkb]^2$ for $c$ near the bound in \Thm~\ref{Thm_main},
while $\Erw[\Zkg^2]=O(\Erw[\Zkb]^2)$ for all $c$ up to {$(\cols^{\unif-1}-1/2)\ln \cols - \ln 2 - {1.01\ln \cols/\cols}$}.
Then \Thm~\ref{Thm_main} follows 
from applying Fact~\ref{Fact_smm} to $\Zkg$, 
by \Prop s~\ref{Prop_first} and~\ref{Prop_second}.

Finally, the obvious question is whether the approach taken in this work can
be pushed further to actually obtain tight upper and lower bounds on the $\cols$-colorability threshold.
{However}, it follows from the proof of \Prop s~\ref{Prop_first} that the answer is ``no''. 
More specifically, in \Sec~\ref{Sec_Cor_first} we prove the following.

\begin{corollary}\label{Cor_first}

For any {$\unif\geq3$} there exists a sequence $(\eps_{\cols})_{\cols\geq3}$ such that $\lim_{\cols\to\infty}\eps_{\cols}=0$
and such that the following is true:
For any $c$ such that
\[ (\cols^{\unif-1}- 1/2)\ln\cols-\ln2+\eps_{\cols} < c < (\cols^{\unif-1}-\ 1/2)\ln\cols
\]
there exists $\delta>0$ such that 
	\begin{align}\label{eqCor_first}
	\lim_{n\to\infty}\pr\brk{\Zkc {<} \exp(-\delta n)\, \Erw[\Zkc]}=1.
	\end{align}
\end{corollary}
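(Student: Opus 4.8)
The plan is to exploit the fact, established in the course of proving Proposition~\ref{Prop_first}, that the ``stable'' overlap $a_{\mathrm{stable}}$ (or rather a whole family of overlaps of the same type, in which roughly a $\cols^{-\unif}$-fraction of the mass of each diagonal block is spread out onto off-diagonal entries) beats the flat overlap $\bar a$ once $c$ exceeds $(\cols^{\unif-1}-\tfrac12)\ln\cols-\ln2+\eps_\cols$. Concretely, I would first record the one-variable computation: parametrising $a(t)=(\cols^{-1}-t)\id+t(\cols-1)^{-1}(\cols^2\bar a-\id)$ for $t\in[0,\cols^{-1}]$, one has $F(a(t))=H(a(t))+E(a(t))$ with both $H$ and $E$ explicit, and $F(a(0))=F(\bar a)=2(\ln\cols+c\ln(1-\cols^{1-\unif}))$. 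A second-order expansion around $t=0$ shows that for $c$ in the stated window the function $t\mapsto F(a(t))$ has positive derivative, or at least attains at some $t^\ast=\Theta(\cols^{-\unif})$ a value $F(a(t^\ast))=F(\bar a)+\Omega_\cols(\cdot)\cdot\mathrm{(gap)}$ that strictly exceeds $2\lim_n\frac1n\ln\Erw[\Zkc]$; call this excess $3\delta>0$. This is essentially the ``second obstacle'' already flagged in Section~\ref{sec:outline}, so the analytic content is available.

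Next I would turn this lower bound on $\max_{a\in\cD}F(a)$ into the claimed upper-tail statement for $\Zkc$. Since $a(t^\ast)$ has all entries bounded below by a constant $\xi=\xi(\cols)>0$, I can apply the precise estimate~(\ref{eqLemma_DFGsecond_precise}) of Lemma~\ref{Lemma_DFGsecond} to a small set $\cA$ of overlaps near $a(t^\ast)$ (those realised by balanced pairs), giving
\[
\Erw[\Zkb^2]\;\geq\;\Erw[\Zkb^2\cdot\vecone_{\cA}]\;=\;\exp\bc{nF(a(t^\ast))+o(n)}\;\geq\;\exp\bc{n\bc{2\lim_n\tfrac1n\ln\Erw[\Zkb]+3\delta}+o(n)}\;=\;\exp(3\delta n+o(n))\,\Erw[\Zkb]^2,
\]
using Lemma~\ref{Lemma_DFGfirst} for $\Erw[\Zkb]$ in the last step. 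Now the Paley--Zygmund direction is reversed: a second-moment blow-up of this size forces $\Zkc$ to be typically much smaller than its mean. Indeed, write $\Erw[\Zkb^2]=\Erw[\Zkb^2\vecone_{\{\Zkb\leq\exp(-\delta n)\Erw[\Zkb]\}}]+\Erw[\Zkb^2\vecone_{\{\Zkb>\exp(-\delta n)\Erw[\Zkb]\}}]$; bounding the first term crudely by $\exp(-\delta n)\Erw[\Zkb]\cdot\Erw[\Zkb]$ and noting it is therefore $o(\Erw[\Zkb^2])$, we get $\Erw[\Zkb^2\vecone_{\{\Zkb>\exp(-\delta n)\Erw[\Zkb]\}}]=(1-o(1))\Erw[\Zkb^2]$. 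On the other hand, since $\Zkb\leq\Zkc$ and $\Erw[\Zkc]=O(1)\Erw[\Zkb]$ by Lemma~\ref{Lemma_DFGfirst}, a matching calculation for $\Zkc$ — combined with a deterministic upper bound $\Zkc\leq\cols^n$ to control the second moment on the event $\{\Zkc>\exp(-\delta n)\Erw[\Zkc]\}$ — yields $\pr[\Zkc>\exp(-\delta n)\Erw[\Zkc]]\to 0$ precisely when the mean-square dominates, which is exactly the regime forced above. The cleanest route is to apply this to $\Zkc$ directly: from $\Erw[\Zkc^2]\geq\Erw[\Zkb^2]\geq\exp(3\delta n+o(n))\Erw[\Zkc]^2$ one concludes by the simple inequality $\pr[\Zkc>\lambda]\leq\Erw[\Zkc^2]/\lambda^2$ applied with $\lambda=\exp(-\delta n)\Erw[\Zkc]$ is \emph{not} enough by itself, so instead one uses that $\Zkc/\Erw[\Zkc]$ has mean $1$ but second moment $\exp(3\delta n+o(n))$, hence is concentrated near $0$: formally, $\pr[\Zkc\geq\exp(-\delta n)\Erw[\Zkc]]\leq\Erw[\Zkc\vecone_{\{\Zkc\geq\exp(-\delta n)\Erw[\Zkc]\}}]/(\exp(-\delta n)\Erw[\Zkc])$ and then a reverse argument using the second moment only on the high-probability part shows this mass is $o(1)$. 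I would present this via the standard ``small subgraph conditioning is not needed'' lemma: if $\Erw[X]=1$ and $\Erw[X^2]\to\infty$ then for any fixed $\eta<1$, $\pr[X\geq\exp(-(1-\eta)\cdot\frac12\ln\Erw[X^2]\cdot\,?)]$ — more simply, apply Markov to $X$ on the truncation and Cauchy--Schwarz on the complement, giving $\pr[X\geq\Erw[X^2]^{-1/4}]\leq\Erw[X^2]^{-1/4}+$ (tail), which suffices with $\delta$ chosen so that $\Erw[\Zkc^2]^{-1/4}\leq\exp(-\delta n)\Erw[\Zkc]/\Erw[\Zkc]$.

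The main obstacle I anticipate is not the probabilistic tail argument — that is soft, given a genuine exponential gap between $\Erw[\Zkc^2]$ and $\Erw[\Zkc]^2$ — but rather pinning down the analytic claim that $F(a(t^\ast))>F(\bar a)$ throughout the open window $(\cols^{\unif-1}-\tfrac12)\ln\cols-\ln2+\eps_\cols<c<(\cols^{\unif-1}-\tfrac12)\ln\cols$, together with extracting the \emph{uniform-in-}$c$ lower bound $3\delta=3\delta(c)>0$ that the statement demands for each such $c$. This requires a careful second-order (and possibly higher-order, since the leading linear term in $t$ may cancel) Taylor expansion of $t\mapsto F(a(t))$ near $t=0$, controlling the $H$ and $E$ contributions with errors that are genuinely $o_\cols(\cdot)$ relative to the gap $\gamma-\ln2$ between $c$ and the old threshold. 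Since Section~\ref{Sec_first} already performs essentially this expansion to \emph{show} that tameness (via {\bf T2}) is needed, i.e. that $F(a_{\mathrm{stable}})>F(\bar a)$ in exactly this range, I would import that computation wholesale; the corollary is really just the contrapositive packaging of it, and the only new ingredient is observing that $\Zkb\leq\Zkc$ together with $\Erw[\Zkc]=\Theta(\Erw[\Zkb])$ lets one transfer the blow-up and the tail bound from $\Zkb$ to $\Zkc$.
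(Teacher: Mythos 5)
There is a genuine gap, and it sits exactly at the step you describe as ``soft''. From $\Erw[\Zkc^2]\geq\exp(3\delta n)\,\Erw[\Zkc]^2$ one simply cannot conclude that $\Zkc<\exp(-\delta n)\Erw[\Zkc]$ \whp: an exponentially large normalised second moment is fully compatible with $\Zkc$ being concentrated at its mean. For instance, a nonnegative variable equal to $\Erw[\Zkc]$ with probability $1-\eul^{-\delta n}$ and equal to $\eul^{2\delta n}\Erw[\Zkc]$ with probability $\eul^{-\delta n}$ has mean $\Theta(\Erw[\Zkc])$ and normalised second moment $\eul^{\Omega(n)}$, yet it is never below its mean. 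No combination of Markov, Paley--Zygmund, truncation or Cauchy--Schwarz can bridge this, because an upper bound on $\pr\brk{\Zkc\geq\exp(-\delta n)\Erw[\Zkc]}$ needs information beyond the first two moments; your own truncation attempt concedes it ``is not enough by itself'', and the substitute you sketch does not produce any bound tending to $0$. In fact the paper itself refutes the implication you rely on: for $c$ near $(\cols^{\unif-1}-1/2)\ln\cols-\ln2-1.01\ln\cols/\cols$ the overlap $a_{\mathrm{stable}}$ already forces $\Erw[\Zkb^2]\geq\exp(\Omega(n))\Erw[\Zkb]^2$ (the ``second obstacle'' of \Sec~\ref{sec:outline}), while \Prop s~\ref{Prop_first} and~\ref{Prop_second} together with the Paley--Zygmund computation displayed immediately after \Cor~\ref{Cor_first} show that at such densities $\pr\brk{\Zkc\geq\exp(-\delta n)\Erw[\Zkc]}$ remains bounded away from $0$ for small $\delta$. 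So the second-moment blow-up, which is all your argument produces, cannot yield (\ref{eqCor_first}).

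The missing idea is a comparison of the null model with the planted model, which is how the paper actually proves the corollary in \Sec~\ref{Sec_Cor_first}: Claim~\ref{Claim_antiPlant} (the planting trick) reduces (\ref{eqCor_first}) to exhibiting events that are likely under $\hyp$ but exponentially unlikely under $\phyp$; these come from the Potts partition function $Z_{\cols,\beta}$, via Azuma concentration of $\ln Z_{\cols,\beta}$ in both models (Claim~\ref{Claim_plantedAzuma}), the annealed upper bound $\ln\Erw[Z_{\cols,\beta}(\hyp)]\leq\delta n+\ln\Erw[\Zkc]$ for large $\beta$ (Claim~\ref{Claim_annealedAvg}), and---this is the real quantitative content, playing the role you assign to $F(a_{\mathrm{stable}})$---a quenched lower bound in the planted model obtained by explicitly constructing $2^{|F_0|}$ proper colorings with $|F_0|\geq(\cols^{1-\unif}+\widetilde O_\cols(\cols^{-\unif}))n$, so that $\Erw\ln Z_{\cols,\beta}(\phyp)$ strictly exceeds $\ln\Erw[\Zkc(\hyp)]$ once $c>(\cols^{\unif-1}-1/2)\ln\cols-\ln2+\eps_\cols$ (Claim~\ref{Claim_quenchedAvg}). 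Your analytic observation that $F$ at a stable-type overlap exceeds $F(\bar a)$ in this window is plausibly true (and is asserted in \Sec~\ref{sec:outline}), but it concerns only the annealed second moment and cannot substitute for this planted-versus-null free-energy comparison.
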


\noindent
{Now, assume for a contradiction that there is a random variable $0\leq Z\leq\Zkc$ with the following properties.
First, $Z(G)>0$ only if $G$ is $\cols$-colorable.
Second, $\ln\Erw[Z]\sim\ln\Erw[\Zkc]$ (cf.\ \Lem~\ref{Lemma_DFGfirst} and \Prop~\ref{Prop_first}).
Third, $\Erw[Z^2]=O(\Erw[Z]^2)$.
Then the Paley-Zygmund inequality implies that
$$\lim_{\delta\to 0}\lim_{n\to\infty}\pr\brk{\Zkc\geq\exp(-\delta n)\, \Erw[\Zkc]}\geq
		\lim_{\delta\to 0}\lim_{n\to\infty}\pr\brk{Z\geq\exp(-\delta n)\, \Erw[Z]}>0,$$
in contradiction to (\ref{eqCor_first}).}
{\Cor~\ref{Cor_first} is in line with the physics prediction that the actual $\cols$-colorability threshold
	is preceded by another phase transition called {\em condensation}~\cite{pnas},
	beyond which \whp\ $\Zkc\leq\exp(-\Omega(n))\Erw[\Zkc]$.
	In particular, the lower bound of \Thm~\ref{Thm_main} matches this ``condensation threshold'' up to an error term that tends to $0$
	in the limit of large $\cols$.}


\section{The first moment}\label{Sec_first}

\noindent
Throughout this section, 
unless specified otherwise we take $\sigma, \tau:[n]\rightarrow[\cols]$ as balanced maps, and assume that
\begin{align}
(\cols^{\unif-1}-1/2)\ln\cols-2\leq c\leq (\cols^{\unif-1}-1/2)\ln\cols-\ln2-{{1.01\ln \cols}/{\cols}}.
\label{c-bounds}
\end{align}
We frequently make use of the Chernoff bound.
\begin{lemma}\label{LemA2}
	(\cite[Theorem 2.1]{JLR}) Let $\phi(x)=(1+x)\ln (1+x) -x$. Let X be a binomial random variable with mean $\mu>0$. Then for any $t>0$
	\begin{align*}
		\textbf{P}\left[X>\mu+t \right]\leq \exp\left\{-\mu\phi(
			t/\mu)\right\},
		\hspace{1cm}
		\textbf{P}\left[X<\mu-t \right]\leq \exp\left\{-\mu\phi(-
			t/\mu)\right\}.
	\end{align*}
In particular, for any $t>1$ we have $\textbf{P}\left[X>t\mu \right]\leq \exp\left\{-t\mu\ln(t/e)\right\}.$
\end{lemma}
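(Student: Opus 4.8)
The plan is to prove Lemma~\ref{LemA2} by the standard exponential‑moment (Chernoff/Bernstein) argument, optimising the free parameter, and then to obtain the final ``in particular'' inequality as a special case; alternatively the entire statement may simply be quoted from \cite[Theorem~2.1]{JLR}. Write $X\sim\Bin(N,p)$, so that $\mu=Np>0$. For the upper tail, fix $\lambda>0$ and apply Markov's inequality to $\eul^{\lambda X}$:
\[\pr\brk{X>\mu+t}\leq\eul^{-\lambda(\mu+t)}\,\Erw\brk{\eul^{\lambda X}}=\eul^{-\lambda(\mu+t)}\bc{1+p(\eul^\lambda-1)}^N\leq\exp\bc{\mu(\eul^\lambda-1)-\lambda(\mu+t)},\]
where the last step uses $1+x\leq\eul^x$ together with $Np=\mu$; note the resulting bound depends on $X$ only through $\mu$. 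The exponent is convex in $\lambda$ and minimised at $\eul^\lambda=1+t/\mu$, i.e.\ at $\lambda=\ln(1+t/\mu)>0$; substituting this value back gives exponent $t-(\mu+t)\ln(1+t/\mu)=-\mu\phi(t/\mu)$, which is the first claimed inequality.

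For the lower tail I would run the same argument on $\eul^{-sX}$ with $s>0$. We may assume $t<\mu$, since otherwise $\pr\brk{X<\mu-t}=0$ and there is nothing to prove. Then
\[\pr\brk{X<\mu-t}\leq\eul^{s(\mu-t)}\,\Erw\brk{\eul^{-sX}}\leq\exp\bc{\mu(\eul^{-s}-1)+s(\mu-t)},\]
and the exponent is minimised at $\eul^{-s}=1-t/\mu$, i.e.\ at $s=-\ln(1-t/\mu)>0$; plugging in yields exponent $-t-(\mu-t)\ln(1-t/\mu)=-\mu\phi(-t/\mu)$, the second claimed inequality.

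Finally, for $t>1$ I would apply the upper‑tail bound with deviation $(t-1)\mu$ in place of $t$, so that $\mu+t\mapsto t\mu$, obtaining $\pr\brk{X>t\mu}\leq\exp(-\mu\phi(t-1))$; since $\phi(t-1)=t\ln t-(t-1)=t\ln t-t+1>t\ln t-t=t\ln(t/\eul)$ for $t>1$, this gives $\pr\brk{X>t\mu}\leq\exp(-t\mu\ln(t/\eul))$. There is no genuine obstacle here: the only points requiring (minor) care are the elementary inequality $1+x\leq\eul^x$ that collapses the dependence onto $\mu$ alone, verifying that the optimising values of $\lambda$ and $s$ are positive, and restricting to $t<\mu$ in the lower‑tail case.
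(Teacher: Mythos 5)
Your proof is correct. The paper gives no argument for this lemma at all---it is simply quoted from \cite[Theorem~2.1]{JLR}---and your exponential-moment derivation (Markov applied to $\eul^{\lambda X}$, the bound $1+x\leq\eul^x$ to reduce to $\mu$, and optimisation at $\eul^\lambda=1+t/\mu$, resp.\ $\eul^{-s}=1-t/\mu$) is exactly the standard proof behind that citation, and your deduction of the ``in particular'' clause via $\phi(t-1)=t\ln t-t+1> t\ln(t/\eul)$ is also sound.
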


\subsection{The planted model}
The aim in this section is to establish \Prop~\ref{Prop_first}, the lower bound on the expected number of tame colorings.
Let $\sigma:\brk n\to\brk\cols$ be a (fixed) balanced map that assigns each vertex a color.
It suffices to prove that $\pr\brk{\sigma\mbox{ is a tame coloring of }\hyp|\sigma\mbox{ is a coloring of }\hyp}=1-o(1)$.
Furthermore, the conditional distribution of $\hyp$ given that $\sigma$ is a coloring admits an easy explicit description:
	the conditional random hypergraph simply consists of $m$ random edges chosen uniformly out of all edges that are not monochromatic under $\sigma$.

It will however be convenient to work with a slightly different distribution.
Let $\phyp\in\plantedhyp$ be the hypergraph on $\brk n$ obtained by including every edge that is not monochromatic under
$\sigma$ with probability $p$, independently, where
	\begin{align}
		p=
		&\frac{cn}{\binom{n}{\unif}- \prod_{j=1}^q \binom{|\sigma^{-1}(j)|}{k} } \, \sim \, 
		 \frac{c\unif!\cdot\big(1+O(1/n)\big)}{n^{\unif-1}(1-\cols^{1-\unif})}=O\big(n^{1-\unif}\big).
\label{pdef}
	\end{align}
Observe that the expected number of edges equals $cn$.
We call $\plantedhyp$ the {\em planted coloring model}.

\begin{lemma}\label{Lemma_planted}
{Let $\sigma: \brk n \to\brk\cols$ be a fixed balanced map.}
For any event $\cE$ we have
	\begin{align*}
	\pr\brk{\hyp\in\cE \mid \sigma\mbox{ is a coloring of }\hyp}\leq O(\sqrt n)\, \pr\brk{   {\phyp\in\cE}}.
	\end{align*}
\end{lemma}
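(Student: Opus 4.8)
The plan is to realise both distributions as descendants of a single product measure and to transfer between them by conditioning on the number of edges. Let $N$ be the number of edges of the complete $\unif$-uniform hypergraph on $[n]$ that are not monochromatic under $\sigma$, and recall from~\eqref{pdef} that $\phyp$ includes each such edge independently with probability $p$, where $p$ is chosen so that the expected number of edges $Np$ equals $m:=\lceil cn\rceil$ (up to the harmless $O(1)$ rounding already present in~\eqref{pdef}). Writing $\vec m=|E(\phyp)|$, we have $\vec m\sim\Bin(N,p)$, hence $\Erw[\vec m]=Np=m$ and, since $p=O(n^{1-\unif})=o(1)$ while $Np=\Theta(n)$, also $\Var[\vec m]=Np(1-p)=\Theta(n)$.

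The first observation is that, conditioned on the event $\{\vec m=m\}$, the hypergraph $\phyp$ is uniformly distributed over all sets of exactly $m$ non-monochromatic edges. But this is precisely the conditional law of $\hyp\in\hnc$ given that $\sigma$ is a coloring, as noted just before~\eqref{pdef}. Consequently, for \emph{any} event $\cE$,
\begin{align*}
\pr\brk{\phyp\in\cE}\ \geq\ \pr\brk{\phyp\in\cE,\ \vec m=m}\ =\ \pr\brk{\vec m=m}\cdot\pr\brk{\hyp\in\cE\mid\sigma\text{ is a coloring of }\hyp},
\end{align*}
so the lemma will follow once we show $\pr\brk{\vec m=m}=\Omega(1/\sqrt n)$.

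For the latter it suffices to combine two elementary facts about $\Bin(N,p)$. First, the mode of $\Bin(N,p)$ lies within distance $1$ of $Np=m$, and at the mode the point probability is $\Theta\bc{1/\sqrt{Np(1-p)}}=\Theta(1/\sqrt n)$ by Stirling's formula. Second, the ratio of consecutive point probabilities $\pr[\vec m=j+1]/\pr[\vec m=j]=(N-j)p/((j+1)(1-p))$ equals $1+O(1/n)$ whenever $|j-Np|=O(1)$, so passing from the mode to the value $m$ alters the point probability by a factor $1+O(1/n)$. Together these give $\pr\brk{\vec m=m}=\Theta(1/\sqrt n)$, and hence $\pr\brk{\hyp\in\cE\mid\sigma\text{ is a coloring of }\hyp}\leq\pr\brk{\phyp\in\cE}/\pr\brk{\vec m=m}=O(\sqrt n)\,\pr\brk{\phyp\in\cE}$, as claimed.

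This is a standard ``uniform versus binomial'' transfer argument, so there is essentially no obstacle; the only points needing care are that $\cE$ is arbitrary --- which is handled automatically, since we never use anything about $\cE$ beyond the conditioning identity above --- and the rounding $m=\lceil cn\rceil$ versus $cn$, which shifts all relevant quantities by $O(1)$ and therefore does not affect the local-limit estimate.
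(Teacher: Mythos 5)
Your proposal is correct and follows essentially the same route as the paper: condition the planted model on having exactly $m$ edges (under which its law coincides with the conditional law of $\hyp$ given that $\sigma$ is a coloring) and use the fact that a $\Bin(N,p)$ variable with mean $m=\Theta(n)$ takes the value $m$ with probability $\Theta(n^{-1/2})$. You merely spell out the Stirling/local-limit estimate that the paper states in one line.
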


\begin{proof}
By Stirling's formula, the probability that $\phyp$ has precisely $m$ edges is $\Theta(n^{-1/2})$.
If this event occurs then the {conditional} distributions of $\phyp$ and of $\hyp$ coincide.
\end{proof}

\noindent
Hence, we are left to show that the probability that $\sigma$ fails to be tame in $\phyp$ is $o(n^{-1/2})$.
Indeed, in \Sec s~\ref{Sec_Separability} and~\ref{Sec_clusterSize} we will establish the following two statements.
In both cases the proofs are by careful generalisation of the arguments from~\cite{Danny} to the hypergraph case.

\begin{lemma}\label{Cor_manyNeighbors}
With probability $1-\exp(-\Omega(n))$ the planted coloring $\sigma$ is separable in $\plantedhyp$.
\end{lemma}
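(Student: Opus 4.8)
The plan is to show that in the planted model $\plantedhyp$, with probability $1-\exp(-\Omega(n))$, no balanced coloring $\tau$ of $\phyp$ has an overlap entry $a_{ij}(\sigma,\tau)$ lying in the forbidden window $(\cols^{-1}\myconst,\cols^{-1}(1-\kappa))$. The natural route is a union bound over the ``shape'' of $\tau$ relative to $\sigma$. First I would fix a matrix $a\in\cR$ with some entry $a_{ij}$ in the bad interval, count the number of maps $\tau$ with $a(\sigma,\tau)=a$ (this is the multinomial coefficient, of order $\exp(nH(a))$ up to polynomial factors, exactly as in \Lem~\ref{Lemma_DFGsecond}), and then bound the probability that a \emph{fixed} such $\tau$ is a proper coloring of $\phyp$. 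Since $\phyp$ is the planted hypergraph, the relevant quantity is $\pr[\tau\text{ is proper}\mid\sigma\text{ is proper}]$, which by the edge-inclusion probability $p$ from \eqref{pdef} and the inclusion--exclusion computation behind \eqref{eqFirstMmt11} is $\exp\brk{nc\,\ln\bc{1+\frac{\norm a_\unif^\unif-2\cols^{1-\unif}}{1-\cols^{1-\unif}}}+O(1)}$; more precisely one must be careful that in the planted model the exponent is $nc\,(\,\norm a_\unif^\unif-2\cols^{1-\unif}\,)/(1-\cols^{1-\unif})+\cdots$ rather than the annealed $nE(a)$, and this difference is exactly what gives the exponential decay. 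Combining, the expected number of proper balanced $\tau$ with shape $a$ is $\exp\brk{n\,\Psi(a)+o(n)}$ where $\Psi(a)=H(a)-\ln\cols+c\,(\,\norm a_\unif^\unif-2\cols^{1-\unif})/(1-\cols^{1-\unif})$ (the $-\ln\cols$ coming from comparing the planted count with $\Erw[\Zkb]$), and it suffices to show $\sup\{\Psi(a):a\in\cR,\ \exists\, i,j\ a_{ij}\in(\cols^{-1}\myconst,\cols^{-1}(1-\kappa))\}<0$, then take a union bound over the $\leq n^{\cols^2}$ shapes.

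The core of the argument is therefore the analytic estimate that $\Psi(a)$ is strictly negative, by a constant independent of $n$, whenever some entry of $a$ sits in the separability window. I would prove this by first reducing to the product case: since $\Psi$ depends on $a$ only through $H(a)$ and $\norm a_\unif^\unif$, and $H$ is maximized (for fixed marginals) and $\norm a_\unif$ minimized at the "independent" matrix $\hat a_{ij}=\cols\,\alpha_i\beta_j$ with $\alpha,\beta$ the row/column marginals of $a$ — no wait, one must be slightly more careful because increasing $\norm a_\unif^\unif$ hurts — so instead I would parametrize $a$ near the flat matrix $\bar a$ and near the diagonal-ish matrix, and split into two regimes. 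In the regime where every entry is within $\widetilde O_\cols(\cols^{-\unif})$ of $\cols^{-2}$, a second-order Taylor expansion of $H$ and of $\norm a_\unif^\unif$ around $\bar a$ (exactly as in \eqref{eqFirstMmt2}, but now keeping the quadratic term) shows $\Psi(a)=-\tfrac12\,\|a-\bar a\|_2^2\,\cols^2(1+o_\cols(1))\cdot(\text{positive})+$ lower order, using that the density $c$ is below $(\cols^{\unif-1}-1/2)\ln\cols$ so the Hessian of $\Psi$ is negative definite — and one checks the bad window forces $\|a-\bar a\|_2\geq\cols^{-1}\cdot\Omega(1)$, giving a macroscopic negative value. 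In the complementary regime where some row of $a$ is far from flat, one uses that the entropy deficit $H(\bar a)-H(a)$ is bounded below by a constant (Pinsker) while $c\,(\norm a_\unif^\unif-2\cols^{1-\unif})/(1-\cols^{1-\unif})$ cannot compensate it because $\norm a_\unif^\unif\leq\cols^{-\unif}\cdot\cols^2=\cols^{2-\unif}$ always, so $c\cdot\norm a_\unif^\unif\leq c\cols^{2-\unif}=\widetilde O_\cols(\cols)\ll\ln\cols\cdot\text{const}$; this is the same "separation of scales" that drives all the second-moment bounds here.

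The main obstacle I anticipate is the first regime: proving the negative-definiteness of the Hessian of $\Psi$ at $\bar a$ with enough room to absorb the window, and in particular tracking the constants $\myconst$, $\kappa=\cols^{1-\unif}\ln^{20}\cols$ and the density gap $\ln2+1.01\ln\cols/\cols$ carefully enough that the quadratic term genuinely dominates across the whole window rather than just asymptotically. One has to verify that even at the inner edge of the window, where $a_{ij}$ differs from $\cols^{-1}(1-\kappa)$ by only $\widetilde O_\cols(\cols^{-\unif})$, the deviation $\|a-\bar a\|_2$ is still large enough — this is where the $\myconst$ threshold (bounded away from $\cols^{-1}$) versus $\cols^{-1}(1-\kappa)$ (close to $\cols^{-1}$) matters, since an entry near $\cols^{-1}$ is itself a $\Theta(1)$-sized deviation from $\cols^{-2}$. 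Handling the boundary of the window and the interaction between different entries simultaneously being perturbed (one entry in the window forces, via the marginals, compensating deviations elsewhere, which \emph{help} since they further decrease $H$) is the delicate bookkeeping; everything else is a routine union bound plus the Chernoff estimate from \Lem~\ref{LemA2} to pass between $\phyp$ and its conditioning on the edge count.
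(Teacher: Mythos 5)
Your plan reduces the lemma to a first-moment union bound over overlap shapes: count the maps $\tau$ with $a(\sigma,\tau)=a$ and multiply by the probability that a fixed such $\tau$ is proper in $\phyp$, then show the resulting rate $\Psi(a)$ is negative whenever some $a_{ij}$ lies in the window $(\cols^{-1}\myconst,\cols^{-1}(1-\kappa))$. In principle this is a legitimate alternative route, but it replaces the lemma by a global maximisation of $\Psi$ over essentially the whole Birkhoff polytope minus neighbourhoods of its faces -- a problem of the same character and difficulty as the one the paper spends \Sec~\ref{Sec_max} on -- and the two-regime argument you offer for it does not work. In your ``far from flat'' regime the claim that the energy term cannot compensate a constant entropy deficit is false: with $c\asymp \cols^{\unif-1}\ln\cols$ the energy gain $c\,(\norm a_\unif^\unif-\cols^{2-2\unif})/(1-2\cols^{1-\unif}+\norm a_\unif^\unif)$ is of order $\ln\cols$ for near-diagonal matrices (e.g.\ many entries close to $1/\cols$), which dwarfs any $O(1)$ entropy deficit; this near-cancellation of entropy and energy is exactly the condensation phenomenon the paper is built around, and the displayed inequality $c\cols^{2-\unif}=\widetilde O_\cols(\cols)\ll\mathrm{const}\cdot\ln\cols$ is false on its face. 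In your ``near flat'' regime the Hessian argument is inapplicable: an entry in the window differs from $\cols^{-2}$ by $\Theta(1/\cols)$, i.e.\ by a factor $\sim\cols$, far outside the range where a second-order Taylor expansion of $H$ around $\bar a$ controls the function (there the true entropy change is $\Theta(\ln\cols/\cols)$, not what the quadratic form predicts), so in the relevant sense every bad matrix is ``far''. Moreover the genuinely delicate case -- $a_{11}$ just below $\cols^{-1}(1-\kappa)$ with the remaining diagonal entries near $1/\cols$ -- falls into your broken far regime: there the leading entropy and energy contributions (both $\asymp\unif\ln\cols$ per recoloured vertex) cancel, and negativity of the rate hinges on the $\ln^{20}\cols$ factor inside $\kappa$; your proposal neither isolates nor handles this. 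A smaller but real slip: the planted rate involves $c\ln\frac{1-2\cols^{1-\unif}+\norm a_\unif^\unif}{1-\cols^{1-\unif}}$, not $c\ln\bigl(1+\frac{\norm a_\unif^\unif-2\cols^{1-\unif}}{1-\cols^{1-\unif}}\bigr)$; the difference is $\approx\ln\cols$ in the exponent, in the direction that makes the bound look easier than it is.

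For comparison, the paper never solves this maximisation at all. It fixes the planted $\sigma$ and argues structurally in $\phyp$: \Lem~\ref{Lemma_independentSets} shows by a union bound over vertex subsets $S\subseteq\sigma^{-1}(1)$ (not over colorings or shapes) that no independent set of size $\sim n/\cols$ can intersect $\sigma^{-1}(1)$ in the middle part of the window, and \Lem~\ref{Lemma_manyNeighbors} handles the upper part by expansion: almost every vertex outside $\sigma^{-1}(1)$ lies in at least $15$ edges whose other vertices are all in $\sigma^{-1}(1)$, while only few edges join $\tau^{-1}(1)\setminus\sigma^{-1}(1)$ to $\sigma^{-1}(1)\setminus\tau^{-1}(1)$; combined with balancedness this forces $a_{11}(\sigma,\tau)>\cols^{-1}(1-\kappa)$, i.e.\ out of the window. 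If you want to pursue your entropy--energy route you would have to carry out a full optimisation of $\Psi$ over all bad overlap matrices (including the near-diagonal ones), with the $\kappa$-dependent polylog margin made explicit; as written, the analytic core of your proposal is missing.
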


\begin{lemma}\label{Cor_clusterSize}
With probability  $1-o(n^{-1/2})$ we have
	$|\cC(\phyp,\sigma)|\leq
				\Erw[\Zkb].$
\end{lemma}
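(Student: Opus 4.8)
The goal is to show that in the planted model $\plantedhyp$, with probability $1-\exp(-\Omega(n))$ the cluster $\cC(\phyp,\sigma)$ of the planted coloring has size at most $\Erw[\Zkb]$. The plan is to bound $\abs{\cC(\phyp,\sigma)}$ by a union bound over the possible overlaps $a=a(\sigma,\tau)$ that a coloring $\tau$ in the cluster can have with $\sigma$. By definition of the cluster, any such $\tau$ satisfies $a_{ii}(\sigma,\tau)>\cols^{-1}\myconst$ for every $i$, so the relevant overlaps lie in the region $\cD_{\mathrm{clust}}$ of (scaled doubly stochastic) matrices with large diagonal. First I would write $\Erw[\abs{\cC(\phyp,\sigma)}]=\sum_{a}N(a)\,\pr[\tau\text{ is a coloring of }\phyp]$, where $N(a)$ is the number of balanced $\tau$ with $a(\sigma,\tau)=a$: this is a multinomial coefficient contributing $\exp(nH(a\mid\sigma)+o(n))$ with $H(a\mid\sigma)$ the conditional entropy of $\tau$'s color given $\sigma$'s, and the probability that $\tau$ is a coloring in the planted model factorizes over edges, contributing roughly $(1-\psi(a))^{cn}$ for an appropriate ``doubly blocked'' probability $\psi(a)$ (the probability a random non-$\sigma$-monochromatic edge is $\tau$-monochromatic). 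Collecting terms, $\frac1n\ln\Erw[\abs{\cC(\phyp,\sigma)}]$ is governed by $\max_{a\in\cD_{\mathrm{clust}}}\Phi(a)$ for an explicit function $\Phi$.

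The heart of the argument is the analytic claim that this maximum is attained (uniquely, and with a gap) at the ``flat'' overlap $\bar a_{ij}=\cols^{-2}$, where $\Phi(\bar a)$ equals exactly $\frac1n\ln\Erw[\Zkb]$ as computed in Lemma~\ref{Lemma_DFGfirst}. In other words, within the cluster region the entropy–energy tradeoff is maximized by colorings that are essentially uncorrelated with $\sigma$ on each color class, so the cluster carries no more than the global first-moment mass. I would prove this by the same optimization technique used for the function $F$ over $\cS$ in~\cite{DFG} and generalized in the $\unif=2$ case in~\cite{Danny}: reduce via symmetry/convexity considerations to a low-dimensional problem (e.g.\ parametrize by the common diagonal value and the off-diagonal bulk), then verify that on the constrained domain $\{a_{ii}>\cols^{-1}\myconst\}$ the only stationary point with $\Phi\ge\Phi(\bar a)$ is $\bar a$ itself, with all nearby competitors—in particular the ``stable'' point $a_{\mathrm{stable}}$—strictly below. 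The constant $\myconst=(1.01/\unif)^{1/(\unif-1)}$ is chosen precisely so that the cluster region is small enough to exclude $a_{\mathrm{stable}}$ and any other troublesome local maxima; this is where the $1.01\ln\cols/\cols$ slack in $c$ and in Theorem~\ref{Thm_main} gets spent.

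Given the analytic bound $\max_{a\in\cD_{\mathrm{clust}}}\Phi(a)\le\Phi(\bar a)=\frac1n\ln\Erw[\Zkb]$, a crude union bound over the $\mathrm{poly}(n)$ many overlap matrices gives $\Erw[\abs{\cC(\phyp,\sigma)}]\le\exp(o(n))\,\Erw[\Zkb]$, which is weaker than what is claimed by a subexponential factor. To get the clean bound $\abs{\cC(\phyp,\sigma)}\le\Erw[\Zkb]$ with probability $1-\exp(-\Omega(n))$, I would instead show that the maximum of $\Phi$ over the cluster region minus a small ball around $\bar a$ is bounded away from $\Phi(\bar a)$ by a constant (using the strict local maximality and compactness), so that the contribution of overlaps far from $\bar a$ is $\exp(-\Omega(n))\Erw[\Zkb]$ in expectation and hence, by Markov, negligible with probability $1-\exp(-\Omega(n))$; while overlaps in the small ball around $\bar a$ are handled by a direct concentration argument—the number of $\tau$ with overlap near $\bar a$ that are colorings of $\phyp$ is a sum of indicators whose expectation is $(1-o(1))\Erw[\Zkb]$ and which concentrates because changing one edge of $\phyp$ changes the count by a controlled amount (a bounded-differences / Azuma argument in the edge-exposure filtration of the planted model). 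Combining the two regimes yields $\abs{\cC(\phyp,\sigma)}\le\Erw[\Zkb]$ w.h.p.\ with exponentially small failure probability. The main obstacle is unquestionably the analytic step: establishing that $\bar a$ is the strict global maximizer of $\Phi$ over the constrained region $\{a_{ii}>\cols^{-1}\myconst\}$, including ruling out boundary maxima and the near-diagonal ``crystalline'' competitors, and doing so with the specific choice of $\myconst$ and the stated range of $c$; this mirrors the most delicate part of~\cite{Danny} and requires careful, large-$\cols$ asymptotic estimates of the entropy and energy terms.
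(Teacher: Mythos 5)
Your plan misreads the geometry of the cluster, and this breaks the central analytic step. By definition, $\tau\in\cC(\phyp,\sigma)$ requires $a_{ii}(\sigma,\tau)>\cols^{-1}\myconst$ for every $i$, so the admissible overlaps are those strongly correlated with $\sigma$ (diagonal entries of order $1/\cols$, close to the identity-type overlaps). The flat overlap $\bar a$ with $\bar a_{ij}=\cols^{-2}$ does \emph{not} lie in this region at all, so your claim that the maximum of $\Phi$ over $\cD_{\mathrm{clust}}$ is attained at $\bar a$ with value $\frac1n\ln\Erw[\Zkb]$ cannot be correct: the maximizer lies near the identity overlap, where the relevant quantity is the ``internal'' entropy of the cluster (how many vertices can be recolored without leaving the high-overlap region), and the content of the lemma is precisely that this internal entropy rate, roughly $\cols^{1-\unif}\ln 2+\cols^{-\unif}\ln\cols$, falls strictly below $\ln\cols+c\ln(1-\cols^{1-\unif})$ for the stated range of $c$. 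Your proposal never engages with this comparison; instead it spends its effort excluding $a_{\mathrm{stable}}$ and near-diagonal competitors relative to $\bar a$, which is the second-moment issue (\Sec~\ref{Sec_max}), not the cluster-size issue. A secondary problem: the concentration step you sketch (bounded differences on the \emph{count} of colorings near $\bar a$) is unsound, since adding or deleting one edge changes such a count multiplicatively, not by an additive $O(1)$; and in any case a Markov-type bound on $\Erw|\cC(\phyp,\sigma)|$ alone cannot deliver the claim unless the exponent gap is established, which is exactly the missing piece.

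For comparison, the paper does not optimize over overlap matrices at all here. It proves a structural rigidity statement in the planted model: it constructs a ``core'' of vertices (via the peeling process \textbf{CR1}--\textbf{CR3}) comprising all but $O(n\cols^{1-\unif}\ln^{500\unif}\cols)$ vertices, shows (using separability, \Lem~\ref{Cor_manyNeighbors}) that every $\tau\in\cC(\phyp,\sigma)$ must agree with $\sigma$ on all $\sigma$-complete vertices, bounds the numbers of $1$-free and $2$-free vertices (\Lem~\ref{Lemma_coreConnect}), and then bounds $|\cC(\phyp,\sigma)|$ by $2^{|F_1\setminus(F_2\cup A_W)|}\cols^{|F_2\cup A_W|}$. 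Taking logarithms and comparing with $\frac1n\ln\Erw[\Zkb]=\ln\cols+c\ln(1-\cols^{1-\unif})$ gives the lemma; the $1.01\ln\cols/\cols$ slack in $c$ is spent exactly in this comparison. If you want to salvage an overlap-based first-moment route, you would have to prove that the planted free-energy restricted to the high-diagonal region is at most $\ln\cols+c\ln(1-\cols^{1-\unif})-\Omega(1)$, which amounts to redoing the core/free-vertex entropy count in analytic disguise; as written, your proposal does not contain that argument.
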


\noindent
\Prop~\ref{Prop_first} is immediate from \Lem s~\ref{Lemma_planted}--\ref{Cor_clusterSize}.

Much of the analysis in this section will involve random variables defined using the following edge counts.
For sets $X_1,X_2,X_3\subset\brk n$ and $\alpha\in[\unif]$, we let $m_{\alpha}(X_1,X_2,X_3)$ be the number of edges $e$ 
of $\phyp$
such that there exists $x\in X_1$ and $v_1,\dots v_\alpha\in X_2$ distinct from one another and from $x$, such that $x,v_1,\dots v_\alpha\in e$ and $e\setminus \{x,v_1,\dots v_\alpha\}\subseteq X_3$. 
If  $\alpha=\unif-1$ then we write $m_{k-1}(X_1,X_2)$ instead of $m_{k-1}(X_1,X_2,X_3)$, since $X_3$ has no effect in this case. For ease of notation, if
$X_1 = \{v\}$ we simply write $m_\alpha(v,X_2,X_3)$, or $m_{k-1}(v,X_2)$. We set $V_i=\sigma^{-1}(i)$ to ease the notational burden.  The following lemmas will be useful later.
Recall that $\kappa = q^{1-k}\ln^{20} q$, as in the definition of separability.


\begin{lemma}\label{ABsets}
	For  all sets $A,B\subseteq [n]$ such that $|A|,|B|\leq \frac{n\kappa}{e}$ we have $m_1(A,B,V_i)<20k(|A|+|B|)$ with probability $1-O(1/n)$.
\end{lemma}
\begin{proof}
	Fix sets $A,B$ such that $|A|=a, |B|=b$. Using (\ref{c-bounds}) and (\ref{pdef}), there exists a constant $\lambda \leq \kappa^{-2/3}$  such that
	\begin{align*}
	\mathbb{P}\left[	m_1(A,B,V_i)>20k(|A|+|B|)\right]&\leq {ab{|V_i|\choose k-2}\choose 20k(a+b)}p^{20k(a+b)}\leq \left(\frac{epab\, |V_i|^{k-2}}{20k(a+b)(k-2)!}\right)^{20k(a+b)}
	\leq \left(n^{-1}\lambda \cdot\frac{ab}{a+b}\right)^{20k(a+b)}.
	\end{align*}
Summing over all choices for $A,B$, it follows from the union bound that the probability that any such pair of sets exists is at most
	\begin{align*}
	\sum_{a,b=1}^{n\kappa\pa{/e}} {n\choose a}{n \choose b}\left(n^{-1}\lambda \cdot \frac{ab}{a+b}\right)^{20k(a+b)}&\leq\sum_{a,b=1}^{n\kappa\pa{/e}}\left(\frac{ne}{a}\right)^a\left(\frac{ne}{b}\right)^b\left(n^{-1}\lambda \cdot \frac{ab}{a+b}\right)^{20k(a+b)}.
	\end{align*}
Now
	\begin{align*}
	\left(\frac{ne}{a}\right)^a\left(\frac{ne}{b}\right)^b\left(n^{-1}\lambda \cdot \frac{ab}{a+b}\right)^{k(a+b)}&=\left[\frac{ne}{a}\left( n^{-1}\lambda \cdot \frac{ab}{a+b}  
	\right)^k \right]^a\left[\frac{ne}{b}\left(n^{-1}\lambda \cdot \frac{ab}{a+b}
	\right)^k\right]^b
	\\&\leq \left[\frac{ne}{a}\left( n^{-1}\lambda \cdot a  
	\right)^k \right]^a\left[\frac{ne}{b}\left(n^{-1}\lambda \cdot b
	\right)^k\right]^b
	\\&\leq\left[e\lambda^k\cdot \left( \kappa/e
	\right)^{k-1} \right]^a\left[e\lambda^k\cdot \left(\kappa/e
	\right)^{k-1}\right]^b<1.
	\end{align*}
Therefore
	\begin{align*}
	\sum_{a,b=1}^{n\kappa\pa{/e}} {n\choose a}{n \choose b}\left(n^{-1}\lambda\cdot \frac{ab}{a+b}\right)^{20k(a+b)}&\leq	\sum_{a,b=1}^{n\kappa\pa{/e}} \left(n^{-1}\lambda\cdot \frac{ab}{a+b}\right)^{19k(a+b)}
\leq n\kappa\sum_{a=1}^{n\kappa\pa{/e}} \left(n^{-1}\lambda\right)^{19ka}=O(1/n)
	\end{align*}
	where the last equality follows since the summand is decreasing in $a$ when  $a\leq  \frac{n\kappa}{e}$.
\end{proof}
\medskip
\begin{lemma}\label{AzLemma}
\pa{	For every $S\subseteq V$, define}
\begin{align*}\pa{
	B_S:=\{v\in V:m_1(v,S\cap V_j,V_j)>0\text{ for some }j\ne\sigma(v)\}}.
\end{align*}
\pa{	With probability $1-\exp\{-\Omega(n)\}$, every set $S$ of size $|S|\leq nq^{-9k}$ has $|B_S|\leq nq^{-6k}$.}
\end{lemma}
\begin{proof}
	\pa{Fix a subset $S\subseteq V$ of size at most $nq^{-9k}$ and take $v\in V\backslash S$, and some $j\ne\sigma(v)$. Now $m_1(v,S\cap V_j,V_j)$ is stochastically dominated by Bin$\left(|S|{|V_j|\choose k-2},p\right)$. Therefore, the union bound in conjunction with (\ref{c-bounds}), (\ref{pdef}) gives}
	\begin{align*}
	\pa{\textbf{P}\left[m_1(v,S\cap V_j,V_j)>0\text{ for some }j\ne \sigma(v)\right]}&\pa{\leq 1-q\cdot\exp\left\{-p\cdot|S|\cdot{\max_{j\ne\sigma(v)}|V_j|\choose k-2}\right\}
	}\\&\hspace{-1cm}\pa{\leq 1-q	\cdot\exp\left\{-(1+o(1))\cdot\frac{k!\cdot q^{k-1}\ln q }{n^{k-1}(1-q^{k-1})}\cdot \frac{(n/q)^{k-2}}{(k-2)!}
		\cdot\frac{n}{q^{9k}}\right\}
		\leq q^{-8k}.}
	\end{align*}
	\pa{With $B_S$ defined above, it follows that $\vert B_S\vert$ is stochastically dominated by Bin$(n,q^{-8k})$, and so from the Chernoff bound (see Lemma \ref{LemA2}), we have}
	\begin{align*}
	\pa{	\textbf{P}\left[\vert B_S\vert>nq^{-7k}\right]\leq
		\exp\left\{-nq^{-7k}\ln(q^k/e)\right\}\leq \exp\left\{-nq^{-7k}\right\}.
	}
	\end{align*}
	\pa{	Finally, for $\alpha \leq q^{-9k}$  let $X_\alpha$ be the number of sets $S$ of size $\alpha n$ such that $\vert B_S\vert\geq nq^{-7k}$. Then}
	\begin{align*}
	\pa{	\textbf{P}\left[X_\alpha>0\right]\leq 
		\textbf{P}[X_{q^{-9k}}>0]\leq {n\choose q^{-9k} n}\cdot\exp\left\{-nq^{-7k}\right\}
   \leq \exp\left\{-n\left(q^{-9k}\big(q^{2k} - qk\ln q - 1\big)\right)\right\}.}
	\end{align*}
	\pa{
Therefore  for sufficiently large $q$ we  have $\textbf{P}\left[X_\alpha>0\right]\leq \exp\left\{-\Omega(n)\right\}$. The claim follows from taking the union bound over all $\alpha \leq q^{-9k}$ such that $\alpha n$ is an integer: the number of terms in the summation is linear and so is absorbed by the exponential small probability.}
\end{proof}
\subsection{Separability: proof of \Lem~\ref{Cor_manyNeighbors}}\label{Sec_Separability}
Let $\tau:\brk n\to\brk\cols$ be a {balanced} map 
which is not separable: that is, for which there exist $i,j\in\brk\cols$ such that~(\ref{eqsep}) is violated.
Of course, we may assume without loss that $i=j=1$.
We aim to show that $\tau$ is unlikely to be a coloring of $\phyp$.
Clearly, {if $\tau$ is a coloring of $\phyp$}  
then $\tau^{-1}(1)$ is an independent set of size about $n/\cols$ that has a rather substantial intersection with the independent set $\sigma^{-1}(1)$.
Here, as for graphs, an independent set is a set of vertices which contains no edge.
The following lemma rules this constellation out for a wide range of intersection sizes.

\begin{lemma}\label{Lemma_independentSets}
With probability $1-\exp(-\Omega(n))$ the hypergraph $\phyp$ has 
	{no independent set} $I$ {of order $(1{+o(1)})\tfrac{n}{\cols}$} such that
\begin{align*}
	n^{-1}|I\cap\sigma^{-1}(1)|\in\left(\cols^{-1}(1.01/\unif)^{1/(\unif-1)},\cols^{-1}(1-\cols^{(1.01-\unif)/2}) \right).
	\end{align*}
\end{lemma}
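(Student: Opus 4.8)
The plan is to perform a union bound over all candidate independent sets $I$ of order $(1+o(1))n/q$, and to show that for each such $I$ the probability that $I$ is independent in $\phyp$ is small enough to beat the entropy of the number of choices for $I$. The key quantity governing the probability that a vertex set $I$ is independent is the number of potential edges contained in $I$, i.e.\ $\binom{|I|}{k}$ minus the monochromatic ones; but here we must be more careful because $\sigma$ is planted, so only non-monochromatic edges (under $\sigma$) are present, each with probability $p$ from~\eqref{pdef}. Thus $\pr[I\text{ independent in }\phyp]=(1-p)^{N(I)}$, where $N(I)$ is the number of non-monochromatic (under $\sigma$) $k$-subsets of $I$. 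The crux is to lower-bound $N(I)$ in terms of the parameter $s:=|I\cap\sigma^{-1}(1)|/n$, and to show this beats the number $\binom{n}{|I|}\cdot(\text{ways to distribute }I\text{ among colour classes})$ of choices.

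First I would parametrise: write $|I\cap V_i|=\beta_i n$ with $\sum_i\beta_i=|I|/n=1/q+o(1)$ and $\beta_1=s$ lying in the stated interval $\left(q^{-1}(1.01/k)^{1/(k-1)},\,q^{-1}(1-q^{(1.01-k)/2})\right)$. The number of $k$-subsets of $I$ that are non-monochromatic under $\sigma$ is $\binom{|I|}{k}-\sum_i\binom{\beta_i n}{k}$, and since $p\sim c k!/(n^{k-1}(1-q^{1-k}))$, we get
\begin{align*}
\pr[I\text{ ind.}]=\exp\!\brk{-p\bc{\binom{|I|}{k}-\sum_i\binom{\beta_i n}{k}}+o(n)}=\exp\!\brk{-\frac{cn}{1-q^{1-k}}\bc{q^{-k}-\sum_i\beta_i^k}+o(n)}.
\end{align*}
Against this we must weigh the entropy $\exp[nH]$ where $H=-\sum_i\beta_i\ln\beta_i-(1-|I|/n)\ln(1-|I|/n)$ (choosing which vertices of each colour class lie in $I$), but $|I|\approx n/q$ forces this to be maximised near $\beta_i\equiv q^{-2}$, giving $H\le \tfrac1q\ln q + O(\tfrac{\ln q}{q})$ roughly. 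So the whole union bound succeeds provided
\begin{align*}
H-\frac{c}{1-q^{1-k}}\bc{q^{-k}-\sum_i\beta_i^k}<-\Omega(1)
\end{align*}
uniformly over the relevant range of $\beta$. The plan is to fix $\beta_1=s$ and optimise over $\beta_2,\dots,\beta_q$ (which by convexity of $x\mapsto x^k$ and concavity of entropy are equal, $=\,(1/q - s)/(q-1)$ up to $o(1)$), reducing everything to a one-variable inequality in $s$, and then to verify this inequality on the two subintervals: near the left endpoint $s\approx q^{-1}(1.01/k)^{1/(k-1)}$ the gain comes from the $\binom{\beta_1 n}{k}$ term being large enough (this is exactly where the constant $1.01$ and the $k$-th root enter, mirroring the role of $\myconst$ in the definition of separability), while for larger $s$ up to $q^{-1}(1-q^{(1.01-k)/2})$ one uses that $c$ is close to $(q^{k-1}-1/2)\ln q$ so the energetic term $\frac{c}{1-q^{1-k}}(q^{-k}-\sum\beta_i^k)$ dominates the entropy.

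The routine parts are the Stirling/Chernoff estimates turning $\pr[I\text{ ind.}]$ and the count of candidate $I$ into clean exponentials, and the reduction to a single variable via convexity — these follow the template of \Lem~\ref{Lemma_DFGfirst} and the arguments of~\cite{Danny}. The main obstacle I expect is the one-variable analytic inequality itself: showing that $s\mapsto H(s)-\frac{c}{1-q^{1-k}}(q^{-k}-s^k-(q-1)((1/q-s)/(q-1))^k)$ stays bounded away from zero on the whole interval, \emph{uniformly in $n$ and with the right behaviour as $q\to\infty$}. This requires carefully tracking the $q$-dependence of every term (using the $O_q$ and $\widetilde O_q$ notation set up in the paper) and checking that the interval endpoints have been chosen precisely so that the inequality is tight nowhere in the interior — in particular the left endpoint $q^{-1}(1.01/k)^{1/(k-1)}$ must be shown to be below the point where the independent-set probability bound would otherwise fail, which is the hypergraph analogue of the delicate endpoint analysis in~\cite{Danny}. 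A secondary subtlety is handling the boundary case where $I$ is only approximately of size $n/q$ and approximately balanced in the other colours, which costs only $\exp(o(n))$ and is absorbed into the error term, but must be stated carefully so that the $\exp(-\Omega(n))$ conclusion is genuinely uniform.
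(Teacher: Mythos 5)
Your high-level plan (first-moment/union bound over candidate independent sets with the prescribed overlap, comparing an entropy term against an ``energy'' term $p\,N(I)$) could in principle be made to work, but as written it has two concrete defects, one of which is fatal to the stated inequality. First, the entropy bookkeeping is wrong: the number of sets $I$ with profile $(\beta_i)$ is $\prod_i\binom{n/\cols}{\beta_i n}$, whose exponential rate is $\cols^{-1}\sum_i\bigl(-\cols\beta_i\ln(\cols\beta_i)-(1-\cols\beta_i)\ln(1-\cols\beta_i)\bigr)$, because $I\cap\sigma^{-1}(i)$ must be chosen \emph{inside} the color class of size $\sim n/\cols$; your formula $H=-\sum_i\beta_i\ln\beta_i-(1-|I|/n)\ln(1-|I|/n)$ is the multinomial rate for an unconstrained partition of $[n]$ and overcounts by roughly $s\ln\cols/\cols$. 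This matters: the total energy available is at most $\frac{c}{1-\cols^{1-\unif}}\cols^{-\unif}\approx\ln\cols/\cols$ per vertex, while your $H$ is already $\approx(2-s)\ln\cols/\cols$ for a spread profile, so with your entropy the union bound fails for \emph{every} $s$ in the interval, not merely near the endpoints. Second, the reduction to a one-variable inequality ``by convexity of $x\mapsto x^\unif$ and concavity of entropy, the $\beta_2,\dots,\beta_\cols$ are equal'' is backwards: convexity of $x^\unif$ makes \emph{unequal} profiles more likely to be independent (it increases $\sum_i\binom{\beta_in}{\unif}$ and hence the survival probability), so entropy and energy pull in opposite directions and the maximizer of their sum over the simplex need not be symmetric --- this is exactly the single-row optimisation of Achlioptas--Naor/\cite{DFG} that requires lengthy arguments (or at least a case split on $s$, using $\beta_i\le(1-s)/\cols$ when $1-s$ is small and crude bounds otherwise). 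With the corrected entropy and an honest treatment of the profile the route can be pushed through (the right endpoint survives with the same thin margin of order $0.01\ln\cols$), but that is a considerably heavier calculation than the lemma needs.

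The paper avoids both issues by never taking a union bound over the part of $I$ outside $\sigma^{-1}(1)$. It fixes $S=I\cap\sigma^{-1}(1)$ with $|S|=sn/\cols$ and observes that independence forces every vertex of $I\setminus S$ into $V_0(S)=\{v\notin\sigma^{-1}(1):m_{\unif-1}(v,S)=0\}$, so the existence of $I$ implies $|V_0(S)|\ge(1-s+o(1))n/\cols$. Since $\pr[m_{\unif-1}(v,S)=0]\le 2\cols^{-\unif s^{\unif-1}}$ and $|V_0(S)|$ is dominated by a binomial, a Chernoff bound handles this event, and the union bound runs only over the choices of $S$ inside $\sigma^{-1}(1)$, whose entropy is merely $\approx(1-s)(1-\ln(1-s))n/\cols$ rather than $(1-s)n\ln\cols/\cols$. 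Everything then collapses to the single scalar condition (\ref{stablecond}), verified at the two endpoints and extended to the whole interval by convexity --- no optimisation over the colour profile of $I\setminus S$ is ever needed. If you want to salvage your approach, you must (i) replace your entropy by $\prod_i\binom{n/\cols}{\beta_in}$ and (ii) either carry out the profile maximisation rigorously or adopt the paper's $V_0(S)$ device, which renders it unnecessary.
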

\begin{proof}
Suppose that $I$ is an independent set with $|I| = \tfrac{n}{q}(1 + o(1))$ such that $S = I\cap \sigma^{-1}(1)$
contains $|S|= \frac{sn}{\cols}$ vertices, 
for some $s\in (0,1)$.
Then the set
\begin{align*}
	V_{0}(S):=\{v\in V\backslash \sigma^{-1}(1):m_{\unif-1}(v,S)=0\}
\end{align*}
contains $I\setminus S$.  
Observe that
	\begin{align*}
		\textbf{P}[m_{\unif-1}(& v,S)=0] 
		=\exp\left\{-p\binom{|S|}{\unif-1}\right\}\cdot{\big(1+O(1/n)\big)}
		=\exp\left\{- \frac{\unif c\left(s/\cols \right)^{\unif-1}}{(1-\cols^{1-\unif})}
			\right\}\cdot\big(1+O(1/n)\big)
		\leq 2q^{-ks^{k-1}}.
	\end{align*}
Let $n_0(S):=|V_{0}(S)|$, and observe that in order for $I$ to exist, the inequality
$n_0(S) >(1-s{+o(1)})\tfrac{n}{\cols}$ must hold.
Thus it suffices to prove that 
when $(1.01/k)^{1/(k-1)} < s < 1-q^{(1.01-k)/2}$, with probability $1-\exp(-\Omega(n))$ there is 
no subset $S\subseteq \sigma^{-1}(1)$ of
size $sn/q$ with $n_0(S) > (1-s+o(1))\tfrac{n}{q}$. 

Since $n_{0}(S)$ is stochastically dominated by $ \text{Bin}(|V\backslash\sigma^{-1}(1)|,2q^{-ks^{k-1}})$, we have by the Chernoff bound (see Lemma \ref{LemA2}) that
	\begin{align*}
		\textbf{P}\left[n_{0}(S)>(1-s{+o(1)})\tfrac{n}{\cols}\right]
			\leq \exp\left\{-(1-s+o(1))\tfrac{n}{\cols}\ln \left(\frac{1-s}
				{2q^{1-ks^{k-1}} e}\right) \right\}.
	\end{align*}
The number of choices for a subset $S$ of $\sigma^{-1}(1)$ of size $sn/q$ equals
\[ \binom{n/q(1+o(1))}{sn/q} \leq \left( \frac{e}{1-s}\right)^{(1-s+o(1))n/q} 
      = \exp\left\{(1-s+o(1))\tfrac{n}{q} \big(1-\ln(1-s)\right\},
\]
as established in~\cite[equation (A.5)]{Danny}.
Hence, by the union bound over $S$, the probability that such a subset $S$ exists with the desired
lower bound on $n_0(S)$ is at most
	\begin{align*}
		&\exp\left\{-\big(1-s+o(1)\big)\frac{n}{q}\left(\ln\left(\frac{1-s}{2q^{1-ks^{k-1}}e}\right)-1+\ln(1-s)\right)\right\}
		=\exp\left\{\big(1-s+o(1)\big)\frac{n}{q}\ln\left(\frac{2e^2}{q^{ks^{k-1}-1}(1-s)^2}\right)	\right\}.
	\end{align*}
This probability tends to zero if and only if 
\begin{align}\label{stablecond}
	\frac{\sqrt{2}\, e}{q^{(1-k s^{k-1})/2}} < 1-s.
\end{align}	
By convexity, the exponential function on the l.h.s.\ intersects the
linear function on the r.h.s.\ at most twice, and between these two points of 
intersection the linear function is largest.
For sufficiently large $\cols$, explicit calculation shows that the values $s=(1.01/\unif)^{1/(\unif-1)}$ and $s=1-q^{(1.01-k)/2}$ satisfy (\ref{stablecond}). \pa{Therefore, for fixed $s$ such that $(1.01/\unif)^{1/(\unif-1)}\leq s\leq 1-q^{(1.01-k)/2}$, the probability that the set $S$ exists is bounded by $\exp\left(-\Omega(n) \right)$. Finally, as there are only linearly many such choices for $s$ that make $sn/q$ an integer, this completes the proof.}
\end{proof}

\Lem~\ref{Lemma_independentSets} does not quite cover the entire interval of intersections required by (\ref{eqsep}).
To rule out the remaining {subinterval} $(\cols^{-1}(1-\cols^{(1-\unif)/2}),\cols^{-1}(1-\kappa))$ we use an expansion argument.
The starting point is the observation that most vertices that have color $1$ under $\tau$ but not under $\sigma$ are likely to occur in a good number of edges
in which {\em all} the $\unif-1$ other vertices are colored $1$ under $\sigma$. 
We have not attempted to optimise the constants in this lemma.

\begin{lemma}\label{Lemma_manyNeighbors}
Let $\tau:[n]\rightarrow[\cols]$ be a balanced map such that $a_{11}(\sigma,\tau)\in(\cols^{-1}(1-\cols^{(1.01-\unif)/2}),\cols^{-1}(1-\kappa))$.
With probability $1-\exp(-\Omega(n))$,
the random hypergraph $\phyp\in\plantedhyp$ has the following properties:
\begin{enumerate}
\item  The set $Y:=\{v\in V\setminus\sigma^{-1}{(1)}\, :\, m_{k-1}(v,\sigma^{-1}(1))<15\}$ has size at most
	$n\kappa/(3\cols)$.
\item The set {$U:=\tau^{-1}(1)\backslash(\sigma^{-1}(1)\cup Y)$}
satisfies
		$m_{1}(U,\sigma^{-1}(1)\setminus\tau^{-1}(1),\sigma^{-1}(1))\leq 5|\sigma^{-1}(1)\setminus\tau^{-1}(1)|$.
\end{enumerate}
\end{lemma}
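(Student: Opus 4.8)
The plan is to prove the two items separately, both by first-moment/concentration bounds on the edge-count random variables $m_{k-1}(\cdot,\cdot)$ and $m_1(\cdot,\cdot,\cdot)$ in the planted model $\plantedhyp$, where edges appear independently with probability $p=O(n^{1-k})$ as in \eqref{pdef}.

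For item (1): fix a vertex $v\in V\setminus\sigma^{-1}(1)=V\setminus V_1$. The number of potential edges through $v$ whose other $k-1$ vertices all lie in $V_1$ is $\binom{|V_1|}{k-1}\sim\binom{n/q}{k-1}$, so $m_{k-1}(v,V_1)$ is a sum of independent Bernoulli$(p)$ variables (one per such potential edge) with mean $\mu:=p\binom{|V_1|}{k-1}\sim kc(1/q)^{k-1}/(1-q^{1-k})$. Using the lower end of the range of $c$ in \eqref{c-bounds}, namely $c\geq(q^{k-1}-1/2)\ln q-2$, one computes $\mu\sim k\ln q$, which in particular exceeds $30$ once $q$ is large. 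Thus $\Pr[m_{k-1}(v,V_1)<15]\leq\Pr[\Bin(\cdot,p)<\mu/2]$, and by the Chernoff bound (\Lem~\ref{LemA2}) this is at most $\exp(-\Omega(\mu))=q^{-\Omega(k)}$. Since $\kappa=q^{1-k}\ln^{20}q$ is vastly larger than $q^{-\Omega(k)}$, we have $\Erw[|Y|]\leq n\cdot q^{-\Omega(k)}=o(n\kappa/q)$. To upgrade the expectation bound to ``w.p.\ $1-\exp(-\Omega(n))$'' I would note that $|Y|$ is a function of the independent edge-indicators, and either invoke a bounded-differences / Azuma argument, or more simply observe that the events $\{v\in Y\}$ for $v$ in disjoint color classes are ``almost independent'' and bound $\Pr[|Y|>n\kappa/(3q)]$ directly by a union bound over subsets of size $n\kappa/(3q)$ — the per-subset probability $(q^{-\Omega(k)})^{n\kappa/(3q)}$ against $\binom{n}{n\kappa/(3q)}$ choices gives the desired exponential decay since $q^{-\Omega(k)}$ beats any $e/\kappa$-type entropy factor when $q$ is large. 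This is exactly the style of computation in the proof of \Lem~\ref{Lemma_independentSets}.

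For item (2): condition on the hypergraph restricted to the event that item (1) holds (in particular that $Y$ is determined), and let $W:=\sigma^{-1}(1)\setminus\tau^{-1}(1)=V_1\setminus\tau^{-1}(1)$, which has size $|W|=(a_{11}(\sigma,\tau)\text{ complement in }V_1)\cdot n$; since $a_{11}(\sigma,\tau)\le q^{-1}(1-\kappa)$ we get $|W|\geq\kappa n/q$. We must bound $m_1(U,W,V_1)$, the number of edges containing one vertex of $U=\tau^{-1}(1)\setminus(V_1\cup Y)$, one vertex of $W$, and all remaining $k-2$ vertices in $V_1$. The expected number of such edges is $p\cdot|U|\cdot|W|\cdot\binom{|V_1|-1}{k-2}\sim kc\,(|U|/n)(|W|/n)(1/q)^{k-2}/(1-q^{1-k})\cdot n$. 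Using $c\sim q^{k-1}\ln q$ and $|U|\le n/q$, this is $\lesssim kc\,(|W|/n)\,q\cdot n\cdot q^{-(k-1)}=O(k\ln q)\cdot|W|$. Since $k\ln q$ is an absolute constant (with $k,q$ fixed) but we need the coefficient $5$, this crude bound is not quite enough; instead I would keep the dependence sharp: because $U\subseteq\tau^{-1}(1)$ and $\tau^{-1}(1)$ is (if $\tau$ is a coloring) an independent set, edges counted by $m_1(U,W,V_1)$ are constrained, and more importantly the right comparison is $\Erw[m_1(U,W,V_1)]\le p|U||W|\binom{|V_1|}{k-2}$ with $|U|\le n/q$, $|W|\le n/q$; plugging $p\sim kc\,k!/(n^{k-1}(1-q^{1-k}))$ — wait, $p\sim c\,k!/(n^{k-1}(1-q^{1-k}))$ — gives $\Erw[m_1]\lesssim c\,(1/q)^{k}\,n\,/(1-q^{1-k})\sim\ln q\cdot n/q \le |W|\cdot(\ln q)/\kappa\cdot q^{?}$... the clean route is: $\Erw[m_1(U,W,V_1)]/|W| \lesssim p|U|\binom{|V_1|}{k-2}\sim c(1/q)^{k-1}\cdot(k-1)/(1-q^{1-k})\cdot$(a bounded factor)$\sim(k-1)\ln q\cdot q^{?}$. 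I expect the honest computation to show $\Erw[m_1(U,W,V_1)]\le |W|$ (coefficient $1<5$) because of the extra factor $|U|/n\le 1/q$ and $|V_1|/n\le 1/q$ pushing in two powers of $1/q$ against only one power of $c\sim q^{k-1}\ln q$, leaving something like $\ln q/q\to 0$; then Chernoff (\Lem~\ref{LemA2}, the ``$t\mu$'' form) with $t=5$ gives $\Pr[m_1>5|W|]\le\exp(-5|W|\ln(5/e))=\exp(-\Omega(\kappa n/q))=\exp(-\Omega(n))$ since $\kappa n/q=\Omega(n)$.

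The main obstacle I anticipate is \emph{getting the constants right in item (2)} and, relatedly, making sure the union bound over all candidate sets $U,W$ (there are $\exp(\Theta(n))$ of them, since $|U|,|W|=\Theta(n)$) is dominated by the $\exp(-\Omega(n))$ per-instance failure probability. For item (1) the analogous union bound over subsets of size $n\kappa/(3q)$ is easy because the per-vertex failure probability $q^{-\Omega(k)}$ is tiny relative to the entropy cost $O(\kappa\ln(1/\kappa))$. For item (2), however, the per-instance failure probability $\exp(-\Omega(\kappa n/q))$ must beat $\binom{n}{|U|}\binom{n}{|W|}=\exp(\Theta(n))$, so a naive union bound fails; the fix — exactly as in \cite{Danny} — is that we do \emph{not} union over all $U,W$: the set $Y$, hence $U$ (given $\tau$), and $W$ (given $\tau$) are determined once $\tau$ and the hypergraph are fixed, and $\tau$ itself is not being quantified over inside this lemma (the lemma fixes $\tau$). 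So the union bound is only over the $O(\exp(\kappa n))$ balanced maps $\tau$ with the prescribed overlap value — and $\kappa=q^{1-k}\ln^{20}q$ is small enough that $\exp(\kappa n)$ is absorbed by $\exp(-\Omega(\kappa n/q))$ provided the implied constant in the latter beats $1/q$... I would double-check this balance carefully, as it is the crux, and if needed strengthen the ``$15$'' and ``$5$'' thresholds or use a finer exponential bound to create enough slack. Everything else is a routine Chernoff-plus-union-bound computation of the kind already carried out in the proofs of \Lem s~\ref{Lemma_independentSets} and \ref{Cor_manyNeighbors}.
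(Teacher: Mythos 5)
There are genuine gaps in both items, even though your overall shape (Chernoff bounds in the planted model) is right. For item (1), your per-vertex estimate is too lossy: with $\mu=\binom{|V_1|}{\unif-1}p\approx \unif\ln\cols$, the bound $\pr[m_{k-1}(v,V_1)<15]\leq\pr[\Bin<\mu/2]\leq\exp(-\Omega(\mu))$ only gives $\cols^{-c\unif}$ for a small absolute constant $c$ (from \Lem~\ref{LemA2}, $c=\phi(-1/2)\approx0.15$), and this is \emph{much larger} than the target $\kappa/(3\cols)\approx \cols^{-\unif}\ln^{20}\cols$; your assertion that ``$\kappa$ is vastly larger than $\cols^{-\Omega(\unif)}$'' is exactly where the argument breaks. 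You need the full lower tail down to the constant $15$, i.e.\ $\pr[m_{k-1}(v,V_1)<15]\leq(1-p)^{\binom{|V_1|}{\unif-1}-14}\sum_{j\leq14}\bigl(\binom{|V_1|}{\unif-1}p\bigr)^j/j!=O\bigl((\unif\ln\cols)^{14}\cols^{-\unif}\bigr)$, which is what the paper computes; only then is $\Erw|Y|\ll n\kappa/(3\cols)$. (Concentration is then immediate and needs neither Azuma nor a union bound over subsets: for distinct $v\notin V_1$ the potential edges counted by $m_{k-1}(v,V_1)$ are disjoint, so the events $\{v\in Y\}$ are exactly independent, $|Y|$ is dominated by a binomial, and \Lem~\ref{LemA2} applies directly.)

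Item (2) is where the main ideas are missing. First, your expectation bound is wrong: using only $|U|\leq n/\cols$ gives $p\,|U|\,|R|\binom{|V_1|}{\unif-2}\approx \unif(\unif-1)\ln\cols\cdot|R|$ (your power count ``$\ln\cols/\cols$'' drops a factor $\cols$), which is far above the target $5|R|$. The indispensable ingredient, which you never invoke, is the hypothesis $a_{11}(\sigma,\tau)>\cols^{-1}(1-\cols^{(1.01-\unif)/2})$ together with balancedness of $\tau$: writing $|\sigma^{-1}(1)\cap\tau^{-1}(1)|=sn/\cols$, these force $|\tau^{-1}(1)\setminus\sigma^{-1}(1)|\leq(1+o(1))(1-s)n/\cols$ with $1-s\leq \cols^{(1.01-\unif)/2}$, and this extra factor is what makes the mean $o(|R|)$. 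Second, your treatment of the randomness of $U$ (conditioning on $Y$, then worrying about a union bound over exponentially many candidate sets) both distorts the edge distribution and is unnecessary: since $U\subseteq T:=\tau^{-1}(1)\setminus\sigma^{-1}(1)$ deterministically, one has $m_1(U,R,V_1)\leq m_1(T,R,V_1)$ pointwise, and $T$ is a fixed set once $\sigma,\tau$ are fixed, so $m_1(T,R,V_1)$ is an honest binomial with mean at most $(1+o(1))|R|(1-s)\unif^2\ln\cols=o(|R|)$; Chernoff at threshold $5|R|$ then yields failure probability $\exp(-\Omega(|R|))=\exp(-\Omega(n))$ because $|R|\geq(1+o(1))\kappa n/\cols=\Omega(n)$. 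Finally, within this lemma $\tau$ is fixed, so no union bound over $\tau$ arises at all; the balance you flag as ``the crux'' belongs (if anywhere) to the deduction of \Lem~\ref{Cor_manyNeighbors}, and you leave it unresolved, while the actual crux of the present statement is the use of the overlap hypothesis to bound $|T|$.
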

\begin{proof}
By assumption, $|\sigma^{-1}(1)\cap \tau^{-1}(1)| = \tfrac{sn}{q}$ where
$s\in (1-q^{(1.01-k)/2},\, 1 - \kappa)$.
Fix $v\in V\setminus{V_1}$. We know that 
$$m_{k-1}( v,V_1)\sim \text{Bin}\left(\binom{|V_1|}{\unif-1},p\right).$$ 
Therefore
	\begin{align*}
		\textbf{P}\left[m_{k-1}( v,{V_1})<15\right]
			\leq \sum_{j=0}^{14} \binom{\binom{|V_1|}{\unif-1}}{j}\, p^j(1-p)^{\binom{|V_1|}{\unif-1}-j}
			\leq (1-p)^{\binom{|V_1|}{\unif-1}-14}\sum_{j=0}^{14} \frac{\left(\binom{|V_1|}{\unif-1} p\right)^j}{j!}. 
	\end{align*}
Combining (\ref{pdef}) with the lower bound from (\ref{c-bounds}) shows that
$\binom{|V_1|}{\unif-1}p>\unif\ln \cols$,  which in turn implies that
$$
	\textbf{P}\left[m_{k-1}( v,{V_1})<15\right]\leq 
	{3\left(\unif\ln \cols\right)^{14}\cols^{-\unif}}.$$
As the event $\{m_{k-1}( v,{V_1})< 15\}$ occurs independently for all $v\in V\backslash {V_1}$, the total number $Y$ of such vertices is stochastically dominated by Bin$(n(1-1/\cols),{3\left(\unif\ln \cols\right)^{14}\cols^{-\unif}})$. Therefore $\textbf{E}[Y]\leq n\cdot{3\left(\unif\ln \cols\right)^{14}\cols^{-\unif}}$. 
Finally, by the Chernoff bound (see Lemma \ref{LemA2}) and using the definition of $\kappa$ from (\ref{eqsep}),
	\begin{align*}
		\textbf{P}[Y> n\kappa/(3\cols)]\leq \exp\{\pa{-} n\kappa/(3\cols)\}=\exp\{-\Omega(n)\}
	\end{align*}
and so the proof of $(i)$ is complete. 

For notational convenience, we write $R=\sigma^{-1}(1)\backslash \tau^{-1}(1)$ and $T=\tau^{-1}\pa{(1)}\,\backslash\,\sigma^{-1}(1)$.
Observe that
$m_{1}(U,R,V_1)$  is stochastically dominated by $m_1(T, R, V_1)$, since $U$ is a subset of $T$.
Furthermore, $m_1(T,R,V_1)$ \pa{is stochastically dominated by} Bin$\left(|T| |R|\binom{|V_1|}{\unif-2},p\right)$. Therefore \pa{since $q$ is large with respect to $k$ and $c<q^{k-1}\ln q$, it follows that}
	\begin{align*}
	   \pa{ \binom{|V_1|}{\unif-2}p\leq (1+o(1))\cdot\left(\frac{n}{q}\right)^{k-2}\hspace{-2mm}\cdot \frac{ck(k-1)}{n^{k-1}(1-q^{1-k})}\leq 
	   	 (1+o(1))\cdot n^{-1} q\ln q\cdot \frac{k(k-1)}{(1-q^{1-k})}\leq 
	   	 n^{-1}k^2q\ln q,}
	\end{align*}
\pa{and so}
	\begin{align*}
  \textbf{E}[m_{1}(T,R,V_1)]
  \leq |T| |R| \binom{|V_1|}{\unif-2}p
  & \leq (1+o(1))\, |R|\cdot\,
    \frac{n(1-s)}{q}\,\cdot n^{-1}k^2q\ln q\\
       &= (1+o(1))\,|R|\cdot(1-s) k^2\ln q. 
 \end{align*}
Finally, as $\kappa\leq 1-s\leq q^{(1.01-k)/2}$, part (ii) follows from the Chernoff Bound.
\end{proof}

\begin{proof}[Proof of \Lem~\ref{Cor_manyNeighbors}]
Suppose that $\tau$ is a balanced map such that $a_{11}(\sigma,\tau) > q^{-1}\myconst$.  
By Lemma~\ref{Lemma_independentSets}, we may assume that $a_{11}(\sigma,\tau) > q^{-1}(1 - q^{1.01 - k)/2})$.
With $U,Y$ as in \Lem~\ref{Lemma_manyNeighbors}
we have that 
\begin{align*}
	15|U|\leq m_{1}(U,{R},{V_1})\leq 5|\sigma^{-1}(1)\setminus\tau^{-1}(1)|,
	\end{align*}
and so $|U| \leq \tfrac{1}{3} |\sigma^{-1}(1)\setminus \tau^{-1}(1)|
\pa{\,\sim \tfrac{n}{3q}-\tfrac{1}{3}\vert\sigma^{-1}(1)\cap\tau^{-1}(1)\vert}$.
Since $\tau$ is balanced, we have
\begin{align*}\tfrac{n}{q}\sim |\tau^{-1}(1)| \,\pa{\leq}\, |\sigma^{-1}(1)\cap \tau^{-1}(1)| + |U|+|Y|.\end{align*}
Substituting our bound on $\vert U\vert$ from above, and using 
\Lem~\ref{Lemma_manyNeighbors}, implies that
 $na_{11}(\sigma,\tau)=|\sigma^{-1}(1)\cap\tau^{-1}(1)|> {n(1-\kappa)/q}$, as 
required. ({The failure probability $e^{-\Omega(n)}$ equals the sum of the
failure probabilities from Lemma~\ref{Lemma_independentSets} and Lemma~\ref{Lemma_manyNeighbors}.})
\end{proof}

\subsection{The cluster size: proof of \Lem~\ref{Cor_clusterSize}}\label{Sec_clusterSize}
To upper bound the cluster size we will exhibit a large ``core'' of vertices of $\phyp$ that are difficult to recolor.
More specifically, the core will consist of vertices $v$ such that for every color $i\neq\sigma(v)$ there are several edges $e$ {containing $v$} such that  
$e\setminus\cbc v\subset V_i$ and such that all vertices of $e$ belong to the core.
Therefore, if we attempt to change the color of $v$ to $i\neq\sigma(v)$, then it will be necessary to recolor several other vertices of the core.
In other words, recoloring a single vertex in the core leads to an avalanche that will stop only once at least 
$n\cols^{-1}\myconst$ vertices in some color class have been recolored.
Hence, the outcome is a coloring that does not belong to $\cC(\phyp,\sigma)$.

Formally, {given a fixed balanced map $\sigma$ and fixed hypergraph $\fhyp$, the core $V_{\text{core}}$ of $\fhyp$ is defined as the largest subset $V'\subseteq [n]$ of vertices} such that
		\begin{align*}
		m_{k-1}( v,V_i\cap V') \geq  100\unif
\,\,\,
   \text{ for all }\,\, v\in V'\,\, \text{ and all }\,\, i\ne\sigma(v).
	\end{align*}
The core is well-defined; for if $V',V''$ are sets with the property, then so is $V'\cup V''$.

\begin{lemma}\label{Lemma_core}
With probability $1-o(n^{-1/2})$ the random hypergraph $\phyp$ has the following two properties:
\begin{enumerate}
\item[(i)] The core of $\phyp$ {contains} at least $(1-\cols^{1-\unif}\ln^{500\unif}\cols)n$ {vertices}.
\item[(ii)] If $\tau$ is a {balanced} coloring of $ \phyp$ such that $\tau(v)\neq\sigma(v)$ for some $v$ in the core, then $\tau\not\in\cC(\phyp,\sigma)$.
\end{enumerate}
\end{lemma}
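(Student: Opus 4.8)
The plan is to establish part (i) by a first-moment/union-bound argument on "bad" sets, and then to deduce part (ii) as an almost purely combinatorial consequence of the definition of the core together with the separability statement already in hand. For part (i), the natural approach is to bound the probability that some set $W$ of size roughly $tn$ with $t=\cols^{1-\unif}\ln^{500\unif}\cols$ is disjoint from the core, i.e.\ that $[n]\setminus W$ fails the core property. If $[n]\setminus W$ is not the core, there is a vertex $v\notin W$ and a color $i\neq\sigma(v)$ with $m_{k-1}(v,V_i\setminus W)<100\unif$. So it suffices to show: w.h.p.\ there is no set $W$ with $|W|\le tn$ containing, for every $v\notin W$ witnessing failure, all but $<100\unif$ of the edges through $v$ landing in some $V_i$. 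First I would show that a typical vertex $v$ has $m_{k-1}(v,V_i)\ge\unif\ln\cols\cdot\omega$ for a slowly growing $\omega$ and every $i\ne\sigma(v)$ — this follows from the Chernoff bound exactly as in the proof of \Lem~\ref{Lemma_manyNeighbors}(i), since $\binom{|V_i|}{\unif-1}p>\unif\ln\cols$ from (\ref{c-bounds}) and (\ref{pdef}). The few exceptional vertices (where some $m_{k-1}(v,V_i)$ is small) number at most $O(n\cdot(\ln\cols)^{O(1)}\cols^{-\unif})=o(tn)$ w.h.p.; remove them. For the remaining vertices, a vertex can only fail the core condition after removing $W$ if at least, say, $\unif\ln\cols\cdot\omega-100\unif$ of its $(k-1)$-edges into some $V_i$ hit $W$. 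Summing over choices of $W$ of size $tn$ and the corresponding edges, and using that the number of such edges forced to meet $W$ grows like $|W|$ times a large logarithmic factor, the union bound gives a probability bounded by $\binom{n}{tn}\cdot(\text{poly})^{tn}\cdot(\text{edge-hitting probability})$, and the edge-hitting probability per vertex is polynomially small in $\cols$, enough to beat $\binom{n}{tn}\approx\exp(O(tn\ln(1/t)))=\exp(O(tn\ln\cols))$. I would choose the exponent $500\unif$ in $t$ precisely so that $\cols^{1-\unif}\ln^{500\unif}\cols$ makes this comparison go through with room to spare, yielding $1-\exp(-\Omega(n))$, hence $1-o(n^{-1/2})$.

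For part (ii), fix a balanced coloring $\tau$ of $\phyp$ and suppose $\tau(v)\ne\sigma(v)$ for some core vertex $v$. The claim is $\tau\notin\cC(\phyp,\sigma)$, i.e.\ $\min_i a_{ii}(\sigma,\tau)\le\cols^{-1}\myconst$. Suppose not: then $a_{ii}(\sigma,\tau)>\cols^{-1}\myconst$ for every $i$, so for each color $i$ the set $D_i=\{u\in V_i:\tau(u)\ne i\}$ is small, $|D_i|<(1-\myconst)n/\cols$. The key combinatorial observation is that if $v$ is a core vertex with $\tau(v)=j\ne\sigma(v)=:i$, then among the $\ge100\unif$ edges $e\ni v$ with $e\setminus\{v\}\subseteq V_j\cap V_{\mathrm{core}}$, each such edge is monochromatic-in-$\tau$ unless some other vertex $u\in e$ has $\tau(u)\ne j$; but those other vertices lie in $V_j$, so $u\in D_j$. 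Since $\tau$ is proper, every such edge contains a vertex of $D_j\cap V_{\mathrm{core}}$; in particular $D_j\cap V_{\mathrm{core}}\ne\emptyset$, and in fact $v$ has $\ge100\unif$ core-edges each hitting $D_j$. This lets me run an expansion/avalanche argument: the set of core vertices recolored by $\tau$, call it $D=\bigcup_i D_i\cap V_{\mathrm{core}}$, must "expand" — every $v\in D$ forces many core-neighbours (via these $(k-1)$-edges) to also lie in $D$. I would formalize this as: the core restricted to $D$ has a high-min-degree property with respect to the auxiliary edges, while at the same time each $D_i$ is required to be small, and w.h.p.\ (a statement of the same flavour as \Lem~\ref{Cor_manyNeighbors} / \Lem~\ref{Lemma_independentSets}) $\phyp$ has no such "medium-sized highly-expanding set" — any set every vertex of which has $\ge100\unif$ controlled edges inside a union of color classes must either be empty or take up a $(1-o(1))$ fraction of some class. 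Since $D\ne\emptyset$ (it contains $v$), we conclude some $D_i$ is large, i.e.\ $a_{ii}(\sigma,\tau)\le\cols^{-1}\myconst$ after all — contradiction. Concretely I expect the cleanest route is: invoke separability (\Lem~\ref{Cor_manyNeighbors}), which already forces every $a_{ij}(\sigma,\tau)$ to avoid the interval $(\cols^{-1}\myconst,\cols^{-1}(1-\kappa))$; combined with the avalanche bound showing $D$ cannot be both nonempty and have every $a_{ii}>\cols^{-1}\myconst$, we are done.

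The main obstacle is part (i): getting the union bound over all "core-avoiding" sets $W$ to close with the precise size $t=\cols^{1-\unif}\ln^{500\unif}\cols$. The delicate point is that a vertex outside the core fails only because of the removal of $W$, so one is really bounding the probability that the random hypergraph has many vertices each of which has "almost all" of its $\approx\unif\ln\cols$ relevant edges confined to a small common target set $W$ — this is a second-moment-ish correlation across vertices, and one must be careful that the edges counted for different vertices $v$ can share vertices. I would handle this by the standard device of fixing $W$ first, computing for each $v\notin W$ the probability that $\ge c\unif\ln\cols$ of its $(k-1)$-subsets into some $V_i$ are entirely within $W$ (a binomial tail, since edges appear independently in $\phyp$), noting these events are independent across $v$ given $W$, and then paying $\binom{n}{tn}$ for the choice of $W$; the arithmetic is exactly the kind carried out in \cite{Danny} for graphs and I would follow that template, tracking the extra factor $\unif!$ and the shift from degree to $(k-1)$-degree. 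A secondary (minor) obstacle is bookkeeping the finitely many "exceptional" vertices so that they contribute $o(tn)$ and do not spoil the core bound; this is routine given the per-vertex estimate $\pr[m_{k-1}(v,V_i)<100\unif]\le(\ln\cols)^{O(1)}\cols^{-\unif}$ from the Chernoff computation already used in \Lem~\ref{Lemma_manyNeighbors}.
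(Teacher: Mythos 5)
Your proposal is correct and follows essentially the same route as the paper: for (i) the paper's peeling process \textbf{CR1}--\textbf{CR3} (low-degree vertices $W$ handled by Chernoff, then the sets $U,Z$ controlled by a union bound over medium-sized sets that would have to support too many almost-monochromatic edges) is exactly your ``exceptional vertices plus avalanche union bound'', and for (ii) the paper likewise combines separability (to make each recolored set $\Delta_i^\pm$ of size at most $\kappa n/\cols$) with the count $100|\Delta_i^+|\le m_1(\Delta_i^+,\Delta_i^-,V_i)\le|\Delta_i^-|$ summed over $i$. One small correction: the exceptional vertices number $\widetilde{O}_\cols(\cols^{1-\unif})\,n$ (the per-colour probability $\widetilde{O}_\cols(\cols^{-\unif})$ must be summed over the $\cols-1$ colours $i\neq\sigma(v)$), so they are the dominant rather than a negligible part of the core's complement --- which is precisely why the statement carries the factor $\ln^{500\unif}\cols$.
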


\noindent
We proceed to prove \Lem~\ref{Lemma_core}.
To estimate the 
size of the core we consider the following process:
\begin{enumerate}
	\item[\textbf{CR1}] For $i,j\in [\cols]$ and $i\ne j$, let $W_{ij}=\left\{v\in V_i:
	m_{k-1}( v,V_j)< 300\unif  \right\}$, $W_{i}=\cup_{j:j\ne i} W_{ij}$, $W=\cup_i W_i$. 
	\ \item[\textbf{CR2}] For $i\ne j$ let $U_{ij}=\left\{v\in V_i:
	m_{1}( v,W_j,V_j)>100\unif\right\}$, and $U=\cup_{i\ne j}U_{ij}$.
	\item[\textbf{CR3}] Set $Z^{(0)}=U$ and repeat the following for $\ell\in \mathbb{N}$,
	\begin{itemize}
		\item if there is a $v\in V\backslash Z^{(\ell)}$ such that 
			$m_{1}( v,Z^{(\ell)},V_j)>100\unif$ for some $j\ne\sigma(v)$ then 
			take one such $v$ and let 
			$Z^{(\ell+1)}=Z^{(\ell)}\cup \{v\}$;
		\item otherwise, set $Z^{(\ell+1)}=Z^{(\ell)}$.
	\end{itemize}
\end{enumerate}
Let $Z=\cup_{\ell \geq 0}Z^{(\ell)}$ be the final set resulting from \textbf{CR3}. 

\begin{claim}
\label{core-contain}
The set $V\backslash (W\cup Z)$ is contained within the core.
\end{claim}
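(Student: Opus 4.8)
The plan is to verify directly that $V^\ast:=V\setminus(W\cup Z)$ satisfies the property defining the core, and then to invoke the maximality of $V_{\mathrm{core}}$ (together with the already-noted fact that this property is closed under unions) to conclude $V^\ast\subseteq V_{\mathrm{core}}$. Thus the only thing to establish is: for every $v\in V^\ast$ and every color $i\ne\sigma(v)$ one has $m_{k-1}(v,V_i\cap V^\ast)\ge100\unif$.

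First I would fix such a $v$, write $j=\sigma(v)$, and compare $m_{k-1}(v,V_i\cap V^\ast)$ with $m_{k-1}(v,V_i)$. The edges counted by the latter but not the former are exactly the edges $e\ni v$ with $e\setminus\{v\}\subseteq V_i$ that additionally meet $W\cup Z=V\setminus V^\ast$; since any such meeting vertex lies in $V_i$ and $W\cap V_i=W_i$, each of these ``lost'' edges is accounted for by $m_1(v,W_i,V_i)$ (if it meets $W_i$) or by $m_1(v,Z,V_i)$ (if it meets $Z$). This gives $m_{k-1}(v,V_i\cap V^\ast)\ge m_{k-1}(v,V_i)-m_1(v,W_i,V_i)-m_1(v,Z,V_i)$.

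Next I would bound the three right-hand terms using membership facts for $v$. From $v\notin W$, hence $v\notin W_{ji}$, and step \textbf{CR1}, we get $m_{k-1}(v,V_i)\ge300\unif$. From $v\notin Z$, hence $v\notin Z^{(0)}=U$, so $v\notin U_{ji}$, and step \textbf{CR2}, we get $m_1(v,W_i,V_i)\le100\unif$. Finally, from $v\notin Z$ together with the termination of the process \textbf{CR3} (no eligible vertex remains, applied to the color $i\ne\sigma(v)$), we get $m_1(v,Z,V_i)\le100\unif$. Substituting these into the displayed inequality yields $m_{k-1}(v,V_i\cap V^\ast)\ge300\unif-100\unif-100\unif=100\unif$, as required.

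The argument is short, and the only point needing care is the bookkeeping in the second step: one must check that every edge destroyed by passing from $V_i$ to $V_i\cap V^\ast$ is captured by one of the two $m_1$-counts (over-counting is harmless here, so the bound survives), which rests on the identities $V\setminus V^\ast=W\cup Z$ and $W\cap V_i=W_i$, and on unwinding the definition of $m_1(v,\cdot,\cdot)$. I do not expect a genuine obstacle in this claim; the real content of \Lem~\ref{Lemma_core} lies elsewhere, namely in bounding $|W\cup Z|$ (part (i)) and in the avalanche argument for part (ii).
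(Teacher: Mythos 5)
Your proposal is correct and follows essentially the same route as the paper: both verify that every $v\in V\setminus(W\cup Z)$ has $m_{k-1}(v,V_i\cap(V\setminus(W\cup Z)))\geq 100\unif$ for each $i\neq\sigma(v)$ by starting from $m_{k-1}(v,V_i)\geq300\unif$ (from \textbf{CR1}) and subtracting the at most $100\unif$ edges meeting $W_i$ (from \textbf{CR2}) and the at most $100\unif$ edges meeting $Z$ (from the termination of \textbf{CR3}), and then conclude via the union-closure of the defining property and maximality of the core. The only cosmetic difference is that you subtract both loss terms at once (tolerating over-counting) where the paper peels them off sequentially.
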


\begin{proof} 
For a contradiction, let $v\in V\setminus(W\cup Z)$. 
Since $W_j\subseteq V_j$, any edge counted by $m_{k-1}(v,V_j)$ that does not contribute to 
$m_{1}( v,W_j,V_j)$ must have empty intersection with $W_j$. Since $m_{k-1}( v,V_j)\geq 300\unif$ 
but $m_{1}( v,W_j,V_j)\leq 100\unif$, we must have that $m_{k-1}( v,V_j\backslash W_j)\geq 200\unif$. 
Similarly, since $v\not\in Z$ we have
\[ m_{1}( v,Z\cap V_j,V_j)\leq m_1(v,Z,V_j)\leq100\unif,\]
 and therefore $m_{k-1}( v,V_j\backslash (W_j\cup Z)\pa{)}\geq100\unif$. 
Furthermore, this statement holds for all $j\ne \sigma(v)$ and all $v\in V\setminus (W\cup Z)$. 
It follows that the entire set $V\setminus (W\cup Z)$ may be added to the core, which contradicts
maximality unless $V\backslash (W\cup Z)\subseteq V_{\text{core}}$, as required.
\end{proof}

\noindent
We now bound the size of $W,U$ and $Z$.

\begin{claim}\label{sizeofWij}
	Define the function $Q(\cols,\unif)=\cols^{-\unif-1}\ln^{400\unif}\cols$. With probability at least $1-\exp\{-\Omega(n)\}$ we have $|W_{ij}|\leq n\cdot Q(\cols,\unif)$ for all distinct $i,j\in[q]$.
\end{claim}
\begin{proof}
	Fix $v\in V_i$. Due to the independence of edges in ${\plantedhyp}$ we know that $m_{k-1}( v,V_j)$  is distributed binomially with mean $\binom{|V_j|}{\unif-1}p(1+{o(1)})\geq \unif\ln \cols+
O_\cols(\cols^{-1})$.  It follows from Lemma \ref{LemA2} that $\textbf{P}(v\in W_{ij})\leq
\frac{\cols}{3}\cdot Q(\cols,\unif)$ for $v\in {V_i}$ and sufficiently large $\cols$. Therefore  
	$\textbf{E}[|W_{ij}|] \,\pa{\leq}\, \tfrac{n}{3}\cdot Q(\cols,\unif)$. Finally, since $|W_{ij}|$ is distributed binomially, a straightforward application of the Chernoff bound shows that
$
	\textbf{P}[|W_{ij}|\geq n\cdot Q(\cols,\unif)]\leq \exp\{-n\cdot Q(\cols,\unif)\ln(3/e)\} = \exp\{-\Omega(n)\}$.
\end{proof}

\begin{claim}\label{sizeofu}
	We have $|U|\leq n/\cols^{10\unif}$ with probability at least $1-\exp\{-\Omega(n)\}$.
\end{claim}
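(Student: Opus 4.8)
The plan is to bound $|U|$ by first controlling each $|U_{ij}|$ separately and then using a union bound over the $O(\cols^2)$ pairs. Recall from \textbf{CR2} that $U_{ij}=\{v\in V_i:m_1(v,W_j,V_j)>100\unif\}$. Since $m_1(v,W_j,V_j)$ counts edges $e\ni v$ with exactly one vertex in $W_j$ and the rest in $V_j$, the dominant contribution comes from the size of $W_j$. By Claim~\ref{sizeofWij}, we may condition on the high-probability event that $|W_{ij}|\leq n\,Q(\cols,\unif)$ for all pairs, so that $|W_j|\leq (\cols-1)n\,Q(\cols,\unif)\leq n\cols^{-\unif}\ln^{400\unif}\cols$. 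On this event, for a fixed $v\in V_i$ the number of potential edges contributing to $m_1(v,W_j,V_j)$ is at most $|W_j|\binom{|V_j|}{\unif-2}$, so conditionally $m_1(v,W_j,V_j)$ is stochastically dominated by $\Bin\bc{|W_j|\binom{|V_j|}{\unif-2},p}$, which has mean at most
\[
|W_j|\binom{|V_j|}{\unif-2}p \leq n\cols^{-\unif}\ln^{400\unif}\cols\cdot\frac{n^{\unif-2}}{(\unif-2)!}\cdot\frac{c\,\unif!}{n^{\unif-1}(1-\cols^{1-\unif})}=\widetilde O_\cols(\cols^{-1}),
\]
using~(\ref{pdef}) and~(\ref{c-bounds}). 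In particular this mean is much smaller than $100\unif$.

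Next, I would apply the Chernoff bound (Lemma~\ref{LemA2}) in its ``$t>1$'' form: with $\mu=\widetilde O_\cols(\cols^{-1})$ and threshold $100\unif$, we get $\pr[v\in U_{ij}]\leq\exp\{-100\unif\ln(100\unif/(e\mu))\}$, and since $100\unif/(e\mu)=\widetilde\Omega_\cols(\cols)$, this is at most $\cols^{-50\unif}$ (say) for $\cols$ large enough, absorbing the polylog factors. Hence $\Erw[|U_{ij}|]\leq n\cols^{-50\unif}$, and a further Chernoff bound on the (conditionally) binomially dominated variable $|U_{ij}|$ gives $|U_{ij}|\leq n\cols^{-20\unif}$ with probability $1-\exp\{-\Omega(n)\}$. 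Taking a union bound over the at most $\cols^2$ choices of $(i,j)$ yields $|U|=\sum_{i\neq j}|U_{ij}|\leq \cols^2\cdot n\cols^{-20\unif}\leq n\cols^{-10\unif}$ with probability $1-\cols^2\exp\{-\Omega(n)\}=1-\exp\{-\Omega(n)\}$, which is the claim.

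There is one technical subtlety to handle carefully. The variables $\{v\in U_{ij}\}$ for different $v\in V_i$ are not independent once we condition on $W_j$, because they share the randomness of the edges of $\phyp$ not touching $W_j$—actually they depend on disjoint sets of potential edges only if the edges counted are vertex-disjoint outside $v$, which they need not be. The clean way around this is to \emph{not} condition on the exact set $W_j$ but rather to note that the event $v\in U_{ij}$ is determined by the edges incident with $v$, and to split the probability: $\pr[v\in U_{ij}]\leq\pr[|W_j|>n\cols^{-\unif}\ln^{400\unif}\cols]+\pr[v\in U_{ij}\mid |W_j|\leq n\cols^{-\unif}\ln^{400\unif}\cols]$, where in the conditional term $W_j$ is a fixed set of the given size and the edges incident with $v$ are still independent of everything used to define $W_j$ (they can touch $V_i$ and $V_j$ but the status of each $w\in V_j$ as a member of $W_j$ depends on edges of the form $m_{k-1}(w,V_{j'})$, which overlap with $v$'s edges only in a negligible way). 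To make this fully rigorous one argues as in~\cite{Danny}: reveal the edges in two rounds, first those determining $W$, then the rest; the events $\{v\in U_{ij}\}$ for distinct $v$ depend on the second-round edges incident with $v$, and distinct vertices' incident-edge indicators, while not independent, are negatively associated enough for the Chernoff bound to apply—or more simply, one uses the standard trick that the number of vertices $v$ with many second-round edges into a fixed small set is dominated by a sum of independent indicators after conditioning on the first round. I expect this dependency bookkeeping to be the main obstacle; the arithmetic with $Q(\cols,\unif)$, $p$ and the bounds on $c$ is routine once the probabilistic framework is set up correctly, and it parallels the corresponding step in~\cite{Danny} for graphs.
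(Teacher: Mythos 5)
Your proposal follows essentially the same route as the paper: condition on the event $|W_j|\leq n\cols\, Q(\cols,\unif)$ from Claim~\ref{sizeofWij}, bound the conditional mean of the binomially dominated variable $m_1(v,W_j,V_j)$ (your cruder estimate $\widetilde O_\cols(\cols^{-1})$ versus the paper's $\widetilde O_\cols(\cols^{1-\unif})$ — both amply suffice), apply Chernoff twice (per vertex, then to $|U_{ij}|$), and finish with a union bound over the $O(\cols^2)$ pairs. The one place where you stop short — the ``dependency bookkeeping'' you flag as the main obstacle — is in fact a non-issue and is exactly what the paper dispatches in one sentence: every edge counted by $m_1(v,W_j,V_j)$ consists of $v$ together with $\unif-1$ vertices of $V_j$, so for distinct $v,v'\in V_i$ these families of potential edges are disjoint, and since $\unif\geq3$ none of them is of the type (one vertex in $V_j$, $\unif-1$ in a single other class) that determines $W_j$; hence, conditional on the edges defining $W$, the indicators $\vecone\{v\in U_{ij}\}$ are genuinely independent and $|U_{ij}|$ is stochastically dominated by a binomial — no negative association or further two-round argument beyond this observation is needed.
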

\begin{proof}
	Fix $v\in V_i$. The quantity $m_{1}( v,W_j,V_j)$ is stochastically dominated by 
$\text{Bin}\left(|W_j| \binom{|V_j|}{\unif-2},p \right).$ Hence, with $Q(\cols,\unif)$ as previously, we know that 
	\begin{align*}
		&\textbf{E}\left[m_{1}( v,W_j,V_j)\hspace{0.1cm}\Big\vert\hspace{0.1cm} |W_j|\leq n\cdot qQ(\cols,\unif)\right]\leq n\unif p \binom{|V_j|}{\unif-2}\cdot qQ(\cols,\unif)
		=
\widetilde{O}_\cols(\cols^{1-\unif}).
	\end{align*}
Applying the Chernoff bound gives $
		\textbf{P}\left[v\in U_{ij}\hspace{0.1cm}\big\vert\hspace{0.1cm} |W_j|\leq n\cdot qQ(\cols,\unif)\right]
			\leq \widetilde{O}_\cols(\cols^{-19\unif}).$
			Then $\vert U_{ij}\vert$, conditional on the event $\vert W_j\vert \leq n\cdot qQ(q,k)$, is stochastically dominated by a binomial random variable with mean $n\cdot {O}_\cols(\cols^{-\pa{15}\unif})$.
The Chernoff bound implies that
	\begin{align*}
		\textbf{P}\left[|U_{ij}|> n\cols^{-10\unif}\hspace{0.1cm}\Big\vert \hspace{0.1cm}|W_j|\leq n\cdot qQ(\cols,\unif)\right] \leq \exp\{-\Omega(n)\}.
	\end{align*}
The result follows by Claim~\ref{sizeofWij}.
\end{proof}

\begin{claim}\label{sizeofZ}
	We have $|Z|\leq n/\cols^{9\unif}$ with probability at least $1-\exp\{-\Omega(n)\}$.
\end{claim}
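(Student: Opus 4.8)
\textbf{Proof proposal for Claim~\ref{sizeofZ}.}

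The plan is to show that the avalanche process \textbf{CR3} cannot grow $Z$ much beyond its seed $U$, because each newly added vertex must carry $\geq100\unif$ edges into the current set $Z^{(\ell)}$, and the planted hypergraph is too sparse for a set of density $\cols^{-9\unif}$ to be so heavily interconnected. The key observation is that if the process runs long enough to reach $|Z| = n\cols^{-9\unif}$, then along the way some final set $Z^\star$ with $|U| \le |Z^\star| \le n\cols^{-9\unif}$ is produced in which \emph{every} vertex $v\in Z^\star\setminus U$ satisfies $m_1(v, Z^\star, V_{\sigma(v)\text{-complement}})>100\unif$ for some color. Summing this over all $v\in Z^\star\setminus U$ shows that the number of edges meeting $Z^\star$ in at least two vertices (one of which is the ``new'' vertex and one of which lies in $Z^\star$) is at least $100\unif\,(|Z^\star| - |U|) \ge 50\unif\,|Z^\star|$, using $|U|\le n\cols^{-10\unif}\le\tfrac12|Z^\star|$ from Claim~\ref{sizeofu} on the event we are conditioning on.

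So first I would fix a target set $S\subseteq[n]$ with $|U|\le|S|\le n\cols^{-9\unif}$ and bound the probability that $S$ contains at least $50\unif|S|$ edges of $\phyp$ having $\geq 2$ vertices in $S$ (and the rest anywhere, but as in the definition of $m_1$ the other $\unif-2$ vertices lie in a colour class $V_j$, so in particular in $[n]$). The number of potential such edges is at most $\binom{|S|}{2}\binom{n}{\unif-2}\le |S|^2 n^{\unif-2}$, each present independently with probability $p=O(n^{1-\unif})$, so the relevant binomial has mean $\mu = O(|S|^2 n^{\unif-2}\cdot n^{1-\unif}) = O(|S|^2/n) \le O(|S|\cols^{-9\unif})$, which is much smaller than $50\unif|S|$. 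Then the Chernoff bound in the form of Lemma~\ref{LemA2} (the ``$t>1$'' clause) gives
\begin{align*}
\pr\brk{m \ge 50\unif|S|} \le \exp\set{-50\unif|S|\ln\bc{50\unif|S|/(e\mu)}} \le \exp\set{-\Omega(\unif|S|\ln\cols)}.
\end{align*}
Next I would take a union bound over all $S$ of a given size $s=|S|$: there are at most $\binom{n}{s}\le(en/s)^s = \exp\set{s\ln(en/s)} \le \exp\set{O(s\unif\ln\cols)}$ choices, and since $\ln(en/s) = O(\unif\ln\cols)$ while the exponent above is $\Theta(\unif|S|\ln\cols)$ with a large constant, the entropy term is absorbed. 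Summing the resulting bound over $s$ from $|U|$ up to $n\cols^{-9\unif}$ costs only a polynomial factor, so the total failure probability is $\exp\set{-\Omega(n\cols^{-9\unif}\ln\cols)}=\exp\set{-\Omega(n)}$ for fixed $\cols,\unif$. On the complementary event, the process \textbf{CR3} must halt before $|Z|$ reaches $n\cols^{-9\unif}$, since otherwise the final $Z^\star$ would violate the edge-count bound just ruled out; this gives $|Z|\le n\cols^{-9\unif}$, and combining with the conditioning events from Claims~\ref{sizeofWij} and~\ref{sizeofu} (each holding with probability $1-\exp\set{-\Omega(n)}$) completes the proof.

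The main obstacle is setting up the deterministic extraction of the ``witness'' set $S$ carefully: because \textbf{CR3} adds vertices one at a time and the condition $m_1(v,Z^{(\ell)},V_j)>100\unif$ refers to the set at the time of addition rather than the final set, one has to argue that the constraint still holds with respect to the final (or the truncated) set — which it does, since $Z^{(\ell)}\subseteq Z^\star$ and $m_1$ is monotone in its second argument. A secondary subtlety is making the constant bookkeeping line up: the seed size $n\cols^{-10\unif}$ from Claim~\ref{sizeofu} must be comfortably below the target $n\cols^{-9\unif}$ (it is, by a factor $\cols^\unif$) so that $|U|\le\tfrac12|Z^\star|$ and the net edge demand $50\unif|Z^\star|$ genuinely dominates the mean $\mu$; once the exponents are chosen with this slack, the Chernoff-plus-union-bound computation is routine.
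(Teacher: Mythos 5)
Your proposal follows essentially the same route as the paper: truncate the \textbf{CR3} process, note that every non-seed vertex of the truncated set carries at least $100\unif$ witness edges into it, and rule out such an edge-dense set of density $\cols^{-O(\unif)}$ by a Chernoff bound plus a union bound over candidate vertex sets, exactly as the paper does after conditioning on $|U|\leq n\cols^{-10\unif}$. Two small slips are harmless given the slack: a single edge can serve as witness for up to $\unif$ vertices, so the number of distinct edges is at least $100(|Z^\star|-|U|)$ rather than $100\unif(|Z^\star|-|U|)$ (the paper divides by $\unif$ here), and your estimate $\ln(en/s)=O(\unif\ln\cols)$ fails for small $s$, but you only need the union bound at the truncation size $s\approx n\cols^{-9\unif}$, where it holds.
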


\begin{proof}
	Claim \ref{sizeofu} tells us that $|U|\leq n/\cols^{10\unif}$  with probability $1-\exp\{-\Omega(n)\}$. 
We will condition on this event.
Suppose that $|Z\backslash U|\geq i^{\ast}=n/\cols^{10\unif}$ and
consider the set $Z^{(i^\ast)}$ obtained after $i^\ast$ steps of {\bf CR3}.
The construction of $Z$ implies that there exists $100\unif|Z^{(i^*)}\setminus U|$ vertex-edge pairs $(v,e)$ such that $e\cap Z\geq 2$ and $e\setminus \{v\}\subseteq V_j$ for some $j\in[\cols]$. Since each edge may appear in at most
$\unif$ vertex-edge pairs, this implies that there are at least $100\, |Z^{(i^*)}\setminus U|$ such edges. 
Therefore, there are at least $100i^\ast=100 n/\cols^{10\unif}$ edges $e$ such 
that $e\cap Z^{(i^\ast)}\geq 2$ and $e\backslash\{v\}\subseteq V_j$ for some 
$j\in[\cols],v\in e$, despite the set $Z^{(i^\ast)}$ only being of size 
at most $2n/\cols^{10\unif}$. We prove that with high probability,
no such set can exist.

Let $\alpha=\cols^{-10\unif}$ and let $T\subset [n]$ be a set of $|T|=\alpha n$
vertices. Let $m_{T}^\ast$ be the number of edges $e$ such that  $e\cap T\geq 2$ and $e\backslash\{v\}\subseteq V_j$ for some $j\in[\cols], v\in V$. We know that $m_{T}^\ast$ is stochastically dominated by 
$\text{Bin}\Big(2 \alpha \binom{n}{2} \binom{n/\cols}{\unif-2},p\Big)$, and so we may observe  
by the Chernoff bound that
\begin{align*}
\textbf{P}\left[ m_{T}^\ast\geq 100\cdot \alpha n\right]\leq 
  \exp\{100\alpha n\ln \alpha\}.
	\end{align*}
If we let $N$ be the number of sets $T$ of size $|T|=\alpha n$ such that $m_{T}^\ast\geq  100\cdot \alpha n$, then
	\begin{align*}
	\textbf{P}\left[ N>0\right]\leq \binom{n}{\alpha n}\exp\{100\alpha n\ln \alpha\}\leq \exp\left\{-\Big(\alpha\ln\alpha+(1-\alpha)\ln(1-\alpha)-100\alpha\ln\alpha\Big)n\, \right\}=\exp\{-\Omega(n)\}.
	\end{align*} 
The final bound holds since $\alpha$ is constant with respect to $n$
and $\alpha\in (0,1)$.
Therefore with probability $1-\exp\{-\Omega(n)\}$ we have $|Z\backslash U|\leq n/\cols^{10\unif}$, which implies the claim.
\end{proof}

\noindent
Lemma \ref{Lemma_core} $(i)$ then follows from Claims \ref{core-contain}--\ref{sizeofZ}. 

To establish (ii) we say that a vertex $v$ is {\em $j$-blocked} if there is an edge $e\ni v$ such that 
$e\setminus\cbc v$ is contained in the core and $e\setminus\cbc v\subset {V_j}$. We say that  a vertex $v$ is {\em $\sigma$-complete} {if it is}
$j$-blocked {for all $j\neq\sigma(v)$}.
Note that, as with vertices inside the core,
recoloring any $\sigma$-complete vertex will set off a coloring avalanche.

\begin{claim}\label{sigmacompletewhp}
	With probability ${1-O(1/n)}$ the random graph $\phyp$ has the following property:
	\begin{align*}
	\textit{if $\tau\in\mathcal{C}(\phyp,\sigma)$ then for all $\sigma$-complete vertices $v$ we have $\sigma(v)=\tau(v)$.}
	\end{align*}
\end{claim}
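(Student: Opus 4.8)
The plan is to show that if $v$ is $\sigma$-complete, then changing its colour away from $\sigma(v)$ forces an avalanche of recolourings that drives us outside $\cC(\phyp,\sigma)$. First I would fix a colouring $\tau\in\cC(\phyp,\sigma)$ and suppose, towards a contradiction, that there is a $\sigma$-complete vertex $v$ with $\tau(v)\neq\sigma(v)$. Set $j=\tau(v)$; since $v$ is $\sigma$-complete it is in particular $j$-blocked, so there is an edge $e\ni v$ with $e\setminus\cbc v\subseteq V_j$ and $e\setminus\cbc v$ contained in the core. Because $\tau$ must be a proper colouring and $v$ already has colour $j$ under $\tau$, the edge $e$ cannot be monochromatic in colour $j$; hence some vertex $w\in e\setminus\cbc v$ has $\tau(w)\neq j=\sigma(w)$, i.e.\ $w$ is a core vertex whose colour has changed. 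So a single recoloured $\sigma$-complete vertex produces a recoloured core vertex.

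Next I would iterate this inside the core. The key point is part (ii) of \Lem~\ref{Lemma_core}, which says that any balanced colouring $\tau$ of $\phyp$ with $\tau(w)\neq\sigma(w)$ for some $w$ in the core already lies outside $\cC(\phyp,\sigma)$. Thus as soon as we have produced one core vertex $w$ with $\tau(w)\neq\sigma(w)$, we are done: $\tau\notin\cC(\phyp,\sigma)$, contradicting our assumption. So really the argument is short: $\sigma$-completeness of $v$ plus $\tau(v)\neq\sigma(v)$ exhibits a disagreeing core vertex via the blocked edge, and then \Lem~\ref{Lemma_core}(ii) finishes. Consequently, with probability $1-\exp\{-\Omega(n)\}$ — namely on the event from \Lem~\ref{Lemma_core} — no $\tau\in\cC(\phyp,\sigma)$ can disagree with $\sigma$ on a $\sigma$-complete vertex.

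The only subtlety I anticipate is bookkeeping: one must make sure that the edge $e$ witnessing $j$-blockedness genuinely forces a disagreement, i.e.\ rule out the degenerate possibilities. Since $\unif\geq3$, the set $e\setminus\cbc v$ has $\unif-1\geq2$ vertices all coloured $j$ by $\sigma$; if $\tau$ agreed with $\sigma$ on all of them, then together with $\tau(v)=j$ the edge $e$ would be monochromatic under $\tau$, which is impossible as $\tau$ properly colours $\phyp$. Hence the disagreeing vertex $w$ exists and lies in the core by the definition of $j$-blocked. (Note this also uses that $e$ is actually an edge of $\phyp$, which is exactly the content of $j$-blockedness.) Everything else is immediate from \Lem~\ref{Lemma_core}(ii), so the probability bound is inherited directly from \Lem~\ref{Lemma_core}.

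I would therefore structure the proof as: (1) condition on the (probability $1-\exp\{-\Omega(n)\}$) event of \Lem~\ref{Lemma_core}; (2) take $\tau\in\cC(\phyp,\sigma)$ and a putative $\sigma$-complete $v$ with $\tau(v)\neq\sigma(v)$; (3) use $j$-blockedness with $j=\tau(v)$ to find a core vertex $w\in e\setminus\cbc v$ with $\tau(w)\neq\sigma(w)$, using $\unif-1\geq2$ and properness of $\tau$; (4) apply \Lem~\ref{Lemma_core}(ii) to conclude $\tau\notin\cC(\phyp,\sigma)$, a contradiction. The main obstacle, such as it is, is purely notational care in step (3); the real work has already been done in establishing \Lem~\ref{Lemma_core}.
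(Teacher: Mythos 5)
Your argument is circular. The key step you lean on --- ``part (ii) of \Lem~\ref{Lemma_core}, which says that any balanced colouring $\tau$ with $\tau(w)\neq\sigma(w)$ for some core vertex $w$ lies outside $\cC(\phyp,\sigma)$'' --- is not available at this point: in the paper, \Lem~\ref{Lemma_core}(ii) is \emph{deduced from} Claim~\ref{sigmacompletewhp} (core vertices are in particular $\sigma$-complete, so the claim immediately gives (ii)), not the other way around. What your proposal actually establishes is only the easy reduction, namely that if $\tau$ agrees with $\sigma$ on the core then it must also agree on every $\sigma$-complete vertex (via the blocked edge and properness of $\tau$, using $\unif-1\geq2$); this is exactly the one-sentence remark with which the paper opens its proof. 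The substantive content of the claim --- that membership in the cluster $\cC(\phyp,\sigma)$ forces $\tau=\sigma$ on the core --- is left unproved: producing a single disagreeing core vertex $w$ does not by itself contradict cluster membership, since the cluster is defined only by the diagonal overlaps exceeding $\cols^{-1}(1.01/\unif)^{1/(\unif-1)}$, which is perfectly compatible with recolouring a few core vertices unless one proves otherwise.

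The missing work is the counting argument the paper carries out. One first invokes \Lem~\ref{Cor_manyNeighbors} to assume $\sigma$ is separable, so that for $\tau\in\cC(\phyp,\sigma)$ the diagonal overlaps are in fact at least $(1-\kappa)/\cols$; hence the sets $\Delta_i^{+}=\{v\in V_{\mathrm{core}}:\tau(v)=i\neq\sigma(v)\}$ and $\Delta_i^{-}=\{v\in V_{\mathrm{core}}:\tau(v)\neq i=\sigma(v)\}$ all have size at most $(1+o(1))\kappa n/\cols$. Each $v\in\Delta_i^{+}$ lies in at least $100\unif$ core edges whose other vertices are in $V_i$, and properness of $\tau$ forces each such edge to meet $\Delta_i^{-}$, so $m_1(\Delta_i^{+},\Delta_i^{-},V_i)\geq 100\,|\Delta_i^{+}|$. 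A Chernoff/union-bound estimate (valid with probability $1-\exp\{-\Omega(n)\}$, using the smallness of $\kappa$) shows that the number of such edges is at most $|\Delta_i^{-}|$, whence $100\,|\Delta_i^{+}|\leq|\Delta_i^{-}|$ for every $i$; summing over $i$ and using $\sum_i|\Delta_i^{+}|=\sum_i|\Delta_i^{-}|$ forces all these sets to be empty. This probabilistic step is where the $1-\exp\{-\Omega(n)\}$ bound in the claim actually comes from; in your write-up the probability bound is ``inherited'' from a statement that does not yet exist independently, so the proof as proposed does not stand.
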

\begin{proof}
	Note that it suffices to prove that $\sigma(v) = \tau(v)$ for all $v$ in the
	core, since this implies the result for all $\sigma$-complete vertices
	outside the core as well, by definition of $\sigma$-complete.
	
	Recalling Lemma \ref{Cor_manyNeighbors}, we may assume that $\sigma$ is separable in $\plantedhyp$. For $i\in [q]$, let
	\[
	\Delta_i^+=\{v\in V_{\text{core}}:\tau(v)=i\ne \sigma(v)\},
	\hspace{1cm}
	\Delta_i^-=\{v\in V_{\text{core}}:\tau(v)\ne i=\sigma(v)\}.
	\]
	Then
	\begin{equation}\label{sumofDelta}
	\sum_{i=1}^\cols|\Delta^+_i|=|\{v\in V_{\text{core}}:\sigma(v)\ne \tau(v)\}|=\sum_{i=1}^\cols|\Delta_i^-|.
	\end{equation}

	Since $\sigma$ is separable and $\tau\in\mathcal{C}(\mathcal{G}_\sigma,\sigma)$ we have $\max_{i\in [\cols]}|\Delta_i^+|\leq \tfrac{n}{\cols}\kappa(1+{o(1)})$ and $\max_{i\in [\cols]}|\Delta_i^-|\leq \tfrac{n}{\cols}\kappa(1+{o(1)})$. If we can show that $\{v\in V_{\text{core}}:\sigma(v)\ne \tau(v)\}=\emptyset$ then $\sigma(v)=\tau(v)$ for all $\sigma$-complete vertices.
	
	Take $v\in \Delta_i^+$. Since $v\in V_{\text{core}}$ we know that $m_{\unif-1}( v, V_i){\geq 100k}$. Further, since $\tau$ is a coloring, we must have that $m_{1}( v,\Delta_i^-,V_i) {\geq 100k}$. 
By Lemma~\ref{ABsets} we know that with probability $1-O(1/n)$, 
\begin{equation}
\label{ifthisholds} 
m_{1}(\Delta_i^+,\Delta_i^-,V_i)\leq 20k\left(\vert\Delta_i^+\vert+\vert\Delta_i^-\vert\right) \qquad \text{ for all }
\,\, i\in [q]. 
\end{equation}
Observe that if (\ref{ifthisholds}) holds then for all $i\in [q]$,
\[ 100k|\Delta_i^+|\leq  m_{1}(\Delta_i^+,\Delta_i^-,V_i) \leq 
   20k(|\Delta_i^-|+|\Delta_i^+|).
\] 
But this implies that $4|\Delta_i^+|\leq |\Delta_i^-|$ for all $i\in [q]$, which contradicts
(\ref{sumofDelta}) unless $\Delta_i^+=\Delta_i^-=\emptyset$ for all $i\in [q]$. 
Therefore we conclude that with probability $1-O(1/n)$, for all $v$ in the core we have $\sigma(v) = \tau(v)$,
completing the proof.
\end{proof}
\noindent
\Lem~\ref{Lemma_core} (ii) is immediate from Claim \ref{sigmacompletewhp}.
The core size guaranteed by \Lem~\ref{Lemma_core} is not quite big enough to deduce a good bound on the cluster size
	(due to the {polylogarithmic} factor). 
Recall that a vertex $v$ is {\em $j$-blocked} if it is contained in an edge $e$ such that 
$e\setminus\cbc v$ is contained in the core and $e\setminus\cbc v\subset {V_j}$.
Further, we say that $v$ is {\em $\alpha$-free} if there are at least $\alpha+1$ colors $j$ (including $\sigma(v)$) such that $v$ fails to be $j$-blocked. 
A careful study of how the vertices outside the core connect to those inside yields the following.

\begin{lemma}\label{Lemma_coreConnect}
{With probability $1-\exp\{-\Omega(n)\}$
there exists a set $A_W$ of vertices such that  there are at most $n\cols^{1-\unif}(1 + O_\cols(\cols^{-2}))$ vertices outside of $A_W$ which are $1$-free and there are at most $n\cols^{-\unif}(1 + O_\cols(\cols^{-1}))$ vertices which belong to $A_W$ or are 2-free.}
\end{lemma}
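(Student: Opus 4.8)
The goal is to understand, for a typical planted hypergraph $\phyp$, how vertices outside the core $V_{\text{core}}$ (whose complement is small by \Lem~\ref{Lemma_core}(i)) connect back into the core, and in particular to bound the number of $1$-free and $2$-free vertices. The natural strategy is a two-round exposure / "peeling" argument analogous to \textbf{CR1}--\textbf{CR3} from the cluster-size proof. First I would fix a vertex $v$ with $\sigma(v)=i$ and a color $j\neq i$, and estimate the probability that $v$ fails to be $j$-blocked by vertices already known to lie in the core. Since $\binom{|V_j|}{\unif-1}p\sim\unif\ln\cols$ (from (\ref{pdef}) and the lower bound in (\ref{c-bounds})), the number $m_{\unif-1}(v,V_j)$ of edges $e\ni v$ with $e\setminus\{v\}\subseteq V_j$ is Poisson-like with mean $\sim\unif\ln\cols$, so the probability that $m_{\unif-1}(v,V_j)=0$ is $\cols^{-\unif(1+o(1))}$. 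This already suggests the right orders of magnitude: a fixed vertex is $j$-free for a fixed $j$ with probability $\approx\cols^{-\unif}$, hence the expected number of vertices that are $1$-free (failing to be $\sigma(v)$-blocked, i.e.\ $j$-blocked for $j=\sigma(v)$ — there are $\cols$ choices of the "missing" color but one is forced) scales like $n\cols^{1-\unif}$, and $2$-free vertices, which must fail for two colors, scale like $n\cdot\binom{\cols}{2}\cols^{-2\unif}\sim n\cols^{-\unif}$ up to the stated polynomial corrections.

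The complication is that "$j$-blocked'' requires the other endpoints of the blocking edge to lie \emph{in the core}, not merely in $V_j$, and the core is a random set that itself depends on the edges. The plan is to route around this exactly as in \Lem~\ref{Lemma_core}: let $A_W$ play the role of the exceptional set (the analogue of $W\cup Z$, or more precisely the set of vertices that are "close'' to the non-core part), defined so that (a) $|A_W|$ is small — of order $n\cdot\mathrm{poly}(\ln\cols)\cdot\cols^{1-\unif}$, absorbing the polylog loss from \Lem~\ref{Lemma_core}(i) — and (b) every vertex outside $A_W$ that is $j$-blocked "within $V_j$'' is in fact $j$-blocked within the core. Concretely, a vertex fails this implication only if all its would-be blocking edges for color $j$ use a vertex of $V_j\setminus V_{\text{core}}$; since $|V_j\setminus V_{\text{core}}|=\widetilde O_\cols(\cols^{-\unif})n$ by \Lem~\ref{Lemma_core}(i), a first-moment computation bounds the number of such "spoiled'' vertices, and we throw them into $A_W$. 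Then outside $A_W$, being $1$-free (resp.\ $2$-free) is equivalent to the cleaner event $m_{\unif-1}(v,V_{\sigma(v)})=0$ (resp.\ this together with $m_{\unif-1}(v,V_j)=0$ for some second color $j$), whose probabilities I computed above.

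Having reduced to clean binomial/Poisson events with the correct expectations $n\cols^{1-\unif}(1+O_\cols(\cols^{-2}))$ and $n\cols^{-\unif}(1+O_\cols(\cols^{-1}))$, I would finish with concentration. The events "$v$ is $1$-free within $V_{\sigma(v)}$'' are \emph{not} independent across $v$ (edges are shared), but each such event depends only on edges incident to $v$, so standard bounded-difference or a direct second-moment/Chernoff argument on the (almost independent, or negatively-correlated-enough) indicator sum gives concentration with failure probability $\exp(-\Omega(n))$ — note we only need a bound of the form $o(n^{-1/2})$, which is very forgiving. The expected-value estimates themselves require expanding $(1-p)^{\binom{|V_j|}{\unif-1}}$ and $\binom{|V_j|}{\unif-1}p$ to enough orders in $1/n$ and $1/\cols$, using $|V_j|=n/\cols+O(\sqrt n)$ from balancedness and the precise form of $p$ in (\ref{pdef}); this is the routine-but-delicate part.

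\textbf{Main obstacle.} The genuine difficulty is the core-dependence: "$j$-blocked'' is defined relative to the random core, so the blocking events are not a simple function of a few independent edge-indicators. Handling this cleanly — choosing $A_W$ so that its size is controllably small \emph{and} outside it the core-relative and $V_j$-relative notions of blocking coincide, while keeping the error terms down to $O_\cols(\cols^{-2})$ and $O_\cols(\cols^{-1})$ respectively — is where the real work lies, and it is precisely the point at which one must carefully track which small exceptional sets ($W$, $U$, $Z$, $V\setminus V_{\text{core}}$, the "spoiled'' vertices) get merged into $A_W$. Everything else is a matter of standard Chernoff bounds and Taylor expansion.
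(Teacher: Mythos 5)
Your overall architecture matches the paper's (reduce $1$-freeness and $2$-freeness to the clean events ``$m_{\unif-1}(v,V_j)=0$ for some, respectively two, colors $j\neq\sigma(v)$'', and dump everything contaminated by the non-core into an exceptional set $A_W$), but the step you defer --- bounding the ``spoiled'' vertices all of whose potential blocking edges into some $V_j$ meet $V_j\setminus V_{\text{core}}$ --- is the actual content of the lemma, and the argument you sketch for it does not go through. First, Lemma~\ref{Lemma_core}(i) only gives $|V\setminus V_{\text{core}}|\leq n\cols^{1-\unif}\ln^{500\unif}\cols$, not $\widetilde O_\cols(\cols^{-\unif})n$ per colour class as you assert. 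Second, and more seriously, $V_{\text{core}}$ is a function of the very edges you want to treat as independent, so your ``first-moment computation'' must be made precise in one of two ways, and both fail: (i) conditioning on the realized core destroys the independence of the edges, so the binomial/Poisson estimates you rely on are no longer available; (ii) treating the non-core as an arbitrary set $S$ of size $\widetilde O_\cols(\cols^{1-\unif})n$ and union-bounding over $S$ costs entropy $\binom{n}{|S|}=\exp\bigl(\widetilde\Theta_\cols(\cols^{1-\unif})\,n\bigr)$, while the typical number of spoiled vertices for a fixed $S$ is only $\widetilde O_\cols(\cols^{2-2\unif})n$, so Chernoff at that scale yields only $\exp\bigl(-\widetilde O_\cols(\cols^{2-2\unif})\,n\bigr)$; to beat the union bound you would have to raise the threshold to $\widetilde\Omega_\cols(\cols^{1-\unif})n$, and an exceptional set of that size ruins the second assertion of the lemma, which needs the $A_W$-type contribution to be at most $n\cols^{-\unif}(1+O_\cols(\cols^{-1}))$ (for $\unif=3$ this term is in fact dominant).

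This is precisely why the paper does not work with $V_{\text{core}}$ directly but with the explicit superset $W\cup Z$ of its complement: the tiny avalanche set $Z$ ($|Z|\leq n\cols^{-9\unif}$) is handled crudely in Claim~\ref{Azsize}, whereas the degree-defined set $W$ is handled in Claim~\ref{Awsize} by conditioning on the edge counts $T=m_{\unif-1}(V_i,V_j)$ and $Y=m_1(V_i,W_j,V_j)$ and running a balls-and-bins/hypergeometric exchangeability argument followed by Azuma's inequality; vertices with no edge into some $V_j$ at all are split off into $A_0,A_{00}$ (Claim~\ref{A0size}), which carry the main term $n\cols^{1-\unif}$ and only $O_\cols(\cols^{-1})$-corrections to the second bound. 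Your plan contains no substitute for the Claim~\ref{Awsize} mechanism, so it has a genuine gap. Two smaller slips: your ``equivalent clean event'' $m_{\unif-1}(v,V_{\sigma(v)})=0$ is vacuous, since such edges would be monochromatic and never occur in $\phyp$ --- the correct event is $m_{\unif-1}(v,V_j)=0$ for some $j\neq\sigma(v)$; and $\binom{\cols}{2}\cols^{-2\unif}=\Theta_\cols(\cols^{2-2\unif})$, not $\Theta_\cols(\cols^{-\unif})$, though this only overstates the $2$-free count and is harmless for the upper bound.
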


\noindent
We proceed to prove \Lem~\ref{Lemma_coreConnect}.
Let $$A_{i}=\{v\in V\backslash V_i : m_{k-1}(v,V_i)=0\},$$ and define
	\begin{align*}
		&A_0=\bigcup_{i\in [q]}A_i,
			\hspace{0.75cm}
		A_{00}=\bigcup_{i\ne j}\left(A_i\cap A_j\right),
		\hspace{0.75cm}A_Z=\{v\in V:m_{1}( v,Z\cap V_i,V_i)>0
			\text{ for some }i\ne\sigma(v)\},\\
		&\hspace{3.2cm}A_W = \{ v\in V : m_{k-1}(v,V_i\setminus W_i) = 0
                   \text{ for some }i \neq \sigma(v) \} \setminus A_0\, .
	\end{align*}
 We claim that if $v$ is $1$-free then $v\in A_0\cup A_Z\cup A_W$, and if $v$ is $2$-free then $v\in A_{00}\cup A_Z\cup A_W$. \pa{ To see that this is the case, note that if $v$ is $1$-free then there is some $i\ne \sigma(v)$ such that there is no edge $e\ni v$ with $e\backslash \{v\}\subseteq V_i\cap V_{\text{core}}.$ For a contradiction, suppose that $v\notin A_0\cup A_W \cup A_Z$. Then there must be an edge $e'\ni v$ such that
	\begin{align*}
		  e'\backslash\{v\}\subseteq V_i\backslash (W_i\cup Z)\hspace{0.5cm}
		 \text{ for some }\hspace{0.5cm}i\ne \sigma(v).
	\end{align*}
	However, we know from Claim \ref{core-contain} that $V_i\backslash (W_i\cup Z)\subseteq V_i\cap  V_\text{core}$, giving the desired contradiction. The case for $2$-free vertices is similar. Suppose that $v$ is $2$-free and $v\notin A_{00}\cup A_Z\cup A_W$. Since $v$ is also $1$-free, must have $v\in \left(A_0\backslash A_{00} \right)\cap (A_Z\cup A_W)^c$. Since $v\in A_0\backslash A_{00}$, there exists $i\ne \sigma(v)$ so that $m_{k-1}(v,V_j)>0$ for all $j\notin \{\sigma(v),i\}$. That is to say, we have $v\in (A_j\cup A_Z\cup A_W)^c$ for all $j\notin \{\sigma(v),i\}$. This means that for all $j\notin \{\sigma(v),i\}$ there exists  an edge $e'\ni v$ such that $ e'\backslash\{v\}\subseteq V_j\backslash (W_j\cup Z)$. As above, we conclude that 
	$e'\subseteq V_j\cap  V_\text{core}$ and so $v$ is $j$-blocked for all $j\notin \{\sigma(v),i\}$.	
Therefore $v$ is not $2$-free and so we have a contradiction.
}

Thus,  to prove Lemma \ref{Lemma_coreConnect} it suffices to bound the size of the sets $A_0, A_{00}, A_W, A_Z$.
\begin{claim}\label{A0size}
	We have $|A_0|\leq n/\cols^{\unif-1}$ and $| A_{00}|\leq n/\cols^{2\unif-2}$ with probability at least $1-\exp\{-\Omega(n)\}$.
\end{claim}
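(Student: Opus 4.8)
The plan is to handle $A_0$ and $A_{00}$ separately, in each case bounding the expected size by linearity of expectation and then establishing concentration with the Chernoff bound of \Lem~\ref{LemA2}, exploiting that the underlying indicator variables are mutually independent.

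For $A_0=\bigcup_{i\in[\cols]}A_i$ I would use $|A_0|\le\sum_{i\in[\cols]}|A_i|$ and estimate each summand. Fix a colour $i$ and a vertex $v\in V\setminus V_i$. Because edges appear independently in $\plantedhyp$, $m_{\unif-1}(v,V_i)\sim\Bin\bc{\binom{|V_i|}{\unif-1},p}$, and the computation behind~(\ref{pdef}) gives $\binom{|V_i|}{\unif-1}p\sim\unif c/(\cols^{\unif-1}-1)$; the lower bound $c\ge(\cols^{\unif-1}-1/2)\ln\cols-2$ from~(\ref{c-bounds}) then forces $\binom{|V_i|}{\unif-1}p\ge\unif\ln\cols$ once $\cols$ is large enough (one only needs $\tfrac12\ln\cols\ge2$). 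Hence $\pr\brk{v\in A_i}=(1-p)^{\binom{|V_i|}{\unif-1}}\le\exp\bc{-\binom{|V_i|}{\unif-1}p}\le\cols^{-\unif}$, so that $\Erw|A_i|=|V\setminus V_i|\cdot\pr\brk{v\in A_i}\le(1-1/\cols)\cols^{-\unif}n(1+o(1))$.

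The structural observation that makes the Chernoff bound applicable is that each potential edge contributes to $m_{\unif-1}(v,V_i)$ for at most one vertex $v\in V\setminus V_i$: if $e$ contributes, then $e\setminus\{v\}\subseteq V_i$, so every vertex of $e$ other than $v$ lies in $V_i$. Therefore $|A_i|$ is a sum of independent Bernoulli variables with mean strictly below $\cols^{-\unif}n$ (the gap is $\Omega(n)$), and \Lem~\ref{LemA2} gives $\pr\brk{|A_i|>\cols^{-\unif}n}\le\exp(-\Omega(n))$; a union bound over the $\cols$ colours yields $\pr\brk{|A_0|>\cols^{1-\unif}n}\le\exp(-\Omega(n))$. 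The bound on $A_{00}=\bigcup_{i\ne j}(A_i\cap A_j)$ follows the same template one level up: for distinct colours $i,j$ and $v\in V\setminus(V_i\cup V_j)$ the events $\{m_{\unif-1}(v,V_i)=0\}$ and $\{m_{\unif-1}(v,V_j)=0\}$ depend on disjoint collections of potential edges (since $V_i\cap V_j=\emptyset$), so $\pr\brk{v\in A_i\cap A_j}\le\cols^{-2\unif}$; the same ``at most one vertex per edge'' observation makes $|A_i\cap A_j|$ a sum of independent indicators with mean at most $\cols^{-2\unif}n(1+o(1))$, and \Lem~\ref{LemA2} together with a union bound over the $\binom{\cols}{2}$ pairs gives $\pr\brk{|A_{00}|>\cols^{2-2\unif}n}\le\exp(-\Omega(n))$.

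The one genuinely delicate point is the bookkeeping of constants for $A_0$: one must ensure that the factor $1-1/\cols$ coming from $|V\setminus V_i|=n(1-1/\cols)$ --- which, via $\binom{|V_i|}{\unif-1}p\ge\unif\ln\cols$, is exactly where the lower end of~(\ref{c-bounds}) and the assumption ``$\cols$ large'' enter --- leaves a constant-factor slack after $|A_i|$ is summed over all $\cols$ colours, so that the total stays below the target $\cols^{1-\unif}n$. For $A_{00}$ there is ample room (a factor of roughly two), and everything else reduces to routine applications of Stirling's formula and the Chernoff bound.
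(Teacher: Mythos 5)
Your proof is correct and follows essentially the same route as the paper: bound the per-vertex probability via the planted-model edge independence and the lower bound on $c$ (giving $\pr[v\in A_i]\leq \cols^{-\unif}$ and $\pr[v\in A_i\cap A_j]\leq\cols^{-2\unif}$ by disjointness of the relevant edge sets for $i\neq j$), then apply the Chernoff bound of \Lem~\ref{LemA2}. The only cosmetic difference is that you union-bound over colors after applying Chernoff to each $|A_i|$, whereas the paper bounds $\pr[v\in A_0]\leq(\cols-1)\cols^{-\unif}$ first and applies Chernoff to $|A_0|$ directly; your explicit ``one vertex per edge'' justification of the independence of the indicators is a welcome clarification of what the paper leaves implicit.
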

\begin{proof}
	Take $v\in V_j$, and $i \ne j$. Now $
		\textbf{P}\left[m_{k-1}( v,V_i)=0\right]
			< \exp\left\{-\unif\ln \cols\right\}$,
and hence
$\textbf{P}\left[v\in A_0\right]\leq (\cols-1)\cols^{-\unif}$. It follows that $\textbf{E}[|A_0|]:=\mu<n\cdot  (\cols-1)\cols^{-\unif}$. Since $\textbf{P}\left[m_{k-1}( v,V_i)=0\right]>\exp\{-(\unif+1)\ln\cols\}$ we must have that $\mu>nq^{-\unif}$ 
and so by the Chernoff bound 
	\begin{align*}
		\textbf{P}\left[|A_0|>n/\cols^{\unif-1}\right] 
		\leq \exp\left\{-nq^{-\unif}\left[\frac{\cols}{\cols-1}\ln\left(\frac{\cols}{\cols-1}\right)-\frac{1}{\cols-1} \right]\right\}
		=\exp\{-\Omega(n)\}
	\end{align*}
	as desired. Further, the argument for $A_{00}$ follows quickly after noting that the edge sets of $m_{k-1}(v,V_i)$ and $m_{k-1}( v,V_j)$ are independent for $i\ne j$.
\end{proof}

\begin{claim}\label{Azsize}
	We have $|A_Z|\leq n/\cols^{6\unif}$ with probability at least $1-\exp\{-\Omega(n)\}$.
\end{claim}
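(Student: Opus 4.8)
The plan is to bound $|A_Z|$ by a two–step union‐bound argument that mirrors the treatment of $|U|$ in Claim~\ref{sizeofu} and of $|Z|$ in Claim~\ref{sizeofZ}. First I would recall from Claim~\ref{sizeofZ} that, with probability $1-\exp\{-\Omega(n)\}$, we have $|Z|\leq n/\cols^{9\unif}$; condition on this event. A vertex $v$ belongs to $A_Z$ iff there is some color $i\neq\sigma(v)$ and an edge $e\ni v$ with $e\setminus\{v\}$ containing at least one vertex of $Z\cap V_i$ and the rest lying in $V_i$. The number of such potential edges through a fixed $v$ is at most $\sum_{i\ne\sigma(v)}|Z\cap V_i|\binom{|V_i|}{\unif-2}$, so $m_1(v,Z\cap V_i,V_i)$ is stochastically dominated by $\Bin\bc{|Z|\binom{n/\cols}{\unif-2},p}$. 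Using $|Z|\leq n\cols^{-9\unif}$, $\binom{n/\cols}{\unif-2}=\Theta(n^{\unif-2}\cols^{2-\unif})$ and $p=O(n^{1-\unif})$ from~(\ref{pdef}), the mean of this binomial is $\widetilde O_\cols(\cols^{1-\unif}\cdot\cols^{-9\unif})=\widetilde O_\cols(\cols^{-10\unif})$, which is $o(1)$ as $n\to\infty$; hence $\pr[v\in A_Z\mid |Z|\leq n\cols^{-9\unif}]$ is bounded by, say, $\widetilde O_\cols(\cols^{-8\unif})$ (a crude bound on the probability that a binomial with tiny mean is positive).

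Next, because the edges of $\plantedhyp$ are included independently, and because the event $\{v\in A_Z\}$ depends only on edges incident with $v$, the events $\{v\in A_Z\}$ are \emph{not} quite independent across $v$ (the witnessing edge could be shared), but $|A_Z|$ is nonetheless stochastically dominated by a binomial: one can sum the indicator that $v$ lies in \emph{some} qualifying edge and bound each term as above, or, more cleanly, dominate $|A_Z|$ by the total number of vertices lying in edges counted by $\sum_i m_1([n],Z\cap V_i,V_i)$, whose expectation is at most $\unif$ times the expected number of such edges, again $\widetilde O_\cols(n\cols^{-10\unif})$. Then a final Chernoff bound (Lemma~\ref{LemA2}) on this binomial/edge count shows $\pr[|A_Z|>n\cols^{-6\unif}\mid |Z|\leq n\cols^{-9\unif}]\leq\exp\{-\Omega(n)\}$. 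Removing the conditioning via Claim~\ref{sizeofZ} yields the claim with probability $1-\exp\{-\Omega(n)\}$.

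The main obstacle I anticipate is handling the dependency correctly when turning the per‑vertex probability bound into a concentration statement for $|A_Z|$: unlike in Claim~\ref{sizeofWij}, the sets $Z\cap V_i$ are themselves random and depend on the whole configuration of edges (through the avalanche process \textbf{CR3}), so one cannot simply declare $|A_Z|$ binomial. The clean fix is to do everything \emph{conditionally} on the final set $Z$ (or on the event $|Z|\leq n\cols^{-9\unif}$), treating $Z$ as fixed of the given size, and then bound $|A_Z|$ by the number of vertices incident to the edge set $\{e: |e\cap Z|\geq 1,\ e\setminus\{v\}\subseteq V_i\text{ for some }i\ne\sigma(v),\ v\in e\}$; conditionally on $Z$ this edge set's size is a genuine binomial with the stated tiny mean, and each such edge contributes at most $\unif$ vertices, so a Chernoff bound finishes it. A second, minor, point is checking that the exponent margin is comfortable: since the mean is $\widetilde O_\cols(\cols^{-10\unif})n$ and we only need the bound $n\cols^{-6\unif}$, there is a polynomial-in-$\cols$ gap, so the Chernoff bound of Lemma~\ref{LemA2} (in the form $\pr[X>t\mu]\leq\exp\{-t\mu\ln(t/\eul)\}$ with $t$ a large power of $\cols$) gives $\exp\{-\Omega(n)\}$ with room to spare.
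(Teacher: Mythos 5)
Your proposal follows essentially the same route as the paper's proof: condition on the outcome of Claim~\ref{sizeofZ}, dominate $m_1(v,Z\cap V_i,V_i)$ by a binomial with mean polynomially small in $\cols$ to get a per-vertex membership probability of order $\cols^{-\Theta(\unif)}$, and finish with the Chernoff bound of Lemma~\ref{LemA2}; your explicit treatment of the dependence between the events $\{v\in A_Z\}$ only spells out what the paper asserts implicitly when it declares $|A_Z|$ binomial. One minor slip: since $p=O(n^{1-\unif})$ hides a factor $\Theta(\cols^{\unif-1}\ln\cols)$ coming from $c$, the per-vertex mean is $\widetilde{O}_\cols(\cols^{1-9\unif})$ rather than $\widetilde{O}_\cols(\cols^{-10\unif})$, but the weaker bound $\cols^{-8\unif}$ you actually use, and the large margin to the target $n\cols^{-6\unif}$, make this harmless.
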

\begin{proof}	
	\pa{The proof follows immediately from Lemma \ref{AzLemma} and Claim \ref{sizeofZ}.}
\end{proof}

\begin{claim}\label{Awsize}
	We have $|A_W|\leq n/\cols^{
		\unif}$ with probability at least $1-\exp\{-\Omega(n)\}$.
\end{claim}
\begin{proof}
	{Fix $i\ne j$ and $v\in V_i$. 
	We seek to compute the following probability:
}
\begin{align*}
{		\textbf{P}\left[m_{k-1}(v,V_j\backslash W_j)=0 \,\text{ and }\,m_{k-1}(v,V_j)>0 \right]=\textbf{P}\left[m_{k-1}(v,V_j\backslash W_j)=0 \right]\cdot\textbf{P}\left[m_{k-1}(v,W_j)>0 \right].
}\end{align*}
{	Since $V_j\backslash W_j\subseteq V_j$, we know from the calculations in Claim \ref{A0size} that $\textbf{P}\left[m_{k-1}(v,V_j\backslash W_j)=0 \right]\leq q^{-k}$. Further, define the event $\mathcal{E}$ that $\vert W_{j}\vert\leq nq\cdot Q(q,k)$ where $Q(q,k)=q^{-k-1}\ln^{400k}q$ as in Claim \ref{sizeofWij}. We know that 
$m_{k-1}(v,W_j)	\,\big\vert\,\mathcal{E}$ is stochastically dominated by $\text{Bin}\left({nq\cdot Q(q,k) \choose k-1},p\right)$ and so for sufficiently large $q$, we have
}\begin{align*}
{\textbf{P}\left[m_{k-1}(v,W_j)>0 \,\big\vert\,\mathcal{E}\right]
\leq p\,{nq\cdot Q(q,k) \choose k-1}
\leq \frac{p\left(nqQ(q,k)\right)^{k-1}}{(k-1)!}\leq
2ck\left(qQ(q,k)\right)^{k-1}
\leq q^{-k(k-2)}.
}\end{align*}
{Since Claim \ref{sizeofWij} implies that $\mathcal{E}$ only fails with exponentially small probability, it follows that 
}
\begin{align*}
{\textbf{P}\left[m_{k-1}(v,V_j\backslash W_j)=0 \,\text{ and }\,m_{k-1}(v,V_j)>0 \right]
\leq q^{-k}\left( q^{-k(k-2)}\cdot\textbf{P}\left[\mathcal{E}\right]+\textbf{P}\left[\neg\mathcal{E}\right]\right)=q^{-k^2+k}+\exp\left\{-\Omega(n)\right\}.
}
\end{align*}
{
	Taking the union bound over $j\in [q]\backslash \{i\}$ shows that for $v\in V_i$. 
	\begin{align*}
		\textbf{P}[v\in A_W]\leq q^{-k^2+k+1}+\exp\left\{-\Omega(n)\right\}<q^{-k-1}.
	\end{align*}
	Therefore $\vert A_W\vert$ is stochastically dominated by Bin$(n,q^{-k-1})$, and applying Lemma \ref{LemA2} completes the proof.}
\end{proof}

\noindent
Thus Lemma \ref{Lemma_coreConnect} follows from Claims \ref{A0size}-\ref{Awsize}.

\begin{proof}[Proof of \Lem~\ref{Cor_clusterSize}]
{Assume that the properties described in 
Claim~\ref{sigmacompletewhp} and Lemma~\ref{Lemma_coreConnect} both hold, noting that 
this is an event with probability $1-o(n^{-1/2})$.
The remainder of the proof is deterministic.}

{Since we have assumed that Claim~\ref{sigmacompletewhp} succeeds, }
for all $\sigma$-complete $v$ and all $\tau\in\cC(\phyp,\sigma)$ we have $\tau(v)=\sigma(v)$. {Let $F_x$ be the set of $x$-free vertices.  
Next, by our assumption that \Lem~\ref{Lemma_coreConnect} succeeds, we have}
	\begin{align*}
		|F_1\backslash A_W|\leq \frac{n}{\cols^{\unif-1}}+n\cdot O_\cols\left(\cols^{-\unif-1}\right),
		\hspace{1cm}|F_2\cup A_W| = \frac{n}{q^k}+n\cdot O_\cols\left(\cols^{-\unif-1}\right).
	\end{align*}
For any $v\in F_x\backslash F_{x+1}$ there are at most $x+1$ choices for the color of $v$. Since $F_{x+1} \subseteq F_{x}$ it follows that
	\begin{align*}
		{|\cC(\phyp,\sigma)|\leq 2^{|F_1\setminus F_2|}
			3^{|F_2\setminus F_3|}
				\cdots
			\cols^{|F_\cols|}
		\leq 2^{|F_1\setminus A_W|}\cdot \cols^{|F_2\cup A_W|},}
	\end{align*}
and so
	\begin{align*}
		{\frac{1}{n}\ln |\cC(\phyp,\sigma)|\leq \frac{\ln 2}{\cols^{\unif-1}}+\frac{\ln \cols}{\cols^{\unif}}+\widetilde{O}_\cols(\cols^{-\unif-1}).}
	\end{align*}
{Furthermore, since $c\leq (\cols^{\unif-1}-1/2)\ln \cols - \ln 2 
- \frac{1.01\ln \cols}{\cols}$, we have }
	\begin{align*}
		{\frac{1}{n}\ln \textbf{E}[Z_{\cols,\text{bal}}] \geq} \ln \cols +c\ln(1-\cols^{1-\unif})=\frac{\ln 2}{\cols^{\unif-1}}+\frac{1.01 \ln \cols}{\cols^\unif}+\widetilde{O}_\cols(\cols^{-\unif-1}).
	\end{align*}
{These bounds imply that $|\cC(\phyp,\sigma)|\leq \textbf{E}[Z_{\cols,\text{bal}}]$, 
completing the proof.}
\end{proof}

\subsection{Proof of \Cor~\ref{Cor_first}}\label{Sec_Cor_first}

\noindent
Here we assume that 
\[ 
 (\cols^{\unif-1}-1/2)\ln \cols-\ln2+1/\ln\cols < c < (\cols^{\unif-1}-1/2)\ln \cols.
\]
The proof of \Cor~\ref{Cor_first} is similar to the proof of \cite[\Prop~2.1]{Danny}.
The starting point is the following observation, which is reminiscent of the ``planting trick'' from~\cite{Barriers}.
Call $\sigma:[n]\to[\cols]$ {\em $\eps$-balanced} for some $\eps>0$ if $\max_{i\in[\cols]}||\sigma^{-1}(i)|-n/\cols|<\eps n$.

\begin{claim}\label{Claim_antiPlant}
Suppose there exist $\eps,\eps'>0$ and a sequence $(\cE_n)_n$ of events such that for large $n$ and all $\eps$-balanced $\sigma:[n]\to[\cols]$
we have
	\begin{align}
	\pr\brk{\phyp\in\cE_n}& \leq\exp(-\eps'n),\label{eqClaim_antiPlant1}\\
	\lim_{n\to\infty}\pr\brk{\hyp\in\cE_n}&=1.\label{eqClaim_antiPlant2}
	\end{align}
Then there exists $\delta>0$ such that \whp $ $
 	$\Zkc(\hyp)\leq\exp(-\delta n)\Erw[\Zkc(\hyp)]$. 
\end{claim}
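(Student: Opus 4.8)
The plan is to compare the quenched first moment (i.e.\ the typical value of $\Zkc(\hyp)$) against the annealed first moment $\Erw[\Zkc(\hyp)]$ by a ``change of measure'' between $\hnc$ and the planted model $\plantedhyp$. The point is that a coloring $\tau$ drawn uniformly at random from the colorings of $\hyp$ turns the pair $(\hyp,\tau)$ into (essentially) a sample from the planted model, and almost all colorings are $\eps$-balanced.

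\emph{Step 1: Reduce to balanced colorings and set up the planted comparison.} First I would note that by \Lem~\ref{Lemma_DFGfirst} we have $\Erw[\Zkc]=\Erw[\Zkb]\exp(o(n))=\exp(n(\ln\cols+c\ln(1-\cols^{1-\unif}))+o(n))$, and that the number of $\cols$-colorings of $\hyp$ that fail to be $\eps$-balanced is, in expectation, exponentially smaller than $\Erw[\Zkc]$ (a standard large-deviations computation, since the entropy--energy functional strictly decreases as the color frequencies move away from uniform by a constant; this is exactly the single-coloring estimate used in the proof of \Lem~\ref{Lemma_DFGfirst}). Hence it suffices to bound the number $Z'$ of $\eps$-balanced colorings. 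Now consider the distribution on pairs $(\fhyp,\tau)$ where $\fhyp\in\hnc$ and $\tau$ is a uniformly random $\cols$-coloring of $\fhyp$ (conditioned on one existing). By Bayes, conditioned on $\tau=\sigma$ for a fixed $\eps$-balanced $\sigma$, the law of $\fhyp$ is (up to the $O(\sqrt n)$ factor of \Lem~\ref{Lemma_planted}) that of $\phyp$.

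\emph{Step 2: First-moment/Markov bound against the planted event.} Let $\cE_n$ be the event in the hypotheses. On the one hand, by~(\ref{eqClaim_antiPlant2}), $\hyp\in\cE_n$ w.h.p. On the other hand, I would compute
\[
\Erw\brk{\Zkc(\hyp)\,\vecone\{\hyp\notin\cE_n\}}
=\sum_{\sigma}\pr\brk{\sigma\text{ is a coloring of }\hyp,\ \hyp\notin\cE_n}
=\sum_{\sigma}\pr\brk{\sigma\text{ colors }\hyp}\,\pr\brk{\hyp\notin\cE_n\mid\sigma\text{ colors }\hyp},
\]
restrict the sum to $\eps$-balanced $\sigma$ (the non-balanced ones contribute $\exp(-\Omega(n))\Erw[\Zkc]$ as above), and apply \Lem~\ref{Lemma_planted} together with~(\ref{eqClaim_antiPlant1}) to get $\pr\brk{\hyp\notin\cE_n\mid\sigma\text{ colors }\hyp}\leq O(\sqrt n)\exp(-\eps'n)$. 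Summing $\pr\brk{\sigma\text{ colors }\hyp}$ over all $\eps$-balanced $\sigma$ gives at most $\Erw[\Zkc]$, so
\[
\Erw\brk{Z'(\hyp)\,\vecone\{\hyp\notin\cE_n\}}\leq O(\sqrt n)\exp(-\eps'n)\,\Erw[\Zkc].
\]
Wait --- I also need the complementary bound on $\cE_n$ itself. But the event $\hyp\in\cE_n$ has probability $1-o(1)$ by~(\ref{eqClaim_antiPlant2}), and on that event I still need $Z'$ small; the trick is that the \emph{same} Markov computation, applied with $\cE_n$ in place of its complement, together with~(\ref{eqClaim_antiPlant1}), shows that the contribution of $\eps$-balanced colorings to $\Erw[\Zkc\vecone\{\hyp\in\cE_n\}]$ is the part we cannot directly control --- so instead the right move is to split $Z' = Z'\vecone\{\hyp\in\cE_n\} + Z'\vecone\{\hyp\notin\cE_n\}$ and observe we need to bound $Z'$ on the event $\cE_n$ directly.

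\emph{Step 3: Conclude via Markov on the good event.} Let me reorganize: define $\delta = \eps'/2$ (say) and apply Markov's inequality to $\Erw[\Zkc(\hyp)\vecone\{\hyp\in\cE_n\}]$. By the displayed Bayes identity restricted to $\eps$-balanced $\sigma$ and \Lem~\ref{Lemma_planted}, $\pr\brk{\hyp\in\cE_n\mid\sigma\text{ colors }\hyp}\leq O(\sqrt n)\pr\brk{\phyp\in\cE_n}\leq O(\sqrt n)\exp(-\eps'n)$ by~(\ref{eqClaim_antiPlant1}); hence $\Erw[Z'(\hyp)\vecone\{\hyp\in\cE_n\}]\leq O(\sqrt n)\exp(-\eps'n)\Erw[\Zkc]\leq\exp(-\eps'n/2)\Erw[\Zkc]$ for large $n$. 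By Markov,
\[
\pr\brk{Z'(\hyp)\vecone\{\hyp\in\cE_n\}>\exp(-\eps'n/4)\Erw[\Zkc]}\leq\exp(-\eps'n/4)\to0.
\]
Since $\hyp\in\cE_n$ w.h.p.\ and $\Zkc=Z'+(\text{non-balanced colorings})$ with the latter $\leq\exp(-\Omega(n))\Erw[\Zkc]$ w.h.p.\ (another Markov bound using the large-deviations estimate), we conclude $\Zkc(\hyp)\leq\exp(-\delta n)\Erw[\Zkc(\hyp)]$ w.h.p.\ with $\delta=\eps'/8$, say.

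\emph{Main obstacle.} The delicate point is \textbf{Step 2/3}: transferring the exponentially small bound~(\ref{eqClaim_antiPlant1}) from the planted model $\phyp$ to the \emph{conditioned} uniform model, and doing so \emph{uniformly over all $\eps$-balanced} $\sigma$, while only losing the polynomial factor $O(\sqrt n)$ from \Lem~\ref{Lemma_planted} rather than something exponential. One must be careful that the sum over $\sigma$ of $\pr\brk{\sigma\text{ colors }\hyp}$ is exactly controlled by $\Erw[\Zkc]$ and that the non-$\eps$-balanced colorings really are negligible --- i.e.\ that $\eps$ can be chosen small enough (depending only on $\cols,\unif,c$) that the entropy loss from unbalancing dominates any possible gain; this uses strict concavity of $\sigma\mapsto$ (entropy $+$ energy) at the uniform point, which holds throughout the condensation regime for $c$ in the stated range since there $\ln\cols + c\ln(1-\cols^{1-\unif})>0$.
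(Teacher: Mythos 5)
Your proposal is correct and follows essentially the same route as the paper's proof: restrict to $\eps$-balanced colorings (the rest being exponentially negligible in expectation, as in \cite{DFG}), bound $\Erw[\Zkc'\vecone\{\hyp\in\cE_n\}]$ via \Lem~\ref{Lemma_planted} together with (\ref{eqClaim_antiPlant1}), and conclude by a Markov-type bound combined with (\ref{eqClaim_antiPlant2}). The momentary detour in Step~2 (bounding against the complement of $\cE_n$) is self-corrected, and the final argument matches the paper's, which phrases the last step as bounding $\pr\brk{\cA_n\cap\cE_n}$ through $\Erw[\Zkc\vecone\{\cA_n\cap\cE_n\}]$.
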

\begin{proof}
Let $\Zkc'(G)$ be the number of $\eps$-balanced $\cols$-colorings of $G$.
By \cite[proof of~\Lem~2.1]{DFG} there exists $\alpha>0$ such that
	\begin{equation}\label{eqClaim_antiPlant3}
	\Erw[\Zkc(\hyp)-\Zkc'(\hyp)]\leq\exp(-\alpha n)\Erw[\Zkc(\hyp)].
	\end{equation}
Further, let $\Zkc''(\hyp)=\Zkc'(\hyp)\vecone\{\hyp\in\cE_n\}$.
Combining Lemma~\ref{Lemma_planted} and (\ref{eqClaim_antiPlant1}) shows that
	\begin{equation}\label{eqClaim_antiPlant4}
	\Erw[\Zkc''(\hyp)]\leq\exp(-\eps' n/2)\Erw[\Zkc'(\hyp)].
	\end{equation}
Moreover, let $\cA_n=\{\Zkc(\hyp)\geq\exp(-\delta n)\Erw[\Zkc(\hyp)]\}$ for a small enough $\delta>0$.
Combining (\ref{eqClaim_antiPlant3}) and (\ref{eqClaim_antiPlant4}), we obtain
	\begin{align*}
	\exp(-\delta n)\Erw[\Zkc(\hyp)]\pr\brk{\hyp\in\cA_n\cap\cE_n}&\leq\Erw[\Zkc(\hyp)\vecone\{\hyp\in\cA_n\cap\cE_n\}]\\
		&\leq\Erw[\Zkc''(\hyp)]+\Erw[\Zkc(\hyp)-\Zkc'(\hyp)]
			\leq(\exp(-\eps'n/2)+\exp(-\alpha n))\Erw[\Zkc(\hyp)].
	\end{align*}
Hence, choosing $\delta>0$ small enough and recalling (\ref{eqClaim_antiPlant2}), we obtain $\pr\brk{\cA_n}=o(1)$.
\end{proof}

Thus, we are left to exhibit a sequence of events as in Claim~\ref{Claim_antiPlant}.
Given a map $\tau:[n]\to[\cols]$ and a hypergraph $G$ on $[n]$ let $E_{\tau}(G)$ be the number of monochromatic edges of $G$ under $\tau$.
Further, for $\beta>0$ let
	\begin{align*}
	Z_{\cols,\beta}(G)&=\sum_{\tau}\exp(-\beta E_\tau(G)),
	\end{align*}
where the sum ranges over all $\tau:[n]\to[q]$.
The function $\beta\mapsto Z_{\cols,\beta}(G)$ can be viewed as the partition function of a hypergraph variant of the ``Potts antiferromagnet'' from statistical physics.
We consider this random variable because it is concentrated in the following sense.

\begin{claim}\label{Claim_plantedAzuma}
For any $\eps>0$ there is $\delta>0$ such that
for any $\sigma:[n]\to[\cols]$ we have
 \begin{align*}
 \pr\brk{\abs{\ln Z_{\cols,\beta}(\hyp)-\Erw \ln Z_{\cols,\beta}(\hyp)}>\eps n}&<\exp(-\delta n),&
 {\pr\brk{\abs{\ln Z_{\cols,\beta}(\phyp)-\Erw \ln Z_{\cols,\beta}(\phyp)}>\eps n}<\exp(-\delta n).}
 \end{align*}
\end{claim}
\begin{proof}
Either adding or removing a single edge alters the value of $\ln Z_{\cols,\beta}$ by at most $\beta$.
Therefore, the assertion follows from a standard application of Azuma's inequality.
\end{proof}

\noindent
Additionally, we have the following estimate of $\Erw \ln Z_{\cols,\beta}(\phyp)$.

\begin{claim}\label{Claim_quenchedAvg}
There is $\delta>0$ such that for all $\beta>0$ 
and all $\delta$-balanced $\sigma$ 
we have $\Erw \ln Z_{\cols,\beta}(\phyp)>\delta n+\ln\Erw[\Zkc(\hyp)]$.
\end{claim}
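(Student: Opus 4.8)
\textbf{Proof proposal for Claim~\ref{Claim_quenchedAvg}.}
The plan is to lower-bound $\Erw\ln Z_{\cols,\beta}(\phyp)$ by exhibiting, inside the planted model, enough colorings that incur very few monochromatic edges. The natural candidate is the planted coloring $\sigma$ itself together with all colorings obtained from $\sigma$ by recoloring a small linear fraction of vertices. First I would observe that, since in $\phyp$ every edge is non-monochromatic under $\sigma$ by construction, $\sigma$ contributes the term $\exp(-\beta E_\sigma(\phyp)) = 1$ to the sum defining $Z_{\cols,\beta}(\phyp)$; more importantly, there is a whole exponential-sized ``entropic'' neighborhood of $\sigma$ to exploit. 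Concretely, for a parameter $\eta>0$ let $\cS_\eta$ be the set of maps $\tau$ obtained by independently recoloring each vertex with probability $\eta$ to a uniformly random new color. A first-moment computation over the random choice of $\tau\in\cS_\eta$ shows that the expected number of monochromatic edges of $\phyp$ under such a $\tau$ is $O(\eta^k\cdot cn) = O(\eta^k n)$, so by Markov most such $\tau$ satisfy $E_\tau(\phyp)\le C\eta^k n$ for a constant $C$. Since $|\cS_\eta|$ has entropy roughly $n(H(\eta)+\eta\ln(\cols-1))$, we get a lower bound of the shape
\[
\ln Z_{\cols,\beta}(\phyp) \;\ge\; n\bigl(H(\eta)+\eta\ln(\cols-1)\bigr) - \beta C\eta^k n - o(n)
\]
with probability $1-o(1)$ over $\phyp$.

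The next step is to convert this high-probability bound into a bound on the expectation $\Erw\ln Z_{\cols,\beta}(\phyp)$. Here I would invoke Claim~\ref{Claim_plantedAzuma}: since $\ln Z_{\cols,\beta}(\phyp)$ is tightly concentrated (the second inequality in (\ref{Claim_plantedAzuma_eq})), its expectation is within $\eps n$ of any value it attains with non-vanishing probability, so the displayed inequality passes to $\Erw\ln Z_{\cols,\beta}(\phyp)$ up to an additive $\eps n$. Then I would compare the right-hand side with $\ln\Erw[\Zkc(\hyp)] = n(\ln\cols + c\ln(1-\cols^{1-\unif})) + o(n)$ from Lemma~\ref{Lemma_DFGfirst}. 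Writing $c = (\cols^{\unif-1}-1/2)\ln\cols + O(1)$, we have $c\ln(1-\cols^{1-\unif}) = -\ln\cols/2 + o_\cols(1)$, so $\tfrac1n\ln\Erw[\Zkc(\hyp)] = \ln\cols - \tfrac12\ln\cols + o_\cols(1) \approx \tfrac12\ln\cols$. On the other hand, taking $\eta$ to be a small constant (chosen after $\cols$, so that $\eta^k$ is negligible against $\eta\ln\cols$ and against $H(\eta)$) makes the entropic bound at least $n\eta\ln(\cols-1)(1-o(1))$, which for suitably small but fixed $\eta$ exceeds $\tfrac12\ln\cols\cdot n$ by a positive linear margin whenever $\cols$ is large. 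Thus $\Erw\ln Z_{\cols,\beta}(\phyp) > \delta n + \ln\Erw[\Zkc(\hyp)]$ for an appropriate $\delta>0$, uniformly in $\beta>0$ (the only $\beta$-dependence, the $\beta C\eta^k n$ term, can be absorbed because we are free to shrink $\eta$ as a function of $\beta$, or alternatively note that the bound we actually need only has to beat $\ln\Erw[\Zkc(\hyp)]$ and we can restrict to the regime of $\beta$ that matters for Corollary~\ref{Cor_first}).

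The main obstacle I anticipate is the uniformity in $\beta$: for very large $\beta$ the penalty term $\beta C\eta^k n$ could overwhelm the entropic gain unless $\eta$ is taken correspondingly small, and if $\eta\to 0$ the entropy $n(H(\eta)+\eta\ln(\cols-1))$ also shrinks. The resolution is that the balance point $\eta\asymp \beta^{-1/(k-1)}$ (roughly) still leaves an entropic surplus of order $n\ln\cols\cdot\beta^{-1/(k-1)}$, which dominates the $\beta\eta^k n = n\eta\cdot(\beta\eta^{k-1})$ term as long as $\beta\eta^{k-1}$ is bounded — so one should choose $\eta$ with $\beta\eta^{k-1} = \Theta(1)$ and check the arithmetic carefully; the surplus over $\tfrac12\ln\cols\cdot n$ is then uniform in $\beta$ once $\cols$ is large, because it scales with $\ln\cols$ while the target scales with $\ln\cols$ as well but with a strictly smaller constant. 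A secondary technical point is making the concentration-over-$\cS_\eta$ argument rigorous: one must ensure the event $\{E_\tau(\phyp)\le C\eta^k n\}$ holds for at least an $\exp(-o(n))$ fraction of $\cS_\eta$, which follows from Markov's inequality applied to $\sum_{\tau\in\cS_\eta}E_\tau(\phyp)$, whose expectation we control directly using (\ref{pdef}).
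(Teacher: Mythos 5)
Your proposal has a genuine gap, and it sits exactly where the difficulty of this claim lies. First, the cost estimate for the perturbation is wrong. In the planted model $\phyp$ the number of edges having exactly $\unif-1$ vertices in a single colour class is $\Theta(\unif q n\ln q)$ (there are $\sim q\binom{n/q}{\unif-1}n$ such vertex sets, each present with probability $p=\Theta(c\,\unif!\,n^{1-\unif})$, and $c=\Theta(q^{\unif-1}\ln q)$). Each such edge becomes monochromatic as soon as the single ``odd'' vertex is recoloured to the majority colour, which under your $\eta$-perturbation happens with probability $\sim\eta/q$. Hence $\Erw[E_\tau(\phyp)]=\Theta(\eta n\ln q)$, linear in $\eta$, not $O(\eta^{\unif}n)$. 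The penalty is therefore $\beta\cdot\Theta(\eta\ln q)\,n$ against an entropic gain of only $n\bigl(H(\eta)+\eta\ln(q-1)\bigr)\approx\eta n\bigl(\ln q+\ln(1/\eta)\bigr)$, so once $\beta$ exceeds an absolute constant the net surplus is positive only for $\eta\leq q^{-\Omega(\beta)}$ and is then of size $q^{-\Omega(\beta)}n$, which drops below the scale needed. Your balancing $\beta\eta^{\unif-1}=\Theta(1)$ rests on the incorrect $\eta^{\unif}$ scaling, and the escape of ``restricting to the $\beta$ that matters'' fails as well: Claim~\ref{Claim_annealedAvg} is invoked in Corollary~\ref{Cor_first} precisely for $\beta>\beta_0$ large, so uniformity in large $\beta$ is the whole point.

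Second, the target you compare against is miscalculated, which hides how delicate the comparison is. With $c=(q^{\unif-1}-1/2)\ln q+O(1)$ one has $c\ln(1-q^{1-\unif})=-\ln q+O(q^{1-\unif}\ln q)$, so $\frac1n\ln\Erw[\Zkc(\hyp)]=\ln q+c\ln(1-q^{1-\unif})=\bigl(\ln 2-\Omega(1/\ln q)\bigr)q^{1-\unif}+\widetilde O_q(q^{-\unif})$ — a quantity of order $q^{1-\unif}$, not $\approx\frac12\ln q$ (and even against your stated target, a small fixed $\eta$ yields only $\approx\eta\ln q<\frac12\ln q$, so the claimed comparison is arithmetically off in any case). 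The claim is a hairline inequality at scale $q^{1-\unif}$: one must show $\Erw\ln Z_{\cols,\beta}(\phyp)\geq(\ln 2+o_q(1/\ln q))\,q^{1-\unif}n$ uniformly in $\beta$, beating the annealed value by only $\Theta(q^{1-\unif}/\ln q)\,n$. Any bound that is uniform in $\beta$ must come from colourings with \emph{no} monochromatic edges; this is what the paper supplies by constructing, with the sets $F_{ij},F_{ij}',E_\star,F_0$, roughly $q^{1-\unif}n$ ``free'' vertices each admitting a second admissible colour so that all $2^{|F_0|}$ joint recolourings remain proper, giving $\ln Z_{\cols,\beta}(\phyp)\geq\ln Z_{\cols}(\phyp)\geq|F_0|\ln 2$ \whp\ (and hence in expectation, e.g.\ since $\ln Z_{\cols,\beta}(\phyp)\geq0$, or via Claim~\ref{Claim_plantedAzuma} as you suggest — that conversion step of yours is fine). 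The missing idea in your proposal is precisely this construction of an exponential family of genuinely proper colourings near $\sigma$; the penalized-perturbation family cannot deliver it.
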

\begin{proof}
We are going to show that for a small enough $\delta>0$ we have \whp\
	\begin{align}\label{eqClaim_quenchedAvg1}
	n^{-1}\ln Z_{\cols}(\phyp)\geq q^{1-k}\ln2+\tilde O_q(q^{-k}).
	\end{align}
Since  $Z_{\cols,\beta}(\phyp)\geq Z_{\cols}(\phyp)$ for all $\beta$ and because
(\ref{eqLemma_DFGfirst}) implies that
\[ n^{-1}\ln\Erw[\Zkc(\hyp)]\leq q^{1-k}\big(\ln 2- (2\ln q)^{-1}\big)+\tilde O_q(q^{-k}),
\]  
the claim follows from (\ref{eqClaim_quenchedAvg1}).

To prove (\ref{eqClaim_quenchedAvg1}),
we let $F_{ij}$ be the set of vertices $v\in V_i$ such that {$m_{k-1}(v,V_j)=0$}.
Further, let $F_{ij}'$ be the set of all $v\in F_{ij}$ such that {$m_{k-1}(v,V_h)=0$}
for some $h\in[\cols]\setminus\{i,j\}$.
Due to the independence of the edges, $|F_{ij}|,|F_{ij}'|$ are binomial random variables.
The expected sizes of these sets satisfy
	\begin{align*}
	\Erw|F_{ij}|&
	= \big( q^{-k{-1}}+\widetilde O_\cols(q^{-k{-2}})\big)n,
          &
	\Erw[F_{ij}'|&=
		\widetilde O_\cols(q^{-k{-2}})n.
	\end{align*}
Hence, the Chernoff bound implies that with probability $1-\exp(-\Omega(n))$,  for all $i,j$ we have
	\begin{align}\label{eqClaim_quenchedAvg2}
	|F_{ij}|&=(q^{-k{-1}}+\widetilde O_\cols(q^{-k{-2}}))n,&
	|F_{ij}'|&=		\widetilde O_\cols(q^{-k{-2}})n.
	\end{align}
Let $F_{\star}=\bigcup_{i\neq j}F_{ij}\setminus F_{ij}'$.
Further, for every vertex $v\in F_{\star}$ let $\sigma_{\star}(v)\in[\cols]$ be the (unique) color such that $v\in F_{\sigma(v)\sigma_{\star}(v)}$.
Further, let $E_{\star}$ be the set of edges $e$ of $\phyp$ such that there exist $v,w\in e\cap F_{\star}$ such that
	\begin{align*}
	\sigma(e\setminus\{v,w\})\subset\{\sigma(v),\sigma(w),\sigma_{\star}(v),\sigma_{\star}(w)\}.
	\end{align*}
The random variable $|E_{\star}|$ is stochastically dominated by a binomial random variable
$\operatorname{Bin}(cn, p_0)$ where
\[ p_0=  \frac{2|F_{\star}|^2}{n^2}\, (4 q^{-1})^{k-2}.
\]
So by (\ref{eqClaim_quenchedAvg2}),
\[ \Erw |E_{\star}| =  2( q^{1-k} + \widetilde{O}_q(q^{-k}))^2\, (4q^{-1})^{k-2}\, cn
+ \exp(-\Omega(n)) = \widetilde{O}_q(q^{1-k})\, n.
\]
Then, the Chernoff bound, we find that with probability
$1-\exp(-\Omega(n))$,
	\begin{align}\label{eqClaim_quenchedAvg3}
	|E_{\star}|= \widetilde O_{\cols}(q^{-k})\, n.
	\end{align}
Now, let $F_0$ be the set of all vertices $v\in F_{\star}$ that do not occur in any $e\in E_{\star}$.
Then by construction any map $\tau:[n]\to[\cols]$ such that
	$\tau(v)\in\{\sigma(v),\sigma_\star(v)\}$ for all $v\in F_0$ and
	$\tau(v)=\sigma(v)$ for all $v\not\in F_0$ is a $\cols$-coloring of $\phyp$.
Furthermore, there are
	$2^{|F_0|}$ such $\tau$ and
	 (\ref{eqClaim_quenchedAvg2}), (\ref{eqClaim_quenchedAvg3}) entail that $|F_0|\geq(q^{1-k}+\widetilde O_\cols(q^{-k}))n$ \whp,
	 whence (\ref{eqClaim_quenchedAvg1}) follows.
\end{proof}

\noindent
By comparison,  $\ln \Erw [Z_{\cols,\beta}(\hyp)]$ is upper-bounded as follows.

\begin{claim}\label{Claim_annealedAvg}
For any $\delta>0$ there is $\beta_0>0$ such that for all $\beta>\beta_0$ we have
	$\ln\Erw[Z_{\cols,\beta}(\hyp)]\leq\delta n+\ln\Erw[\Zkc(\hyp)]$.
\end{claim}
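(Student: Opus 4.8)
The goal is to show that for $\beta$ large enough the annealed free energy $n^{-1}\ln\Erw[Z_{\cols,\beta}(\hyp)]$ is only negligibly larger than $n^{-1}\ln\Erw[\Zkc(\hyp)]$, the exponential order of the number of proper colorings.

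The plan is to compute $\Erw[Z_{\cols,\beta}(\hyp)]$ explicitly by linearity of expectation and then optimize over the overlap profile. First I would write
\[
\Erw[Z_{\cols,\beta}(\hyp)]=\sum_{\tau:[n]\to[\cols]}\Erw\brk{\exp(-\beta E_\tau(\hyp))}.
\]
For a fixed $\tau$ with color class sizes $\alpha_i n$ ($i\in[\cols]$), each of the $cn$ edges of $\hyp$ is monochromatic under $\tau$ with probability $\rho(\alpha):=\sum_{i\in[\cols]}\alpha_i^{\unif}+O(1/n)$ (as in the proof of \Lem~\ref{Lemma_DFGfirst}), and the edges are ``almost independent''. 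Hence $E_\tau(\hyp)$ is, up to lower-order corrections, $\Bin(cn,\rho(\alpha))$, and a standard computation with the probability generating function of the binomial gives
\[
\Erw\brk{\exp(-\beta E_\tau(\hyp))}=\exp\brk{cn\ln\bc{1-(1-\eul^{-\beta})\rho(\alpha)}+o(n)},
\]
uniformly over $\tau$. Summing over the $\binom{n}{\alpha_1 n,\ldots,\alpha_\cols n}$ maps $\tau$ with a given profile $\alpha$ and then over the polynomially many profiles, Stirling's formula yields
\[
n^{-1}\ln\Erw[Z_{\cols,\beta}(\hyp)]=\max_{\alpha}\brk{H(\alpha)+c\ln\bc{1-(1-\eul^{-\beta})\textstyle\sum_i\alpha_i^\unif}}+o(1),
\]
where $H(\alpha)=-\sum_i\alpha_i\ln\alpha_i$ and the maximum is over probability vectors $\alpha$ on $[\cols]$.

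Next I would take $\beta\to\infty$ in this variational formula. As $\beta\to\infty$, $1-\eul^{-\beta}\to1$, and the objective converges (monotonically, uniformly on the compact simplex since $\sum_i\alpha_i^\unif$ is bounded away from $1$) to $H(\alpha)+c\ln(1-\sum_i\alpha_i^\unif)$, whose maximum over the simplex is exactly $\ln\cols+c\ln(1-\cols^{1-\unif})=n^{-1}\ln\Erw[\Zkc(\hyp)]+o(1)$ by \Lem~\ref{Lemma_DFGfirst} (the maximizer being the flat profile $\bar\alpha=(1/\cols,\ldots,1/\cols)$, by concavity of $H$ and the symmetry/convexity of $\sum_i\alpha_i^\unif$). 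Therefore, given $\delta>0$, one can pick $\beta_0$ so large that for all $\beta>\beta_0$ the variational value is within $\delta$ of its limit, which gives $\ln\Erw[Z_{\cols,\beta}(\hyp)]\leq\delta n+\ln\Erw[\Zkc(\hyp)]$ for $n$ large, as claimed.

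The main obstacle is the passage from the exact edge-count distribution in $\hyp(n,\unif,cn)$ (where the $cn$ edges are sampled without replacement from all $\binom n\unif$ possible edges, so $E_\tau$ is hypergeometric rather than binomial) to the clean binomial approximation above, uniformly over all $\tau$. This is handled exactly as in the first-moment calculation: one either works in the closely related binomial edge model and transfers via the $\Theta(n^{-1/2})$ probability of hitting $cn$ edges (cf.\ \Lem~\ref{Lemma_planted}), or one notes that for the annealed bound only an \emph{upper} bound on $\Erw\brk{\exp(-\beta E_\tau(\hyp))}$ is needed and the hypergeometric is stochastically comparable to the binomial with a harmless $\exp(o(n))$ correction. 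The rest is the routine Laplace/Stirling bookkeeping already carried out for $\Erw[\Zkb]$ in \Lem~\ref{Lemma_DFGfirst}, plus the elementary monotone-convergence argument in $\beta$.
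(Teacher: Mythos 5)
Your route is essentially the paper's, carried out with extra machinery: both arguments compute the annealed average $\Erw[Z_{\cols,\beta}(\hyp)]$ by linearity, use the fact that the probability that an edge is monochromatic under $\tau$ is minimised (and roughly equal to $\cols^{1-\unif}$) at balanced $\tau$, and then let $\beta\to\infty$. The paper skips your variational formula entirely: bounding every $\tau$ at once via $H(\alpha)\le\ln\cols$ and $\sum_i\alpha_i^\unif\ge\cols^{1-\unif}$ gives $\frac1n\ln\Erw[Z_{\cols,\beta}(\hyp)]\le\ln\cols+c\ln\bc{1-\cols^{1-\unif}(1-\eul^{-\beta})}$ directly, which is compared with (\ref{eqLemma_DFGfirst}) and finished by taking $\beta$ large.

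One step of your justification is, however, false as stated: you exchange $\beta\to\infty$ with the maximum over the simplex on the grounds that the convergence is uniform ``since $\sum_i\alpha_i^\unif$ is bounded away from $1$''. It is not bounded away from $1$: at degenerate profiles (all mass on one colour) $\sum_i\alpha_i^\unif=1$, the limiting objective is $-\infty$ there, and the convergence cannot be uniform near the vertices of the simplex. The conclusion you need survives --- near-degenerate profiles contribute a value close to $-c\beta$, far below the value at $\bar\alpha$, and a routine compactness argument shows that the maxima of the (pointwise decreasing in $\beta$) objectives converge to the maximum of the limit --- but the cleanest repair is to avoid the limit exchange altogether: since $\sum_i\alpha_i^\unif\ge\cols^{1-\unif}$ for every profile and $H(\alpha)\le\ln\cols$, your variational expression is at most $\ln\cols+c\ln\bc{1-(1-\eul^{-\beta})\cols^{1-\unif}}$, and choosing $\beta_0$ so that $c\ln\brk{(1-(1-\eul^{-\beta})\cols^{1-\unif})/(1-\cols^{1-\unif})}<\delta/2$ for $\beta>\beta_0$ finishes the proof, exactly as in the paper. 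Two smaller remarks: for the upper bound the hypergeometric-versus-binomial comparison needs no $\exp(o(n))$ fudge factor (negative association, or Hoeffding's convex ordering, gives $\Erw[\eul^{-\beta E_\tau}]\le(1-(1-\eul^{-\beta})\rho(\tau))^{cn}$ exactly), and your identification of the flat profile as the maximiser of the limiting functional, while correct, is not actually needed once the above bound is in place.
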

\begin{proof}
Using (\ref{eqLemma_DFGfirst}) and the fact that monochromatic edges are least likely when $\tau$ is balanced, we obtain
	\begin{align*}
	\frac{1}{n}\ln \Erw[\Zkc(\hyp)]&=\ln q +c\ln \bc{1-\cols^{1-\unif}}+o(1),&
	\frac{1}{n}\ln \Erw[Z_{\cols,\beta}(\hyp)]&\leq\ln q+c\ln \bc{1-\cols^{1-\unif}(1-\exp(-\beta))}.
	\end{align*}
Making $\beta$ sufficiently large and taking logarithms, we obtain the assertion.
\end{proof}

\noindent
Finally, we know from Claims~\ref{Claim_quenchedAvg}--\ref{Claim_annealedAvg} and Jensen's inequality that there exists $\delta>0$ such that $
	\Erw \ln Z_{\cols,\beta}(\hyp)+\delta n\leq \Erw \ln Z_{\cols,\beta}(\phyp)$.
However, Claim~\ref{Claim_plantedAzuma} implies that both $\ln Z_{\cols,\beta}(\hyp)$ and $\ln Z_{\cols,\beta}(\phyp)$ are close to their expectations. 
Therefore  \Cor~\ref{Cor_first} follows by applying Claim~\ref{Claim_antiPlant} to the event
\[  \mathcal{E}_n=\left\{G:\abs{\ln Z_{\cols,\beta}(G)-\Erw \ln Z_{\cols,\beta}(\hyp)}>\eps n\right\}. \]

\section{The second moment}\label{Sec_second}

\noindent
In this section we prove \Prop~\ref{Prop_second}.
We keep the notation and the assumptions of \Sec~\ref{sec:outline} and \Sec~\ref{Sec_first}.

\subsection{Overview}

\noindent
We reduce the problem of estimating $\Erw[\Zkg^2]$ to that of optimising the function $F(a)$
from \Lem~\ref{Lemma_DFGsecond} over a certain domain $\Dtame$.
Due to the additional constraints imposed by the ``tame'' condition, 
this domain $\Dtame$ is a relatively small subset {of $\cD$, which was 
the domain of
optimisation for~(\ref{eqsmmMax}).}
{In the end, $\max_{a\in\Dtame}F(a)$ will be seen to be} significantly smaller than $\max_{a\in\cD}F(a)$, and
additionally, the problem of maximising $F$ over $\Dtame$ technically less demanding.

To define $\Dtame$ formally, call $a\in\cD$ {\em separable} if $a_{ij}\not\in(\cols^{-1}\myconst,\cols^{-1}(1-\kappa))$ for all $i,j\in[\cols]$ (cf.\ (\ref{eqsep})).
Additionally, we say that $a\in\cD$ is {\em $s$-stable} if there are precisely $s$ pairs $(i,j)$ such that $a_{ij}>\cols^{-1}\myconst$. We denote by $\mathcal{D}_s$ the set of all $s$-stable $a\in\mathcal{D}$, and by $\cD_{[\cols-1]}=\cup_{s<\cols}\cD_s$. Geometrically, each $\mathcal{D}_s$ is close to a $(\cols-s)$-dimensional face of the Birkhoff polytope, for if $a$ has entries greater than $(1.01/\unif\cols)^{1/(\unif-1)}$ then by separability these entries are in fact at least $(1-\kappa)/\cols$ (with $\kappa=\ln^{20}\cols/\cols^{\unif-1}$).
Finally, let $\Dtame$ be the (compact) set of all $a\in\cD$ that are separable and $s$-stable for some $0\leq s<\cols$. 

\begin{lemma}\label{Lemma_Dtame}
 If $F(a)<F(\bar{a})$ for all $a\in\mathcal{D}_{\text{tame}}\backslash\{\bar{a}\}$ then $\Erw[\Zkg^2] {=}  O(\Erw[\Zkb]^2)$.
\end{lemma}

\noindent
The proof of \Lem~\ref{Lemma_Dtame} is by a standard application of the Laplace method.
We defer the details to \Sec~\ref{lapsect}. 

In order to prove that $\max_{a\in\Dtame}F(a)=F(\bar a)$, we observe that the set $\Dtame$ naturally decomposes into a number of disjoint subsets.
Namely, let $\dtame s$ be the set of all $s$-stable $a\in\Dtame$ for $0\leq s<\pa{q}$.
We will argue that for $1\leq s<\cols$ the maximum of $F$ over $\dtame s$ is not much greater than the function value attained at certain canonical points
	$\bar a(s)$ with entries
	\begin{align}\label{baras2}
	\bar a_{ij}(s)= {\cols^{-1}\, \vecone\{i=j\}\vecone\{i\leq s\}} + {(\cols(\cols-s))^{-1}}\,\vecone\{i>s\}\vecone\{j>s\}.
	\end{align}
Hence, $\bar a(s)$ is a block-diagonal matrix.
The upper-left block is the $s\times s$ identity matrix, divided by $\cols$, and the lower-right block is the $(\cols-s)\times(\cols-s)$ matrix with all entries equal to $(\cols(\cols-s))^{-1}$.
Clearly, $\bar a(s)\in\dtame s$.

The following statement, which we prove in \Sec~\ref{Sec_max}, is the heart of the second moment analysis.

\begin{lemma}\label{Lemma_smmTame}
	 We have $F(a)<F(\bar{a})$ for all $a\in\mathcal{D}_{\text{tame}}\backslash\{\bar{a}\}$. 
	 
\end{lemma}
\Prop~\ref{Prop_second} follows immediately from \Lem~\ref{Lemma_Dtame} and \Lem~\ref{Lemma_smmTame}.

\subsection{The Laplace method: proof of \Lem~\ref{Lemma_Dtame}}\label{lapsect}
We seek to show that  there exists some positive constant $C(\cols )$  such that 
	\begin{align}\label{sect5secmom}
		\textbf{E}[Z_{\cols ,\text{tame}}^2]
		\leq C(\cols )\cdot \textbf{E}[Z_{\cols ,
		\text{bal}}]^2.
	\end{align}
The expected value of $Z_{\cols,\text{tame}}^2$ can be written as a sum
 over pairs of tame colourings. Define 
 \begin{align*}\mathcal{E}=\{a\in \cR\cap\mathcal{D}_\text{tame}:\norm{a-\bar{a}}_\unif ^\unif <\eta(\cols )\}.
 \end{align*} We split $Z_{\cols,\text{tame}}^2$ into three components as follows:
 \begin{align*}
	 Z_{\cols,\text{tame}}^2=Z_{\cols,\text{tame}}^2\cdot\textbf{1}_\mathcal{E}+Z_{\cols,\text{tame}}^2\cdot\textbf{1}_{\mathcal{D}_{[q-1]}\backslash \mathcal{E}}+Z_{\cols,\text{tame}}^2\cdot\textbf{1}_{\mathcal{D}_q}.
 \end{align*}
First we estimate the contribution of the first summand above by performing a Taylor expansion of
$F$ around $\bar{a}$. 

\begin{lemma}\label{secmomlaplace}
	There exists $C(\cols )$ and $\eta(\cols )$ such that with  we have 
	\begin{align*}
		\textbf{E}[Z_{\cols ,\text{tame}}^2\cdot\textbf{1}_{\mathcal{E}}]\leq C(\cols )\cdot \textbf{E}[Z_{\cols ,\text{bal}}]^2
	\end{align*}
\end{lemma}
\begin{proof}
We may parametrise $\cR\cap\mathcal{D}_{\text{tame}}$ as follows: disregard the  $(\cols ,\cols )$ entry and consider each matrix $a$ as a $\cols ^2-1$ dimensional vector.
Let
	\begin{align*}
		\mathcal{L}:[0,1/\cols ]^{\cols ^2-1}\longrightarrow [0,1/\cols ]^{\cols ^2}, 
			\hspace{1cm}a_{ij}\mapsto  
		\left\{\def\arraystretch{1.2}
  			\begin{array}{@{}c@{\quad}l@{}}
   			 a_{ij} & \text{if $(i,j)\ne (\cols ,\cols )$},\\
  			 1-\sum_{(i,j)\ne (\cols ,\cols )}a_{ij} & \text{otherwise}.\\
  		\end{array}\right.
	\end{align*}
We compute the Hessian of $F\circ \mathcal{L}=H\circ \mathcal{L}+E\circ \mathcal{L}$. For $(i,j)\ne (s,t)$ we have
	\begin{align*}
		\frac{\partial}{\partial a_{ij}}\big(H\circ\mathcal{L}
			(a)\big)\Big\vert_{a=\bar{a}}=0,\hspace{1cm}
		\frac{\partial^2}{\partial a_{ij}^2}\big(H\circ\mathcal{L}
			(a)\big)\Big\vert_{a=\bar{a}}=-2\cols ^2,\hspace{1cm}
		\frac{\partial^2}{\partial a_{ij}\partial a_{\pa{st}}}\big(H\circ\mathcal{L}
			(a)\big)\Big\vert_{a=\bar{a}}=-\cols ^2.
	\end{align*}
Further
	\begin{align*}
		\frac{\partial}{\partial a_{ij}}\norm{\mathcal{L}
			(a)}^\unif _\unif \Big\vert_{a=\bar{a}}=0,\hspace{1cm}
		\frac{\partial^2}{\partial a_{ij}^2}\norm{\mathcal{L}
			(a)}^\unif _\unif \Big\vert_{a=\bar{a}}=\frac{2\unif (\unif -1)}{\cols ^{2\unif -4}},
			\hspace{1cm}
		\frac{\partial^2}{\partial a_{ij}\partial a_{\pa{st}}}\norm{\mathcal{L}
			(a)}^\unif _\unif \Big\vert_{a=\bar{a}}=\frac{\unif (\unif -1)}{\cols ^{2\unif -4}},
	\end{align*}
and so
	\begin{align*}
		\frac{\partial}{\partial a_{ij}}\big(E\circ\mathcal{L}
			(a)\big)&\Big\vert_{a=\bar{a}}=0,\hspace{2cm}
		\frac{\partial^2}{\partial a_{ij}^2}\big(E\circ\mathcal{L}
			(a)\big)\Big\vert_{a=\bar{a}}=\frac{2c\unif (\unif -1)}{\cols ^{2\unif -4}(1-
			\cols ^{1-\unif })^2},
		\\&\hspace{0.3cm}
		\frac{\partial^2}{\partial a_{ij}\partial a_{\pa{st}}}\big(E\circ\mathcal{L}
			(a)\big)\Big\vert_{a=\bar{a}}=\frac{c\unif (\unif -1)}{\cols ^{2\unif -4}(1-\cols ^{1-
			\unif })^2}.
	\end{align*}
Thus, we have that the first derivative of $F\circ\mathcal{L}$ vanishes at $\bar{a}$, and that the Hessian is
	\begin{align*}
	D^2\big(F\circ\mathcal{L}(a)\big)\Big\vert_{a=\bar{a}}=-\cols ^2\left(1-\frac{2c\unif (\unif -1)}{\cols ^{2(\unif -1)}(1-\cols ^{1-\unif })^2}\right)(\text{id} +\textbf{1})
	\end{align*}
where $\textbf{1}$ is the matrix with all all entries equal to one, and $\text{id}$ is the identity matrix. 
As $\text{id}$ is positive definite, $\textbf{1}$ is positive semidefinite and $c<\cols ^{\unif -1}\ln \cols $ 
we have that the Hessian is negative definite at $\bar{a}$. Further, it follows from continuity that there exists some 
$\tilde{\eta}, \tilde{\xi}$ independent of $n$ such that the largest eigenvalue of $D^2\big(F\circ\mathcal{L}\big)$ is 
smaller than $-\tilde{\xi}$ for all points $\norm{a-\tilde{a}}_2<\tilde{\eta}$. Since $\mathcal{L}$ is linear,
there exists some positive $\eta$, independent of $n$, such that for all $a$ such that $\norm{a-\bar{a}}_2<\eta$ we have $\norm{\mathcal{L}^{-1}-\tilde{a}}_2<\tilde{\eta}$. Taylor's theorem then implies that there is some positive $\xi$,
independent of $n$, such that
	\begin{align*}
		F\circ\mathcal{L}(a)\leq F(\bar{a})-\xi \sum_{(i,j)\ne(\cols ,\cols )}(a_{ij}-\cols^{-2} )^2\hspace{1cm}\text{ for all }a:\norm{a-\bar{a}}_2<\eta.
	\end{align*}
As $\mathcal{E}$ satisfies the conditions required for the event $\mathcal{A}$ in 
Lemma~\ref{Lemma_DFGsecond}, we may apply 
(\ref{eqLemma_DFGsecond_precise}) with $\mathcal{A} = \mathcal{E}$ to obtain
\begin{align*}
	\textbf{E}[Z_{\cols ,\text{tame}}^2\cdot\textbf{1}_{\mathcal{E}}]
	&=\exp\left\{{nF(\bar{a})}\right\}\cdot O(n^{(1-\cols ^2)/2})
		\cdot\sum_{a\in 
		\mathcal{E}}\exp\left\{-
		\xi n\sum_{(i,j)\ne(\cols ,\cols )}(a_{ij}- \cols^{-2} )^2\right\}\\
&\leq\exp\left\{{nF(\bar{a})}\right\}\cdot O(n^{(1-\cols ^2)/2})\cdot
		\int_{\mathbb{R}^{\cols ^2-1}}
			\exp\left\{-\xi n\sum_{(i,j)\ne(\cols ,\cols )}
			(z_{ij}-\cols^{-2} )^2\right\}dz_{ij}\\
&\leq\exp\left\{{nF(\bar{a})}\right\}\cdot O(n^{(1-\cols ^2)/2})\cdot
		\left[\int_{\infty}^\infty
			\exp\left\{-\xi n z^2\right\}dz\right]^{\cols ^2-1}
	\leq C(\cols )\cdot \textbf{E}[Z_{\cols ,\text{bal}}]^2
\end{align*}
for some constant $C(\cols)$ depending only on $q$. Here the final inequality follows from (\ref{perfect}).
\end{proof}

There are two remaining cases to consider, namely
 $a\in\mathcal{D}_{[\cols -1]}\backslash \mathcal{E}$ and $a\in\mathcal{D}_{\cols }$. We begin with the latter.
	\begin{lemma}\label{secmomlaplace_2}
		There exists a constant $C(\cols ) > 0$ such that 
	\begin{align*}
		\textbf{E}[Z_{\cols ,
			\text{tame}}^2\cdot\textbf{1}_{\mathcal{D}_{\cols }}]\leq 
			C(\cols )\cdot \textbf{E}[Z_{\cols ,\text{bal}}]^2
	\end{align*}
	\end{lemma}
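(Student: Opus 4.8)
The plan is to sidestep the analytic machinery entirely: I will argue that the $\cols$-stable overlaps are precisely those produced by pairs of colourings lying in a common cluster, up to a permutation of the colour classes. Consequently the contribution of $\mathcal D_\cols$ to $\Erw[\Zkg^2]$ is controlled directly by the cluster-size bound \textbf{T3} that was built into the definition of tameness, and no Taylor expansion of $F$ around $\bar a$ is needed here (which is the whole reason this regime is split off from the Laplace argument of \Lem~\ref{secmomlaplace}).

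First I would pin down the shape of a $\cols$-stable overlap. Fix a hypergraph $\fhyp$ and balanced colourings $\sigma,\tau$ of $\fhyp$ whose overlap $a=a(\sigma,\tau)$ is $\cols$-stable, i.e.\ exactly $\cols$ entries $a_{ij}$ exceed $\cols^{-1}\myconst$. Since $\myconst=(1.01/\unif)^{1/(\unif-1)}>1/2$ for every $\unif\ge 3$ (a one-line estimate), and since every row sum and every column sum of $a$ equals $\cols^{-1}$ up to an $O(n^{-1/2})$ error coming from balancedness, no row and no column of $a$ can contain two entries exceeding $\cols^{-1}\myconst$ once $n$ is large. Hence the $\cols$ dominant entries form the support of a permutation matrix, so there is a well-defined $\pi=\pi(\sigma,\tau)\in S_\cols$ with $a_{i\pi(i)}>\cols^{-1}\myconst$ for all $i\in[\cols]$. (Separability of $\sigma$ even forces these entries to be at least $\cols^{-1}(1-\kappa)$, but I will not need this.)

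Next I would reduce to counting inside one cluster. For fixed $\sigma$ and fixed $\pi\in S_\cols$, set $\tau^{\pi}:=\pi^{-1}\circ\tau$, which is again a balanced colouring of $\fhyp$ (colour permutations preserve both balancedness and the property of being a colouring), and which satisfies $a_{ii}(\sigma,\tau^{\pi})=a_{i\pi(i)}(\sigma,\tau)$ for every $i$. Thus if $a(\sigma,\tau)$ is $\cols$-stable with $\pi(\sigma,\tau)=\pi$, then $\min_i a_{ii}(\sigma,\tau^{\pi})>\cols^{-1}\myconst$, i.e.\ $\tau^{\pi}\in\cC(\fhyp,\sigma)$. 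As $\tau\mapsto\tau^{\pi}$ is a bijection of $[\cols]^{[n]}$, summing over the $\cols!$ permutations $\pi$ gives, for every $\fhyp$ and every $\sigma$,
\[
\bigl|\{\tau\in\COLSH:\ a(\sigma,\tau)\ \text{is}\ \cols\text{-stable}\}\bigr|\ \le\ \cols!\,\bigl|\cC(\fhyp,\sigma)\bigr|.
\]
Now restrict to \emph{tame} $\sigma$: by \textbf{T3}, $|\cC(\fhyp,\sigma)|\le\Erw[\Zkb]$ holds deterministically in $\fhyp$. Summing the displayed bound over all tame $\sigma$ yields $\Zkg(\fhyp)^2\,\textbf{1}_{\mathcal D_\cols}\le\cols!\,\Erw[\Zkb]\,\Zkg(\fhyp)$ for every $\fhyp$. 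Taking expectations over $\hyp\in\hnc$ and using $\Zkg\le\Zkb$ pointwise gives $\Erw[\Zkg^2\,\textbf{1}_{\mathcal D_\cols}]\le\cols!\,\Erw[\Zkb]\,\Erw[\Zkg]\le\cols!\,\Erw[\Zkb]^2$, which is the assertion with $C(\cols)=\cols!$.

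I do not expect a genuine obstacle in this lemma: the argument is purely combinatorial, and its point is exactly that \textbf{T3} was tailored to absorb the $\mathcal D_\cols$ contribution. The only details that need a moment's care are (i) the inequality $\myconst>1/2$, which is what makes the $\cols$ dominant entries of a $\cols$-stable overlap genuinely a permutation, and (ii) reading $\textbf{1}_{\mathcal D_\cols}$ as the indicator that the overlap matrix is $\cols$-stable and checking that the $O(n^{-1/2})$ gap between exact double stochasticity and balancedness plays no role.
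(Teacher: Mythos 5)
Your proposal is correct and follows essentially the same route as the paper: both arguments use the colour-permutation factor $\cols!$ to reduce pairs with $\cols$-stable overlap to pairs $(\sigma,\tau)$ with $\tau\in\cC(\fhyp,\sigma)$, then invoke \textbf{T3} to bound the cluster size by $\Erw[\Zkb]$ and the remaining sum over tame $\sigma$ by $\Erw[\Zkb]$, yielding the constant $C(\cols)=\cols!$. Your write-up merely makes explicit two points the paper leaves implicit (that the dominant entries of a $\cols$-stable overlap form a permutation support because $\myconst>1/2$, and the relabelling bijection $\tau\mapsto\pi^{-1}\circ\tau$), which is a welcome clarification rather than a different method.
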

\begin{proof}
\pa{Recall that $a(\sigma,\tau)\in \mathcal{D}_q$ if and only if  there is a permutation of the colours of $\tau$ such that the resulting colouring is in $\mathcal{C}(\mathcal{G},\sigma)$. Therefore}
	\begin{align*}
		\textbf{E}[Z_{\cols ,\text{tame}}^2\cdot\textbf{1}_{\mathcal{D}_{\cols }}]
		&=\sum_{a(\sigma,\tau) \in\mathcal{D}_\cols }
			\textbf{P}\big[\sigma,\tau\text{ are tame colourings}  \big]
		\\&\leq\cols !\sum_{\text{balanced}\atop \sigma:[n]\mapsto[q] }
			\textbf{E}\big[\vert\mathcal{C}(\hyp,\sigma)\vert \,\big\vert\,\text{ 
			}\sigma\pa{\text{ is a tame colouring}} \big]\cdot\textbf{P}\big[\sigma	
			\pa{\text{ is a tame colouring}}\big]\\
		&\leq \cols ! \cdot\textbf{E}[Z_{\cols ,\text{bal}}]\sum_{\text{balanced}\atop \sigma:[n]\mapsto[q]  }\textbf{P}\big[\sigma	
			\pa{\text{ is a tame colouring}}\big]
   \leq \cols ! \cdot\textbf{E}[Z_{\cols ,\text{bal}}]^2,\hspace{1.6cm}\text{[by \textbf{T3}]},
	\end{align*}
as desired.
\end{proof}
	\begin{lemma}\label{secmomlaplace_3}
			\pa{If $F(a)<F(\bar{a})$ for all $a\in\mathcal{D}_{\text{tame}}\backslash\{\bar{a}\}$ then} we have  
	\begin{align*}
		\textbf{E}[Z_{\cols ,
			\text{tame}}^2\cdot\textbf{1}_{\mathcal{D}_{[\cols -1]}\backslash 
			\mathcal{E}}]\leq  \textbf{E}[Z_{\cols ,\text{bal}}]^2.
	\end{align*}
	\end{lemma}
\begin{proof}
	We take $\eta$ as in Lemma \ref{secmomlaplace} and set 
	\begin{align*}
		\mathcal{E}'=\{a\in \cR\cap\mathcal{D}_\text{tame}:\norm{a-\bar{a}}_2\geq \eta\}.
	\end{align*}
As $\mathcal{E}'$ is compact, the assumption that $F(a)<F(\bar{a})$ for all $a\in\mathcal{E}'$ additionally implies that there exists some $\gamma$ such that $\max_{a\in\mathcal{E}'}F(a)<F(\bar{a})-\gamma$. Then it follows from Lemma \ref{Lemma_DFGfirst} and (\ref{Fbara}) that
	\begin{align*}
		\textbf{E}[Z_{\cols ,
		\text{tame}}^2&\cdot\textbf{1}_{\mathcal{D}_{[\cols -1]}\backslash 
			\mathcal{E}}]
		\leq |\mathcal{E}'|\exp\left\{n(F(\bar{a})-\gamma)\right\}
		\leq n^{\cols ^2}\exp\left\{n(F(\bar{a})-\gamma)\right\}
		\\&\hspace{2cm}\leq \exp\left\{n(F(\bar{a})-\gamma/2)\right\}
		\leq \textbf{E}[Z_{\cols ,\text{bal}}]^2\cdot\exp\{-n\gamma/3\}\leq 	
		\textbf{E}[Z_{\cols ,\text{bal}}]^2,
	\end{align*}
as desired.
\end{proof}
Finally, (\ref{sect5secmom}) follows from combining Lemmas \ref{secmomlaplace}-\ref{secmomlaplace_3}.

\subsection{The maximisation problem: proof of Lemma \ref{Lemma_smmTame}}\label{Sec_max}

Throughout this subsection it is sufficient to assume that $c$ equals the upper bound of (\ref{c-bounds}), that is,  $$c=(q^{k-1}-1/2)\ln q -\ln 2 -1.01q/\ln q.$$ 
To see this, suppose that Lemma~\ref{Lemma_smmTame} is true with this value of $c$. 
Then $\bar{a}$ is the unique maximum of $F$ on $\Dtame$.
Now
$F$ is the sum of the concave function $H$ and the convex function $E$, which attain their maximum, respectively minimum, at $\bar{a}$. Further, since $H$ is independent of $c$ and $E$ is a linear multiple of $c$, decreasing the value of $c$ only makes the maximum of $F$ at $\bar{a}$ more pronounced.

\subsubsection{The strategy}

The proof is based on the local variation technique developed in \cite{Danny}. Roughly speaking, for each $0<s<\cols$ we will argue that for any arbitary $a\in\mathcal{D}_s$, we  can move slightly toward a nicer matrix while increasing $F$. The new matrix that we produce is then regular enough that we may perform calculations and compare it the point $\bar{a}(s)$ whose first $s$ diagonal entries are $1/\cols$, and whose $(i,j)$-entries are equal to $(\cols(\cols-s))^{-1}$ for $i,j>s$.  As it turns out, $\bar{a}(s)$ comes close enough to maximising $F$ over $\mathcal{D}_s$ (up to a negligible error term in each case). The final step is then to show that $F(\bar{a}(s))$ is strictly less than $F(\bar{a})$.

Let us take a moment to collect some results that will be used throughout the remainder of this section.
In particular, it may come as no surprise that in a local variations argument we make extensive use of derivatives. Taking partials of $F$ we have 
	\begin{align}\label{PartialF_Averaging}
		\bc{\frac{\partial}{\partial a_{ix}}-\frac{\partial}{\partial 
		a_{iy}}}F(a)&=\ln\frac{a_{iy}}{a_{ix}}
		+\frac{c\unif(a_{ix}^{\unif-1}-a_{iy}^{\unif-1})}
		{1-2/\cols^{\unif-1}+\norm{a}_\unif^\unif},\hspace{1cm}i,x,y\in [\cols].
	\end{align}
This represents the change in $F$ when we increase $a_{ix}$ at the expense of $a_{iy}$ (see \pa{\Lem}~\ref{Prop_Averaging}, which describes when the above quantity is positive). Further, we will often tackle the changes in entropy and energy separately. 
{We need the following elementary inequalities (cf.\ \cite[Corollary 4.10]{Danny}).}
As usual, the entropy of a vector $b\in [0,1]^q$ is defined by $H(b) = -\sum_{i\in [q]} b_i\ln b_i$.

\begin{fact}\label{Cor410}
	Let $\cols b\in [0,1]^\cols$ be such that $\sum_{i=1}^\cols b_i=1/\cols$, and define 
\begin{align*}
{	h:[0,1]\rightarrow\mathbb{R},\hspace{1cm}z\mapsto -z\ln z - (1-z)\ln (1-z).}
\end{align*}
Then 
\begin{enumerate}
	\item[(i)]for $J\subseteq [\cols]$ and  $r=\sum_{i\in J}\cols b_i$ we have $H(b) \leq h(r) +r\ln |J|+(1-r)\ln(\cols-|J|)$, and
	\item[(ii)] for $J\subseteq\{2,\dots,\cols\}$ with $0<|J|<\cols-1$ and $r=\sum_{i\in J}\cols b_i$, if $\cols b_1<1$ then
	\begin{align*}
		H(b)\leq h(\cols b_1)+(1-\cols b_1)h(r/(1-\cols a_1))+r\ln |J|+(1-r-\cols a_1)\ln (\cols-|J|-1).
	\end{align*}
\end{enumerate}
\end{fact}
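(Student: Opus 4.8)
Both parts are entropy-concavity statements, to be proved by grouping the coordinates and applying the ``chain rule'' for entropy (i.e.\ the standard decomposition $H(p) = h(\text{block weights}) + \sum_{\text{blocks}} (\text{weight}) \cdot H(\text{conditional distribution})$), followed by the elementary bound that the entropy of a probability distribution on a finite set $S$ is at most $\ln|S|$. Throughout, it is cleaner to renormalise: set $p_i = \cols a_i$, so that $(p_i)_{i\in[\cols]}$ is a genuine probability distribution on $[\cols]$ and $H(a) = -\sum_i a_i \ln a_i = \cols^{-1}\sum_i p_i \ln(\cols/p_i) = \ln\cols + \cols^{-1}\sum_i p_i\ln(1/p_i)$; equivalently one just works with the function $\tilde H(p) = -\sum_i p_i\ln p_i$ and keeps track of the additive $\ln\cols$ factors (the statement as written has already absorbed these, so I would double-check normalisation once at the start and then proceed).

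\medskip\noindent
\emph{Part (i).} Partition $[\cols]$ into $J$ and its complement, with total weights $r = \sum_{i\in J}p_i$ and $1-r = \sum_{i\notin J}p_i$. Write $\tilde H(p)$ as the entropy $h(r)$ of the two-block partition, plus $r$ times the entropy of the conditional distribution $(p_i/r)_{i\in J}$ on $J$, plus $(1-r)$ times the entropy of the conditional distribution on $[\cols]\setminus J$. Now bound the first conditional entropy by $\ln|J|$ and the second by $\ln(\cols-|J|)$. Translating back through the $\ln\cols$ normalisation gives exactly $H(a)\le h(r) + r\ln|J| + (1-r)\ln(\cols-|J|)$. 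This is essentially a one-line argument once the decomposition is written down.

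\medskip\noindent
\emph{Part (ii).} Here we refine the partition into three blocks: the singleton $\{1\}$ with weight $p_1 = \cols a_1$, the set $J\subseteq\{2,\dots,\cols\}$ with weight $r$, and the remaining $\cols - |J| - 1$ coordinates with weight $1 - p_1 - r$. The block entropy is now $h(p_1) + (1-p_1)\,h\!\big(r/(1-p_1)\big)$ — this is the entropy of first deciding ``$\{1\}$ vs.\ not'', and then, conditioned on ``not'', deciding ``$J$ vs.\ the rest'', which is where the hypothesis $\cols a_1 < 1$ is needed so that $r/(1-p_1)$ is a well-defined probability. To this we add $r$ times the (bounded-by-$\ln|J|$) conditional entropy on $J$ and $(1-p_1-r)$ times the (bounded-by-$\ln(\cols-|J|-1)$) conditional entropy on the remaining block; the singleton contributes nothing. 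Assembling these and undoing the normalisation yields the claimed inequality. The only subtlety is keeping the nested conditional-entropy bookkeeping straight and handling the degenerate boundary cases (e.g.\ $r=0$ or $p_1 = 0$), where one interprets $0\ln 0 = 0$ and the inequalities hold trivially.

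\medskip\noindent
I do not anticipate a genuine obstacle here: the statement is a packaging of standard sub-additivity/grouping properties of Shannon entropy, and the proof is routine. The one place to be careful is the constraint $0 < |J| < \cols - 1$ in (ii), which guarantees both of the $\ln(\cdot)$ arguments in (ii) are at least $1$ (so the logs are nonnegative and the bounds are not vacuous in the wrong direction) and that the third block is nonempty; I would state explicitly at the outset that these hypotheses are exactly what is needed for the conditional distributions and their supports to be well defined.
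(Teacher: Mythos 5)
Your argument is correct, but note that the paper itself offers no proof of Fact~\ref{Cor410}: it is quoted as an elementary inequality with a pointer to \cite[Corollary~4.10]{Danny}, so your write-up supplies the missing (standard) argument rather than paralleling one in the text. The route you take --- the chain rule/grouping identity $H(p)=H(\text{block weights})+\sum_b w_b H(p|_b/w_b)$ followed by the bound $H\leq\ln|\mathrm{support}|$ on each conditional distribution, with the three-block refinement $\{1\},J,[\cols]\setminus(\{1\}\cup J)$ and the two-stage identity $H_3(p_1,r,1-p_1-r)=h(p_1)+(1-p_1)h\bigl(r/(1-p_1)\bigr)$ for part (ii) --- is exactly the kind of argument the cited corollary rests on, and your handling of the degenerate cases ($r=0$, $p_1=0$, $0\ln0=0$) is fine. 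Two small points. First, your normalisation remark is not just bookkeeping: read literally with $H(a)=-\sum_i a_i\ln a_i$ and $\sum_i a_i=1/\cols$, part (i) is false (take $a_1=1/\cols$, all other $a_i=0$, $J=\{1\}$: the left side is $(\ln\cols)/\cols>0$ while the right side is $0$); the inequality holds for the entropy of the probability vector $\cols a$, which is how the paper actually invokes the Fact (e.g.\ $H(\cols\hat a_i)\leq h(r_i)+r_i\ln(\cols-1)$), so your decision to prove the statement for $\tilde H(\cols a)$ is the correct reading and worth stating explicitly. Second, your displayed identity has a slip: $-\sum_i a_i\ln a_i=(\ln\cols)/\cols+\cols^{-1}\tilde H(\cols a)$, not $\ln\cols+\cols^{-1}\tilde H(\cols a)$; this does not affect the rest, since you immediately switch to working with $\tilde H$.
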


The following lemma is the main tool to carry out the local variations argument.
Recall that $\cS$ is the set of all matrices $a=(a_{ij})_{i,j\in[\cols]}$ with entries $a_{ij}\geq0$ such that $\sum_ja_{ij}=1/\cols$ for all $i$.

\begin{lemma}\label{Prop_Averaging}
Suppose $a\in \cS$.
If $i\in[\cols]$ and $\emptyset\neq J\subseteq [\cols]$ are such that for some number $3\ln \ln \cols/ \ln \cols \leq  \mu \leq 1$ we have
	\begin{align}\label{eqProp_Averaging}
		|J|\geq \cols^\mu\hspace{0.5cm}\text{and}\hspace{0.5cm} 
		\max_{j\in J} 
		a_{ij}^{\unif-1}<\frac{0.995}{\unif\cols^{\unif-1}}\left( \mu - \ln\ln \cols/\ln \cols\right),
	\end{align}
 then the matrix $\tilde{a}\in\cS$ obtained from $a$ by setting
	\begin{align*}
		\tilde{a}_{xy}=\vecone\{(x,y)\notin \{i\} \times J \}a_{xy}+
			\frac{\vecone\{(x,y)\in \{i\} \times J \}}{|J|}\sum_{j\in J}a_{ij}
	\end{align*}
is such that $F(a)\leq F(\tilde{a})$.
In fact, the inequality is strict unless $a=\tilde a$.
\end{lemma}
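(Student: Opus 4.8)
## Proof plan for Lemma~\ref{Prop_Averaging}

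\textbf{Overall strategy.} The claim is that ``flattening'' the entries $a_{ij}$, $j\in J$, in a single row $i$ to their common average cannot decrease $F=H+E$, under the hypothesis~(\ref{eqProp_Averaging}). Write $r=\sum_{j\in J}a_{ij}$ and $\bar r = r/|J|$ for the post-averaging value of each entry in $\{i\}\times J$. The plan is to treat the entropy gain $H(\tilde a)-H(a)$ and the energy change $E(\tilde a)-E(a)$ separately: the entropy part only \emph{increases} (strictly, unless $a=\tilde a$) because $-z\ln z$ is strictly concave and $\tilde a$ is obtained by replacing a tuple with its average; the energy part may decrease, and the whole point of the hypothesis~(\ref{eqProp_Averaging}) is to guarantee that the energy loss is dominated by the entropy gain.

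\textbf{Step 1: the entropy gain.} Since $H(a)=-\sum_{x,y}a_{xy}\ln a_{xy}$ and only the entries in $\{i\}\times J$ change, we have
\[
H(\tilde a)-H(a)=|J|\,\phi(\bar r)-\sum_{j\in J}\phi(a_{ij}),\qquad \phi(z)=-z\ln z,
\]
which is nonnegative by Jensen applied to the strictly concave $\phi$, and is $0$ only when all $a_{ij}$, $j\in J$, are already equal (i.e.\ $a=\tilde a$). For the quantitative comparison I would lower-bound this gain: writing $a_{ij}=\bar r+\delta_j$ with $\sum_j\delta_j=0$, a second-order expansion of $\phi$ gives $H(\tilde a)-H(a)\ge \tfrac{1}{2\bar r}\sum_j\delta_j^2 + (\text{lower order})$, but a cleaner route is to use the fact that $\sum_{j\in J}\phi(a_{ij})\le \phi(r)+r\ln|J|$ is \emph{not} what we want; instead bound below by noting $\sum_{j\in J}a_{ij}\ln a_{ij}\ge \sum_{j\in J}a_{ij}\ln(\max_{j}a_{ij})$. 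Actually the most robust approach, and the one I would pursue, is: since the averaged value satisfies $\bar r\le \max_j a_{ij}$, we get $H(\tilde a)-H(a)=r\ln\frac{(\max\text{-weighted geometric mean})}{\bar r}\cdot(\ldots)$ — so I would simply record the exact identity above and defer the estimate to Step~3, where it is compared against the energy term.

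\textbf{Step 2: the energy change.} By the formula for $E$, $E(a)=c\ln(1-2q^{1-k}+\norm a_k^k)$, and only $\norm a_k^k$ changes. Set $S=1-2q^{1-k}+\norm a_k^k$; then
\[
E(\tilde a)-E(a)=c\ln\frac{S+|J|\bar r^{\,k}-\sum_{j\in J}a_{ij}^{\,k}}{S}
= c\ln\!\Big(1-\frac{\sum_{j\in J}a_{ij}^{\,k}-|J|\bar r^{\,k}}{S}\Big).
\]
By power-mean (convexity of $z\mapsto z^k$) the bracketed numerator $\sum_j a_{ij}^k-|J|\bar r^k\ge 0$, so $E$ can only decrease; using $\ln(1-x)\ge -x/(1-x)$ and $S=1+O(q^{1-k})$, the energy loss is at most $(1+o(1))\,c\sum_{j\in J}a_{ij}^{\,k}$ in absolute value (dropping the negative $-|J|\bar r^k$ only helps the bound), and in fact the sharper estimate keeps the $-|J|\bar r^k$ term, which I expect to be needed. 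Here is where~(\ref{eqProp_Averaging}) enters: $a_{ij}^{k}=a_{ij}\cdot a_{ij}^{k-1}\le a_{ij}\cdot \frac{0.995}{kq^{k-1}}(\mu-\ln\ln q/\ln q)$, so $\sum_{j\in J}a_{ij}^k\le \frac{0.995\,r}{kq^{k-1}}(\mu-\ln\ln q/\ln q)$, and with $c\le q^{k-1}\ln q$ the energy loss is at most roughly $0.995\,r\,\ln q\,(\mu-\ln\ln q/\ln q)/k$.

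\textbf{Step 3: balancing the two, and the main obstacle.} The crux is to show the entropy gain beats this. The entropy gain, using $|J|\ge q^\mu$ and concavity, should be at least on the order of $r\ln\frac{|J|\bar r \text{ vs } \sum a_{ij}}{\ldots}$ — concretely I would argue $H(\tilde a)-H(a)\ge r\ln|J| - \big(-\sum_{j\in J}a_{ij}\ln a_{ij} + r\ln r\big)$ is the wrong grouping; the right one is $H(\tilde a)-H(a)= \sum_{j\in J}a_{ij}\ln\frac{a_{ij}}{\bar r} = r\,D(\cdot\|\text{uniform on }J)\ge 0$ where the relative entropy is against the uniform distribution on $J$ weighted by $r$. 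To get a usable lower bound I would instead bound $H(\tilde a)-H(a)$ from below by $-r\ln(\max_j a_{ij})+r\ln\bar r$... no: rather, combine $\sum_{j\in J}a_{ij}\ln a_{ij}\le r\ln(\max_j a_{ij})\le r\ln\big(\frac{0.995^{1/(k-1)}}{k^{1/(k-1)}q}(\mu-\ln\ln q/\ln q)^{1/(k-1)}\big)$ against $|J|\bar r\ln\bar r = r\ln(r/|J|)\ge r\ln\bar r$ with $\bar r = r/|J|\le r/q^\mu$, giving $H(\tilde a)-H(a)\ge r\ln\frac{|J|\cdot(\text{lower bound on }\bar r)^{-1}\cdot\ldots}{\ldots}$. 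The honest statement: the hypothesis is exactly calibrated so that $\mu\ln q$ (from $\ln|J|\ge \mu\ln q$ in the entropy gain) exceeds $0.995(\mu-\ln\ln q/\ln q)\ln q$ (from the energy loss), with the $\ln\ln q/\ln q$ slack absorbing the $O(q^{1-k})$ corrections in $S$, the gap between $\sum a_{ij}^{k-1}$ and its max, and the contribution of the $-|J|\bar r^k$ term. \textbf{I expect the main obstacle to be precisely this bookkeeping}: tracking that all the lower-order error terms (from $\ln(1-x)$ expansion, from $S\ne 1$, from $c$ possibly below its upper bound, from $\mu$ possibly as small as $3\ln\ln q/\ln q$ where the slack is tightest) fit inside the $0.005$ and the $\ln\ln q/\ln q$ margins, and in particular verifying that the inequality is \emph{strict} unless $a=\tilde a$ — which follows because when $a\ne\tilde a$ the entropy gain is strictly positive while the energy loss bound still holds, so some care is needed to show the strict entropy gain is not itself swamped. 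I would conclude by assembling $F(\tilde a)-F(a)=[H(\tilde a)-H(a)]+[E(\tilde a)-E(a)]\ge 0$ from Steps~1–2 and the calibration in Step~3, with strictness from the strict concavity of $\phi$ when the $a_{ij}$, $j\in J$, are not all equal.
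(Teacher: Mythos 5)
Your Steps 1 and 2 are correct but they are the routine part of the lemma: averaging over $\{i\}\times J$ increases $H$ by concavity and decreases $\norm a_\unif^\unif$ (hence $E$) by convexity. The real content is the balancing step, and there your plan has a genuine gap. You propose to bound the total entropy gain from below (via $\ln|J|\geq\mu\ln\cols$, aiming at something of order $r\mu\ln\cols$ with $r=\sum_{j\in J}a_{ij}$) and the total energy loss from above by roughly $\tfrac{c}{S}\sum_{j\in J}a_{ij}^{\unif}\leq 0.995\,r\,(\mu\ln\cols-\ln\ln\cols)/\unif$, and then compare constants. But the entropy gain equals $\sum_{j\in J}a_{ij}\ln(a_{ij}/\bar r)$ with $\bar r=r/|J|$, and it admits no lower bound of order $r\mu\ln\cols$: it tends to $0$ as the row restricted to $J$ approaches the flat vector, while your energy bound (if you drop the $-|J|\bar r^{\unif}$ term) stays first order in $r$. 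Keeping that term, as you suggest, makes the energy estimate second-order correct near flat configurations, but then the two sides must be compared in a correlated way. A second-order expansion around the flat vector, writing $a_{ij}=\bar r(1+\eps_j)$, gives a net change under flattening of roughly $\tfrac{\bar r}{2}\bigl[1-\tfrac{c\unif(\unif-1)}{S}\bar r^{\unif-1}\bigr]\sum_{j}\eps_j^2$, and the bracket is positive only because $\bar r\leq\tfrac1{\cols|J|}\leq\cols^{-1-\mu}$, i.e.\ because the row-sum constraint is combined with $|J|\geq\cols^{\mu}$; hypothesis (\ref{eqProp_Averaging}) alone, which allows entries with $a_{ij}^{\unif-1}$ close to $\tfrac{0.995}{\unif\cols^{\unif-1}}$, is not sufficient, since positivity needs $\bar r^{\unif-1}\lesssim\tfrac{1}{\unif(\unif-1)\cols^{\unif-1}\ln\cols}$. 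Your sketch never uses this bound on $\bar r$ (you use $|J|\geq\cols^\mu$ only through the invalid ``entropy gain $\geq r\mu\ln\cols$''). Likewise, a term-by-term comparison based only on $a_{ij}\ln(a_{ij}/\bar r)\geq a_{ij}-\bar r$ and $a_{ij}^{\unif}-\bar r^{\unif}\leq\unif a_{ij}^{\unif-1}(a_{ij}-\bar r)$ would require $\tfrac{c\unif}{S}a_{ij}^{\unif-1}\leq1$, which under (\ref{eqProp_Averaging}) can fail by a factor of order $\mu\ln\cols$.

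For comparison, the paper proves the lemma by a pairwise local-variations argument: for $x,y\in J$ with $a_{ix}$ minimal and $a_{ix}<a_{iy}$ it shows that $\bigl(\tfrac{\partial}{\partial a_{ix}}-\tfrac{\partial}{\partial a_{iy}}\bigr)F(a)=\ln\tfrac{a_{iy}}{a_{ix}}+\tfrac{c\unif(a_{ix}^{\unif-1}-a_{iy}^{\unif-1})}{1-2\cols^{1-\unif}+\norm a_\unif^\unif}$ is positive, reducing by concavity in $\delta=a_{iy}^{\unif-1}-a_{ix}^{\unif-1}$ to one extreme value $\delta^\ast$ and, crucially, using $a_{ix}\leq\tfrac1{|J|}\sum_{j\in J}a_{ij}\leq\tfrac1{\cols|J|}$ to convert $|J|\geq\cols^\mu$ into the $\mu\ln\cols$ that beats the energy term. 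So the marginal entropy gain and marginal energy loss are compared entry pair by entry pair, which is exactly the correlation your global $H$-versus-$E$ split loses. To complete your route you would have to restructure Step 3 into such a pairwise (or carefully case-split term-wise) comparison and bring in the bound on the minimal entry or on $\bar r$ coming from the row sum; as written, the central inequality of Step 3 is unproved and, in the form stated, false.
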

\begin{proof}
Take $i\in [\cols]$, $J\subset [\cols]$ 
as described and $x,y\in J$ such that 
	\begin{align*}
		a_{ix}^{\unif-1}=\min_{j\in J}a^{\unif-1}_{ij}< a^{\unif-1}_{iy}  < \frac{0.995}{\unif
	\cols^{\unif-1}}\left( 		
		\mu - \ln\ln \cols/\ln 
		\cols\right).
	\end{align*}
We will show that (\ref{PartialF_Averaging}) is positive for the range of $a_{ix}$ and $a_{iy}$ that we have at hand. It will be convenient to make the substitution $\delta_{xy}=a_{iy}^{\unif-1}-a_{ix}^{\unif-1}>0$ and instead consider whether 
	\begin{align}\label{Prop_Averaging_PartialEqn}
		(\unif-1)&\bc{\frac{\partial}{\partial a_{ix}}-\frac{\partial}
		{\partial 
		a_{iy}}}F(a)
		=\ln\left(\left(\frac{a_{iy}}{a_{ix}}\right)^{\unif-1}\right)
	-\frac{c\unif(\unif-1)(a_{iy}^{\unif-1}-a_{ix}^{\unif-1})}
	{1-2/\cols^{\unif-1}+\norm{a}_\unif^\unif}\nonumber
		\\&\hspace{5cm}=\ln\left(1+\frac{\delta_{xy}}{a_{ix}^{\unif-1}}\right)
		-\frac{c\unif(\unif-1)\delta_{xy}}{1-2/\cols^{\unif-1}+\norm{a}_\unif^\unif}=:
		\Delta(\delta_{xy})>0.
	\end{align}
After noting that $\Delta(0)=0$, it follows from the concavity of $\Delta$ that if $\delta^{\ast}>0$ satisfies (\ref{Prop_Averaging_PartialEqn}) then so does $\delta_{xy}$ for all $0<\delta_{xy}<\delta^{\ast}$. Therefore we take 
	\begin{align*}
		\delta^\ast =\frac{0.999}{\unif\cols^{\unif-1}}\left( \mu - \ln\ln \cols/\ln 		
			\cols\right)>\max_{x,y\in J}\delta_{xy}=\max_{x,y\in J} \left|
			a_{iy}^{\unif-1}-a_{ix}^{\unif-1}\right|,
	\end{align*}
and observe that $a_{ix}\leq \frac{1}{|J|}\sum_{j\in J} a_{ij}\leq\frac{1}{\cols|J|}$. After taking the exponential of (\ref{Prop_Averaging_PartialEqn}), we have
	\begin{align*}
		&\exp\left\{\frac{c\unif(\unif-1)\delta^\ast}{1-2/\cols^{\unif-1}+\norm{a}_\unif^\unif} 
			\right\}
			<\exp\left\{(\unif-1)\ln \cols\left(\mu - \ln \ln \cols / \ln
			\cols\right)\right\} 
		= \left(\cols^\mu/\ln \cols\right)^{\unif-1}
			\leq \left(|J|/\ln
		 	\cols\right)^{\unif-1}
		 	\\&
		 \hspace{0.5cm}\leq \left(\cols a_{ix}\ln \cols\right)^{1-\unif}
			\leq 1+\frac{1.99\ln\ln \cols}{\unif\cols^{\unif-1}a_{ix}^{\unif-1}\ln \cols} 
		\leq 1+ \frac{0.995}{\unif\cols^{\unif-1}}\left( \mu - \ln\ln 
			\cols/\ln \cols\right)\cdot \frac{1}{a_{ix}^{\unif-1}}
			< 1+\delta^{\ast}/a_{ix}^{\unif-1},
	\end{align*}
	as required.
\end{proof}

\noindent
In other words, if we take a row $i$ and a set $J$ of not too few columns such that the largest entry $a_{ij}$, $j\in J$, is not too big,
then the function value does not drop if we replace all entries $a_{ij}$, $j\in J$, by their average.
Thus, \pa{\Lem}~\ref{Prop_Averaging} can be used to ``flatten'' parts of the matrix $a$ without reducing the function value.

In what follows we will use \pa{\Lem}~\ref{Prop_Averaging} to show
that Lemma \ref{Lemma_smmTame} holds for each $0\leq s <\cols$ separately. Formally, we set out to show that:

\begin{claim}\label{Prop0abar}
For all $a\in\mathcal{D}_{0,\text{tame}}\backslash\{\bar{a}\}$ we have $F(a)<F(\bar{a})$.
\end{claim}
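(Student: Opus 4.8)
\textbf{Proof proposal for Claim~\ref{Prop0abar}.}

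The plan is to show that $\bar a$ is the unique maximiser of $F$ over the $0$-stable tame domain $\dtame 0$ by combining the ``flattening'' lemma (\Lem~\ref{Prop_Averaging}) with a direct comparison once the matrix has been made sufficiently regular. First I would observe that if $a\in\dtame 0$ then, by definition of $0$-stability and separability, \emph{every} entry satisfies $a_{ij}\leq\cols^{-1}\myconst$; in particular $a_{ij}^{\unif-1}\leq\cols^{-(\unif-1)}(1.01/\unif)\cdot(1+o_\cols(1))$, which is comfortably below the threshold $\tfrac{0.995}{\unif\cols^{\unif-1}}(\mu-\ln\ln\cols/\ln\cols)$ appearing in~(\ref{eqProp_Averaging}) once we take, say, $\mu=1$ (since then $\mu-\ln\ln\cols/\ln\cols\to1$ and $1.01/\unif<0.995/\unif$ for $\unif\geq3$). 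Hence for each row $i$ we may apply \Lem~\ref{Prop_Averaging} with $J=[\cols]$: replacing row $i$ by its average does not decrease $F$, and strictly increases it unless the row is already constant. Applying this successively to all $\cols$ rows, we conclude that $F(a)\leq F(a^{\mathrm{row}})$ where $a^{\mathrm{row}}$ has all rows constant, i.e.\ $a^{\mathrm{row}}_{ij}=1/(\cols^2)$ for all $i,j$ because each row sum is $1/\cols$; that is, $a^{\mathrm{row}}=\bar a$. If $a\neq\bar a$ then at some stage a non-constant row was flattened, so the inequality is strict and $F(a)<F(\bar a)$.

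There is one subtlety to handle carefully: applying \Lem~\ref{Prop_Averaging} to row $i$ changes the entries in that row, so before flattening row $i+1$ I must recheck that the hypothesis~(\ref{eqProp_Averaging}) still holds for the \emph{current} matrix. But flattening a row only replaces its entries by their average $\tfrac1{\cols|J|}\sum_{j}a_{ij}\le\tfrac1{\cols^2}$, which is even smaller than $\cols^{-1}\myconst$, so the bound $a_{ij}^{\unif-1}\le\cols^{-(\unif-1)}(1.01/\unif)(1+o_\cols(1))$ is preserved (indeed improved) for the flattened row, and rows not yet touched still satisfy it since separability of the original $a$ is unaffected. Also, $\cS\supseteq\cD$, so \Lem~\ref{Prop_Averaging} applies verbatim to our $a\in\cD\subseteq\cS$, and the flattened matrix stays in $\cS$ throughout; since the column sums are preserved at each step (averaging within a row does not change $\sum_x a_{xy}$ summed appropriately — actually it does change individual column sums, but the \emph{final} all-$\cols^{-2}$ matrix lies in $\cD$ regardless), the comparison $F(a)\le F(\bar a)$ is valid. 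One must be a little cautious that intermediate matrices may leave $\cD$; this is harmless because \Lem~\ref{Prop_Averaging} is stated for $\cS$, and we only need the chain of inequalities $F(a)\le F(a^{(1)})\le\cdots\le F(\bar a)$, all of whose terms are defined on $\cS$.

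The main obstacle I anticipate is the strictness claim: $F(a)<F(\bar a)$ rather than merely $\le$. \Lem~\ref{Prop_Averaging} already asserts the inequality is strict unless $a=\tilde a$, so if $a\ne\bar a$ then at least one row of $a$ is non-constant and the corresponding flattening step is strict — this gives strictness directly, provided we flatten that row while it is still non-constant, which we can arrange by choosing the order of rows so that a non-constant row is treated first. So in fact the argument is clean. A secondary point worth verifying explicitly is the numerical inequality $1.01/\unif < 0.995\,(1-\ln\ln\cols/\ln\cols)/\unif$ with room to spare for all $\unif\ge3$ and $\cols>\cols_0$; since $\ln\ln\cols/\ln\cols\to0$, the right side tends to $0.995/\unif>1.01/\unif$? — no: $0.995<1.01$, so this particular choice fails and one should instead take a \emph{smaller} exponent, e.g.\ $\mu$ with $\cols^\mu=\cols$, and note that what is actually needed is $\max_j a_{ij}^{\unif-1}<\tfrac{0.995}{\unif\cols^{\unif-1}}(\mu-\ln\ln\cols/\ln\cols)$ with $\mu=1$, i.e.\ $a_{ij}\le\cols^{-1}\myconst$ giving $a_{ij}^{\unif-1}\le\cols^{1-\unif}\cdot(1.01/\unif)$, and we need $1.01/\unif<0.995(1-o(1))/\unif$ — false. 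The fix is that tame matrices in $\dtame 0$ actually have $a_{ij}<\cols^{-1}\myconst$ \emph{with all entries} this small, but more importantly the averaged values are $\le\cols^{-2}$, so after a single preliminary averaging of all rows the hypothesis holds trivially; alternatively one applies \Lem~\ref{Prop_Averaging} with $J$ a suitable subset. I would resolve this by first noting any $a\in\dtame0$ with a non-flat row has some entry $<1/\cols^2$ in that row, pick $J$ to be the columns below average, apply the lemma to strictly increase $F$, and iterate — the details are routine once the threshold bookkeeping is done, and this bookkeeping is the only place real care is needed.
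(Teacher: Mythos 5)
You have correctly identified the crux, but your proposed repair does not close it. For $a\in\dtame 0$ the only a priori bound is $a_{ij}\leq \cols^{-1}(1.01/\unif)^{1/(\unif-1)}$, so $a_{ij}^{\unif-1}$ can be as large as $\tfrac{1.01}{\unif\cols^{\unif-1}}$, which exceeds the threshold $\tfrac{0.995}{\unif\cols^{\unif-1}}(1-\ln\ln\cols/\ln\cols)$ in~(\ref{eqProp_Averaging}); hence \Lem~\ref{Prop_Averaging} with $J=[\cols]$ simply cannot be applied to a row containing such an entry, and this is not a bookkeeping issue but a genuine missing case. Your first fix (``after a single preliminary averaging of all rows the hypothesis holds trivially'') is circular: the preliminary averaging is exactly the step whose validity is in question. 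Your second fix (take $J$ to be the below-average columns and iterate) also fails to reach the conclusion: \Lem~\ref{Prop_Averaging} only ever replaces the entries indexed by $J$ by their average, so the offending large entries — which can never be admitted into $J$ because they violate~(\ref{eqProp_Averaging}) — are never reduced. The iteration therefore stalls at a matrix that still has those large entries, and you obtain neither a chain of inequalities terminating at $\bar a$ nor any direct comparison of its limit with $F(\bar a)$; strictness of individual steps does not substitute for this.

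The paper fills precisely this gap with a separate, non-variational argument (Claim~\ref{prop_099101}): if some entry of row $i$ lies in $[1.02/(\cols\unif),\,\cols^{-1}(1.01/\unif)^{1/(\unif-1)}]$, one replaces the \emph{entire} row by the flat row $(\cols^{-2},\dots,\cols^{-2})$ and shows by an explicit entropy/energy estimate that $F$ strictly increases: after first using \Lem~\ref{Prop_Averaging} (legitimately, on $J=[\cols]\setminus\{1\}$ with $\mu=\ln(\cols-1)/\ln\cols$) to reduce to the case where the remaining entries of the row are equal, the entropy gain is at least $\cols^{-1}\bigl((1.02\ln\cols)/\unif-\ln 2\bigr)$ while the energy loss is at most about $1.02\,\unif^{-\unif/(\unif-1)}\cols^{-1}\ln\cols$, and the former dominates for large $\cols$. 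The row-by-row scheme then proceeds with a case split at $1.02/(\cols\unif)$: rows with all entries below this value are flattened via \Lem~\ref{Prop_Averaging} with $J=[\cols]$, $\mu=1$ (for which the hypothesis does hold, since $(1.02/\unif)^{\unif-1}\unif\ll 0.995$ for $\unif\geq 3$), and rows with a larger entry are handled by Claim~\ref{prop_099101}. Your treatment of strictness and of the intermediate matrices leaving $\cD$ (working in $\cS$ throughout) matches the paper and is fine; what is missing is this second, quantitative ingredient, without which the argument does not cover all of $\dtame 0$.
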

\begin{claim}\label{Prop0999abar}
Suppose that $1\leq s \leq \cols^{0.999}$. Then for all $a\in\mathcal{D}_{s,\text{tame}}$ we have $F(a)<F(\bar{a})$.
\end{claim}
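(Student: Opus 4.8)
The plan is to show that $F(a)<0$ for every $a\in\mathcal{D}_{s,\text{tame}}$ with $1\le s\le\cols^{0.999}$; since $F(\bar a)=2\cols^{1-\unif}\ln2+o_\cols(\cols^{1-\unif})>0$ by~(\ref{baraf}), this yields the claim. (It is the strict negativity $F(a)<0$, rather than the softer $F(a)\le F(\bar a(s))+\cols^{0.999-\unif}$ needed for the full Lemma~\ref{Lemma_smmTame}, that makes the small-$s$ regime easier.) First I would put $a$ in normal form. Since $a$ is $s$-stable, exactly $s$ entries exceed $\cols^{-1}\myconst$; by separability each such entry in fact exceeds $\cols^{-1}(1-\kappa)$, while every other entry is at most $\cols^{-1}\myconst$. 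As the rows and columns of $\cols a$ are probability vectors, no two large entries can share a row or a column, so after permuting rows and columns (which fixes $H$, $\norm{\cdot}_\unif$ and hence $F$) we may assume the large entries are $a_{11},\dots,a_{ss}$; then for $i>s$ the large entry in column $j\le s$ forces $a_{ij}<\kappa/\cols$, so each row $i>s$ carries less than $s\kappa/\cols$ of its mass on columns $\le s$. Next I would apply Lemma~\ref{Prop_Averaging} to each row $i>s$ in turn, taking $J$ to be the set of columns $j>s$ on which $a_{ij}$ lies below the lemma's threshold: since $\cols-s\ge\cols-\cols^{0.999}>\cols^{0.9999}$ one may take $\mu=0.9999$, and $|J|\ge(\cols-s)-O_\cols(1)\ge\cols^\mu$, because at most $O_\cols(1)$ columns $j>s$ can have $a_{ij}$ of the critical order $\Theta(\cols^{-1})$. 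Thus $F(a)\le F(\tilde a)$; the matrix $\tilde a$ need not lie in $\cD$ any more, but we use it only as an upper bound for $F$. By construction each row $i>s$ of $\tilde a$ is a flat block (all entries equal, of value $\le\cols^{-1}|J|^{-1}$) on $\ge(\cols-s)-O_\cols(1)$ columns, together with at most $O_\cols(1)$ ``spike'' entries, each at most $\cols^{-1}\myconst$ but bounded away from $0$, and $s$ entries $<\kappa/\cols$ on columns $\le s$ (the columns $\le s$ being untouched by the flattening). Relabel $\tilde a$ as $a$.

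The core of the argument is to reduce $F$ to a per-row functional and bound it row by row. Writing $a_{ij}=\cols^{-1}b_{ij}$ so that each $b_i$ is a probability vector, using $c=(\cols^{\unif-1}-1/2)\ln\cols+O_\cols(1)$, and noting that $\norm a_\unif^\unif=\cols^{-\unif}\sum_i\norm{b_i}_\unif^\unif\le\cols^{1-\unif}$ is small, a first-order expansion of $E(a)=c\ln(1-2\cols^{1-\unif}+\norm a_\unif^\unif)$ gives
\begin{align*}
F(a)=-\ln\cols+\frac1\cols\sum_{i=1}^\cols g(b_i)+\widetilde O_\cols(\cols^{1-\unif}),\qquad g(b):=-\sum_j b_j\ln b_j+\ln\cols\cdot\norm b_\unif^\unif.
\end{align*}
For $i\le s$, $b_{ii}>1-\kappa$ together with Fact~\ref{Cor410}(i) (with $J=\{i\}$) bounds the entropy by $\widetilde O_\cols(\cols^{1-\unif})$, so $g(b_i)\le\ln\cols+\widetilde O_\cols(\cols^{1-\unif})$. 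For $i>s$, after discarding the $<s\kappa$ mass on columns $\le s$ at a cost $\widetilde O_\cols(s\kappa\ln\cols)$ in $g$, the vector $b_i$ lives on $\cols-s$ columns, equals its uniform average away from $O_\cols(1)$ spikes, and has all entries $\le\myconst$. Each spike of scaled size $\beta\le\myconst$ withdraws mass $\beta$ from the uniform block (costing about $\beta\ln\cols$ of entropy) while adding only $-\beta\ln\beta+O_\cols(1)$ of entropy and $\beta^\unif\ln\cols$ of energy, a net change $\approx\ln\cols(\beta^\unif-\beta)+O_\cols(1)<0$ because $\myconst<1$; hence $g(b_i)$ cannot exceed its uniform value $\ln(\cols-s)+\ln\cols\,(\cols-s)^{1-\unif}=\ln(\cols-s)+\widetilde O_\cols(\cols^{1-\unif})$. (This is exactly where Fact~\ref{Cor410} and the sign computation~(\ref{PartialF_Averaging}) behind Lemma~\ref{Prop_Averaging} are invoked, using $\ln\cols=o_\cols((\cols-s)^{\unif-1})$.) Summing the two cases and using $\sum_{i>s}\sum_{j\le s}b_{ij}\le s\kappa$,
\begin{align*}
F(a)\le-\ln\cols+\frac s\cols\ln\cols+\frac{\cols-s}\cols\ln(\cols-s)+\widetilde O_\cols(\cols^{1-\unif})=(1-s/\cols)\ln(1-s/\cols)+\widetilde O_\cols(\cols^{1-\unif}).
\end{align*}
Since $1\le s\le\cols^{0.999}$ forces $0<s/\cols\le\cols^{-0.001}\le1/2$, we get $(1-s/\cols)\ln(1-s/\cols)\le-s/(2\cols)\le-1/(2\cols)$, whereas $\unif\ge3$ makes $\widetilde O_\cols(\cols^{1-\unif})=o_\cols(\cols^{-1})$; therefore $F(a)<0<F(\bar a)$, which is the claim.

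The step I expect to be the main obstacle is the row estimate just above: one must prove that ``spikes'' near the separability threshold $\myconst$ never raise $g$, i.e.\ that for every density $c$ in the range~(\ref{c-bounds}) and every $\unif\ge3$ the entropy lost by concentrating mass strictly beats the entropy-plus-energy gained, quantitatively and uniformly in $s$ and in the number and sizes of the spikes. Carrying this out rigorously is precisely the role of the local-variation lemma (Lemma~\ref{Prop_Averaging}) and of the entropy inequalities of Fact~\ref{Cor410}; everything else (the normal-form reduction, the Taylor expansion of $E$, and the final comparison with $F(\bar a)$) is routine.
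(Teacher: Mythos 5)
Your proposal is correct in substance and follows the same general philosophy as the paper (flatten rows via the local-variation \Lem~\ref{Prop_Averaging}, then control entropy and energy with \Lem~\ref{Lemma_DFGsecond}'s $H,E$ and Fact~\ref{Cor410}), but it takes a genuinely different route through the details. The paper first invokes Claim~\ref{prop_099101} to assume every non-stable entry satisfies $\cols a_{ij}<1.02/\unif$, then flattens each row over the whole block $J=[\cols]\setminus[s]$, compares the result with the canonical matrix $\bar a(s)$, and only then descends to $F(\bar a)$ via the precomputed expansion~(\ref{sstable}); you instead keep the possible intermediate ``spike'' entries, exclude them from the averaging set, decompose $F$ into a per-row functional $g(b)=H(b)+\ln\cols\,\norm b_\unif^\unif$, and bound each row directly, landing on $(1-s/\cols)\ln(1-s/\cols)+o_\cols(1/\cols)<0<F(\bar a)$ without ever passing through $\bar a(s)$. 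This buys a shorter, more transparent comparison (and the stronger conclusion $F(a)<0$), at the price of having to redo, inline and per row, the ``spikes never help'' estimate; the paper's detour through $\bar a(s)$ is what lets it reuse~(\ref{sstable}), which it needs anyway for \Lem~\ref{Lemma_smmTame} in the other regimes.

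Two caveats on the step you yourself flag as the crux. First, the rigorous tool for the spike estimate cannot be \Lem~\ref{Prop_Averaging}: its hypothesis~(\ref{eqProp_Averaging}) fails precisely at the spike entries, which is why you excluded them from $J$ in the first place. What does the job is either the direct computation you sketch (which is sound: each spike $\beta\in[\tau_0,\myconst]$ with $\tau_0,\myconst$ bounded away from $0$ and $1$ in terms of $\unif$ gives a net change $\ln\cols\,(\beta^\unif-\beta)+O_\cols(1)<0$ for $\cols>\cols_0(\unif)$, and there are only $O_\unif(1)$ spikes per row) or, equivalently, the paper's Claim~\ref{prop_099101}, which is exactly this entropy-versus-energy comparison packaged as a lemma; you should cite or reprove that rather than attribute it to \Lem~\ref{Prop_Averaging}. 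Second, a bookkeeping point: the cost of the mass on columns $\leq s$ is $\widetilde O_\cols(s\kappa\ln\cols)$, which for $s$ near $\cols^{0.999}$ exceeds your stated error $\widetilde O_\cols(\cols^{1-\unif})$; this is harmless, since $s\kappa\ln^{C}\cols=o_\cols(s/\cols)$ for $\unif\geq3$, so it is still dominated by the main negative term $-s/(2\cols)$, but the error should be recorded in that form rather than as $\widetilde O_\cols(\cols^{1-\unif})$.
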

\begin{claim}\label{Prop1049abar}
Suppose that $\cols^{0.999}< s < \cols-\cols^{0.49}$. Then for all $a\in\mathcal{D}_{s,\text{tame}}$ we have $F(a)<F(\bar{a})$.
\end{claim}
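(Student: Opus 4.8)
\textbf{Proof strategy for Claim~\ref{Prop1049abar}.}
The plan is to run the local-variations machinery of \Lem~\ref{Prop_Averaging} twice, once on the rows $i\le s$ and once on the rows $i>s$, so that after these moves the matrix $a$ has been ``flattened'' enough that we can bound $F(a)$ by a one-parameter expression and compare it to $F(\bar a)$. Fix $a\in\dtame s$ with $\cols^{0.999}<s<\cols-\cols^{0.49}$. Since $a$ is $s$-stable and separable, exactly $s$ entries of $a$ exceed $\cols^{-1}\myconst$, and by separability each such entry is in fact at least $(1-\kappa)/\cols$ with $\kappa=\cols^{1-\unif}\ln^{20}\cols$. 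The first step is to argue (using that each row sum is $1/\cols$ and each column sum is $1/\cols$, so a row can carry at most one stable entry and likewise a column) that after relabelling we may assume the $s$ stable entries sit on the diagonal positions $(1,1),\dots,(s,s)$; call $S=\{1,\dots,s\}$ and $\bar S=\{s+1,\dots,\cols\}$.

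First I would handle the rows $i\in\bar S$. For such a row, every entry $a_{ij}$ is at most $\cols^{-1}\myconst$, so taking $J=\bar S$ (of size $\cols-s\ge\cols^{0.49}$, hence $|J|\ge\cols^{\mu}$ for $\mu=0.49$) and noting $\myconst^{\unif-1}=1.01/\unif$ gives $\max_{j\in J}a_{ij}^{\unif-1}\le 1.01/(\unif\cols^{\unif-1})$, which is below the threshold $\frac{0.995}{\unif\cols^{\unif-1}}(\mu-\ln\ln\cols/\ln\cols)$ in~(\ref{eqProp_Averaging}) only if $\mu$ is taken larger; so in fact I would first apply \Lem~\ref{Prop_Averaging} with $J$ being the set of columns $j\in\bar S$ for which $a_{ij}$ is genuinely small (below $\tfrac{0.8}{\unif\cols^{\unif-1}}$, say), which is all but $O(1)$ of them because the row sum is $1/\cols$, and then deal with the $O(1)$ exceptional columns by a second round. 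After this, every row $i\in\bar S$ has all its $\bar S$-entries equal. A symmetric argument applied to the columns (using the transpose symmetry of $F$, since $\norm a_\unif$ and $H(a)$ are invariant under transposition) flattens the $S\times\bar S$ and $\bar S\times S$ blocks within each column. The net effect is that we have increased $F$ and reduced to a matrix of the block form: an $s\times s$ block that is near $\cols^{-1}\mathrm{id}$, a constant $\bar S\times\bar S$ block, and small constant off-diagonal blocks. Now I bound $H$ via \Fact~\ref{Cor410} (applied row-by-row, with $J=\bar S$) and compute $E$ directly in terms of $t=s/\cols\in(\cols^{-0.001},1-\cols^{-0.51})$ and the mass $\rho=\sum_{i\le s}\cols a_{ii}$ that the diagonal of the $S$-block carries, getting $F(a)\le G(t,\rho)+\widetilde O_\cols(\cols^{1-\unif})$ for an explicit $G$.

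The final step is calculus: show $G(t,\rho)<F(\bar a)=2\ln2/\cols^{\unif-1}+o(\cols^{1-\unif})$ uniformly over the relevant range. Here I would use that $F(\bar a(s))$ was already estimated in~(\ref{sstable}), so it suffices to show the flattened matrix cannot beat $\bar a(s)$ by more than $\cols^{0.999-\unif}$ and then invoke~(\ref{eqLemma_as}); the dominant term in $F(\bar a(s))-F(\bar a)$ is $(1-t)\ln(1-t)$, which for $1-t>\cols^{-0.51}$ is $\le -\cols^{-0.51}\cdot 0.51\ln\cols$, comfortably more negative than the positive $\widetilde O_\cols(\cols^{1-\unif})$ corrections, and more negative than the error $\cols^{0.999-\unif}$ we are allowed (since $\unif\ge3$ makes $\cols^{0.999-\unif}\le\cols^{-2.001}\ll\cols^{-0.51}\ln\cols$). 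The main obstacle I anticipate is the bookkeeping in the first step: verifying that \Lem~\ref{Prop_Averaging} can actually be applied with a usable exponent $\mu$ despite the ``stable'' entries and the handful of moderately-large-but-not-stable entries that separability does \emph{not} rule out in the $\bar S$ rows — this is exactly where one must be careful that the threshold $\frac{0.995}{\unif\cols^{\unif-1}}(\mu-\ln\ln\cols/\ln\cols)$ in~(\ref{eqProp_Averaging}) is met, possibly by peeling off the bad columns in stages and by choosing $\mu$ close to $1$ once the row has been partially flattened. The comparison with $\bar a(s)$ and the concluding inequality~(\ref{eqLemma_as}) are then routine given the estimates~(\ref{baraf})--(\ref{sstable}) already recorded above.
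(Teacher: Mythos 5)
Your overall strategy (flatten $a$ with \Lem~\ref{Prop_Averaging}, bound the entropy with Fact~\ref{Cor410}, bound the energy through $\partial E/\partial\norm a_\unif^\unif$, and compare with $\bar a(s)$, resp.\ $\bar a$) is the same as the paper's, and your endgame is sound: in the range $\cols^{0.999}<s<\cols-\cols^{0.49}$ the term $(1-s/\cols)\ln(1-s/\cols)$ is at most $-\Omega(\cols^{-0.51}\ln\cols)$, which swamps all $\widetilde O_\cols(\cols^{1-\unif})$ and $O_\cols(1/\cols)$ corrections, so once the flattening and the entropy/energy estimates are in place the conclusion $F(a)<F(\bar a)$ follows (the paper even gets away with the cruder bound $F(\hat a)\le -\tfrac s\cols(1-\tfrac s\cols)+O_\cols(1/\cols)<0<F(\bar a)$, without routing through~(\ref{eqLemma_as})).

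However, there is a genuine gap exactly at the point you flag as the ``main obstacle'', and your proposed fix does not close it. Separability only guarantees that non-stable entries satisfy $a_{ij}\le\cols^{-1}\myconst$, i.e.\ $a_{ij}^{\unif-1}\le 1.01/(\unif\cols^{\unif-1})$, whereas condition~(\ref{eqProp_Averaging}) requires $\max_{j\in J}a_{ij}^{\unif-1}<\tfrac{0.995}{\unif\cols^{\unif-1}}(\mu-\ln\ln\cols/\ln\cols)$, which is below $0.995/(\unif\cols^{\unif-1})$ for \emph{every} admissible $\mu\le1$. So the obstruction is the size of the entry, not the size of $|J|$ or the choice of $\mu$: an entry with $a_{ij}^{\unif-1}$ between roughly $0.995/(\unif\cols^{\unif-1})$ and $1.01/(\unif\cols^{\unif-1})$ can never be placed in an admissible $J$, no matter how many peeling stages you run or how close to $1$ you push $\mu$ after partial flattening. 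The paper handles precisely these entries by a separate, direct argument (Claim~\ref{prop_099101}): if a row contains an entry in $[1.02/(\cols\unif),\,\cols^{-1}\myconst]$, replacing that entire row by the flat row $(\cols^{-2},\dots,\cols^{-2})$ strictly increases $F$, because the entropy gain $q^{-1}((1.02\ln\cols)/\unif-\ln2)$ beats the energy loss $\le 1.02\,\unif^{-\unif/(\unif-1)}\cols^{-1}\ln\cols$; only after this reduction are all remaining non-stable entries below $1.02/(\cols\unif)$, and then \Lem~\ref{Prop_Averaging} applies with room to spare. Without this ingredient (or an equivalent direct estimate) your flattened block structure, and hence the bound $F(a)\le G(t,\rho)+\widetilde O_\cols(\cols^{1-\unif})$, is unjustified. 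A smaller, fixable issue: your plan to flatten columns by applying the lemma to the transpose is not immediate, since after the first round of row-averaging the matrix is only row-stochastic; the paper avoids column operations altogether, keeping row-wise averages and using double stochasticity of the original $a$ only to control the sums $r=\sum_i r_i$ and $t=\sum_i t_i$ in the entropy/energy bookkeeping.
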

\begin{claim}\label{Prop1abar}
Suppose that $\cols-\cols^{0.49}\leq s <\cols$. Then for all $a\in\mathcal{D}_{s,\text{tame}}$ we have $F(a)<F(\bar{a})$.
\end{claim}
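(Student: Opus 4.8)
The plan is to prove Claim~\ref{Prop1abar} by a direct estimate of $F(a)=H(a)+E(a)$; unlike in Claims~\ref{Prop0999abar}--\ref{Prop1049abar}, the unstable part of the matrix is now too small for the ``flattening'' \Prop~\ref{Prop_Averaging}, but it is correspondingly so small that it is almost invisible to both $H$ and $E$. Fix $a\in\dtame s$ and put $t=\cols-s$, so $1\le t\le\cols^{0.49}$. First I would record the shape of $a$: since $a$ is separable and $s$-stable, each of its $s$ entries exceeding $\cols^{-1}\myconst$ is in fact $\ge\cols^{-1}(1-\kappa)$ (with $\kappa=\cols^{1-\unif}\ln^{20}\cols$ as in~(\ref{eqsep})), and no two of them lie in a common row or column -- two in one row would force that row sum above $1/\cols$ -- so they form a partial permutation matrix. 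Permuting rows and columns, which changes neither $H$, nor $E$, nor membership in $\dtame s$, I may assume these large entries are $a_{11},\dots,a_{ss}\in[\cols^{-1}(1-\kappa),\cols^{-1}]$ and that every other entry, in particular every entry of the lower-right $t\times t$ block $B=(a_{ij})_{i,j>s}$, is at most $\cols^{-1}\myconst$. A short count with the row and column sums shows that the mass of $a$ lying outside the $s$ large entries and the block $B$ is at most $3\kappa$, so $B$ carries mass $T'\in[\tfrac t\cols-3\kappa,\tfrac t\cols]$ over only $t^2\le\cols^{0.98}$ cells. (The same count shows $\dtame{\cols-1}=\emptyset$, since for $s=\cols-1$ it forces $a_{\cols\cols}>\cols^{-1}(1-\kappa\cols)>\cols^{-1}\myconst$, a contradiction; hence I may assume $t\ge2$.)

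Next I would bound the two parts of $F$ separately. For the entropy, split $H(a)$ over the three regions: since $z\mapsto-z\ln z$ is increasing on $[0,1/\mathrm e]$ and $a_{ii}\le1/\cols$, the $s$ large entries contribute at most $\tfrac s\cols\ln\cols$; the off-block mass of at most $3\kappa$ contributes at most $3\kappa\ln(\cols^2/(3\kappa))=\widetilde O_\cols(\cols^{1-\unif})$; and, since $z\mapsto z\ln(t^2/z)$ is increasing for $z<t^2/\mathrm e$ and $T'\le\tfrac t\cols<1<t^2/\mathrm e$ (here $t\ge2$), the block contributes at most $T'\ln(t^2/T')\le\tfrac t\cols\ln(t\cols)$ (this is the elementary inequality behind Fact~\ref{Cor410}). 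Summing and using $s=\cols-t$, the two $\tfrac t\cols\ln\cols$ terms cancel and $H(a)\le\ln\cols+\tfrac{t\ln t}{\cols}+\widetilde O_\cols(\cols^{1-\unif})$. For the energy, the large diagonal gives $\sum_{i\le s}a_{ii}^\unif\le s\cols^{-\unif}$, while the remaining entries -- all at most $\cols^{-1}\myconst$ and of total mass at most $\tfrac t\cols+3\kappa$ -- add at most $(\cols^{-1}\myconst)^{\unif-1}(\tfrac t\cols+3\kappa)=\tfrac{1.01\,t}{\unif\cols^{\unif}}+\widetilde O_\cols(\cols^{2-2\unif})$ to $\norm a_\unif^\unif$; since $\unif\ge3$ this yields $1-2\cols^{1-\unif}+\norm a_\unif^\unif\le1-\cols^{1-\unif}-\tfrac{t}{2\cols^{\unif}}+\widetilde O_\cols(\cols^{2-2\unif})$. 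Taking logarithms (using $\ln(1-y)\le-y$ and a second-order expansion only to discard a negative term) and multiplying by the smallest admissible value of $c$ from~(\ref{c-bounds}) -- which yields the largest possible $E(a)$, as the logarithm is negative -- I obtain $E(a)\le-\ln\cols-\tfrac{t\ln\cols}{2\cols}+\widetilde O_\cols(\cols^{1-\unif})$.

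Adding the two bounds gives $F(a)\le\tfrac t\cols\bigl(\ln t-\tfrac12\ln\cols\bigr)+\widetilde O_\cols(\cols^{1-\unif})$, and since $t\le\cols^{0.49}$ we have $\ln t-\tfrac12\ln\cols\le-0.01\ln\cols$, whence $F(a)\le-0.01\,\tfrac{\ln\cols}{\cols}+\widetilde O_\cols(\cols^{1-\unif})$. Because $\unif\ge3$ the error term is negligible beside $\tfrac{\ln\cols}{\cols}$, so $F(a)<0<\tfrac{2\ln2}{\cols^{\unif-1}}+o_\cols(\cols^{1-\unif})=F(\bar a)$ for $\cols$ large (by~(\ref{baraf})), which is the assertion. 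I expect the genuinely delicate points to be the exact cancellation of the $\tfrac t\cols\ln\cols$ contributions in $H(a)$ and the need to control $E(a)$ to additive accuracy $\widetilde O_\cols(\cols^{1-\unif})$, which forces careful use of the lower bound on $c$ in~(\ref{c-bounds}) in the logarithmic expansion; the structural reduction to ``scaled permutation matrix plus a negligible block'' is the conceptual heart but should be routine to write out.
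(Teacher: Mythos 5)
Your argument is correct, and at the top level it runs on the same mechanism as the paper's proof of this claim: in the regime $s\geq\cols-\cols^{0.49}$ neither you nor the paper invokes the flattening \Lem~\ref{Prop_Averaging}; both exploit that a separable, $s$-stable, doubly stochastic $a$ is a scaled partial permutation matrix plus a residual block of mass $O(t/\cols)$, $t=\cols-s$, whose entropy gain (at most $\tfrac t\cols\ln t$) is dominated by an energy loss of order $\tfrac t\cols\ln\cols$. The execution, however, is genuinely different. The paper caps the entries in the unstable rows by $1.02/(\cols\unif)$ via Claim~\ref{prop_099101}, estimates the entropy through Fact~\ref{Cor410}, the energy through the derivative bound (\ref{probpartial}), and compares with $F(\cols^{-1}\id)=\tfrac12F(\bar a)$; you instead use only the stability threshold $\cols^{-1}\myconst$ (so that $(\myconst)^{\unif-1}=1.01/\unif<1$ yields the factor $1-1.01/\unif>\tfrac12$ in the norm estimate), raw maximum-entropy bounds on the three regions of the matrix, and an explicit expansion of the logarithm multiplied by the smallest admissible $c$ from (\ref{c-bounds}) (legitimate, since the logarithm is negative, and this is where uniformity in $c$ enters), concluding $F(a)<0<F(\bar a)$. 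What your route buys is a shorter, self-contained estimate that bypasses Claim~\ref{prop_099101} and Fact~\ref{Cor410} entirely; what it gives up is only the slightly stronger conclusion $F(a)\leq\tfrac12F(\bar a)$, which is not needed. Two minor remarks: your observation that $\dtame{\cols-1}=\emptyset$ is correct but dispensable, since for $t=1$ one still has $T'\leq1/\cols<t^2/\mathrm{e}$ and the block entropy bound applies verbatim; and (\ref{baraf}) is an evaluation at the upper end of (\ref{c-bounds}), so for smaller admissible $c$ you should only quote $F(\bar a)\geq 2\ln 2\,\cols^{1-\unif}+o_\cols(\cols^{1-\unif})>0$, which is all your final comparison actually uses.
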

\noindent
 \Lem~\ref{Lemma_smmTame} is then immediate from Claims~\ref{Prop0abar}--\ref{Prop1abar}. 
 
 \pa{The general strategy will be to  compare $a\in \mathcal{D}_{s,\text{tame}}$ to the overlap $\bar{a}(s)$ defined in (\ref{baras2}) and finally to the central overlap above.} To this end, we observe that
 \begin{align}\label{baraf}
 H(\bar{a})&=2\ln \cols,
 \hspace{1cm}\text{and}\hspace{1cm}E(\bar{a})=-2\ln \cols+\frac{2\ln 2}{\cols^{\unif-1}}+o(\cols^{1-\unif}),
 \end{align}
 and if $s<q$ then 
 \begin{align}\label{sstableentropyenergy}
 H(\bar{a}(s))&=\frac{s}{\cols }\ln \cols +\frac{\cols -s}{\cols }\ln (\cols (\cols -s)),\qquad
 E(\bar{a}(s))<-2\ln \cols  +\frac{s}{\cols }\ln \cols +\widetilde{O}_\cols (\cols ^{1-\unif })\qquad\mbox{and thus}\\
 F(\bar{a}(s))&=\frac{s}{\cols }\ln \cols +\frac{\cols -s}{\cols }\ln (\cols (\cols -
 s))+c\ln\left(1+\frac{s-2\cols }
 {\cols ^\unif }+\frac{(\cols -s)^2}{\cols ^{\unif }(\cols -s)^\unif } \right)\nonumber\\		
 &<\ln \cols  +\frac{\cols -s}{\cols }\ln(\cols -s)+o_\cols ({\cols ^{1-\unif }})\nonumber\\
 &\hspace{1cm}+\left[ (\cols ^{\unif -1}-1/2)\ln \cols  -\ln 2
 \right]\cdot\left[\left(\frac{s-{2}\cols }
 {\cols ^\unif }+\frac{(\cols -s)^2}{\cols ^{\unif }(\cols -s)^\unif } \right)
 -\frac{1}{2}\left( 
 \frac{s-2\cols }{\cols ^\unif }+\frac{(\cols -s)^2}{\cols ^{\unif }(\cols -s)^\unif } \right)^2 
 \right]\nonumber\\
 &=
 (1-s/\cols )\ln(1-s/\cols )+\frac{2\ln 2}{\cols ^{\unif -1}}-\frac{ s\ln 2}
 {\cols ^\unif } -\frac{s\ln \cols }{2\cols ^{\unif }}+\frac{\ln \cols }{\cols ^{\unif -1}}+\frac{\ln \cols }
 {\cols ^{\unif -1}(1-s/\cols )^{\unif -2}} \nonumber\\
 &\hspace{3.7cm}-
 \frac{\cols ^{\unif -1}\ln \cols }{2}\left[\left( 
 \frac{s-2\cols }{\cols ^\unif }+\frac{(\cols -s)^2}{\cols ^{\unif }(\cols -s)^\unif } \right)^2 
 \right]+o_\cols (\cols ^{1-\unif }).\label{sstable}
 \end{align}

\subsubsection{Proof of Claim \ref{Prop0abar}}
We begin with the following consequence of \pa{\Lem}~\ref{Prop_Averaging}.

\begin{claim}\label{prop_099101}
Suppose that $a\in \cS$ has an entry $a_{ij}\in [1.02/(\cols\unif),\, q^{-1}(1.01/\unif)^{1/(\unif-1)}]$.
Then the matrix $a'\in \cS$ 
with entries
	$$a'_{xy}=\vecone\{x\neq i\}a_{xy}+\vecone\{x=i\}q^{-2}\qquad(x,y\in[\cols])$$
satisfies $F(a')> F(a)$.
\end{claim}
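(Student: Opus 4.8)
The plan is to prove Claim~\ref{prop_099101} by a direct estimate of $F(a')-F(a)$, exploiting that $a'$ agrees with $a$ off the single row $i$. Write $x=(x_1,\dots,x_\cols)$ for the rescaled $i$-th row of $a$, that is $x_j=\cols a_{ij}$, so $x$ is a probability vector and by hypothesis the coordinate $x_{j_0}=\cols a_{ij}$ lies in $[1.02/\unif,\,\myconst]$. Since only row $i$ changes, one computes $H(a')-H(a)=\cols^{-1}\bc{\ln\cols-H(x)}$, where $H(x)=-\sum_jx_j\ln x_j$, and $\norm{a'}_\unif^\unif-\norm{a}_\unif^\unif=\cols^{-\unif}\bc{\cols^{1-\unif}-S}$ with $S:=\sum_jx_j^\unif\ge\cols^{1-\unif}$. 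The argument of the logarithm in $E$ is $1-2\cols^{1-\unif}+\norm{a}_\unif^\unif=1-o(1)$ (indeed $\norm{a}_\unif^\unif\le\cols^{1-\unif}$), so $\ln(1-u)\ge-u/(1-u)$ gives $E(a')-E(a)\ge-(1+o(1))\,c\,\cols^{-\unif}S$. Using $c\le(\cols^{\unif-1}-1/2)\ln\cols$ and multiplying by $\cols$, the claim reduces to showing
\[
\ln\cols-H(x)>(1+o(1))\,S\ln\cols .
\]

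The crux is to control $H(x)$ and $S$ simultaneously via $m:=\max_jx_j$. On one hand $S=\sum_jx_j^\unif\le m^{\unif-1}\sum_jx_j=m^{\unif-1}$, so $m\ge S^{1/(\unif-1)}$; on the other hand $x_{j_0}\ge1.02/\unif$ forces $m\ge1.02/\unif$; hence $m\ge m_0:=\max\{S^{1/(\unif-1)},\,1.02/\unif\}$. For the entropy, splitting off the largest coordinate gives $H(x)\le h(m)+(1-m)\ln(\cols-1)$, where $h$ is the binary entropy from Fact~\ref{Cor410}; the map $t\mapsto h(t)+(1-t)\ln(\cols-1)$ is decreasing on $(1/\cols,1]$, and $m\ge m_0>1/\cols$, so $H(x)\le h(m_0)+(1-m_0)\ln(\cols-1)\le h(m_0)+(1-m_0)\ln\cols$, whence $\ln\cols-H(x)\ge m_0\ln\cols-\ln2$. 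Therefore it suffices to prove that $m_0-S$ is bounded away from $0$ by a constant depending only on $\unif$; then $m_0\ln\cols-\ln2>(1+o(1))S\ln\cols$ once $\cols$ is large.

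To bound $m_0-S$ from below I would first note the a priori bound $S\le x_{j_0}^\unif+(1-x_{j_0})^\unif$, obtained by concentrating the remaining mass on a single coordinate. Since $f(t):=t^\unif+(1-t)^\unif$ is convex and symmetric about $\tfrac12$, its maximum over $x_{j_0}\in[1.02/\unif,\myconst]$ is attained at an endpoint, and an elementary estimate (using $\myconst=(1.01/\unif)^{1/(\unif-1)}$ and $\unif\ge3$) gives $S<0.4$. Now split into cases: if $S\le(1.02/\unif)^{\unif-1}$ then $m_0=1.02/\unif$ and $m_0-S\ge\tfrac{1.02}{\unif}\bigl(1-(1.02/\unif)^{\unif-2}\bigr)\ge\tfrac1{2\unif}$; if $S>(1.02/\unif)^{\unif-1}$ then $m_0=S^{1/(\unif-1)}$ and $m_0-S=S^{1/(\unif-1)}\bigl(1-S^{(\unif-2)/(\unif-1)}\bigr)\ge\tfrac{1.02}{\unif}\bigl(1-0.4^{1/2}\bigr)\ge\tfrac1{3\unif}$, using $S<0.4$ and $(\unif-2)/(\unif-1)\ge\tfrac12$. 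In either case $m_0-S\ge\tfrac1{3\unif}$, so $\cols\bigl(F(a')-F(a)\bigr)\ge\tfrac1{3\unif}\ln\cols-\ln2-o(\ln\cols)\to\infty$, and hence $F(a')>F(a)$ for all $\cols$ larger than some $\cols_0(\unif)$.

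I expect the main obstacle to be the entropy estimate of the second paragraph. The temptingly simple bound $\ln\cols-H(x)\ge x_{j_0}\ln\cols-\ln2$ is far too lossy exactly when almost all the mass of row $i$ sits on a single coordinate other than $j_0$: there $S$ can be of constant order while $H(x)$ stays bounded (rather than close to $\ln\cols$), and the simple bound gives nothing. One is forced to go through $m=\max_jx_j$ in order to use the genuine tension between entropy and $\ell_\unif$-mass---a concentrated row has small entropy, a spread-out row has small $S$---which is precisely what yields the constant-order slack $m_0-S\ge\tfrac1{3\unif}$. Alternatively one could first apply \Lem~\ref{Prop_Averaging} to flatten the block of row-$i$ entries below $\cols^{-1}(0.99/\unif)^{1/(\unif-1)}$, reducing to a row with only boundedly many exceptional entries; but the entry $a_{ij}$ in the hypothesis may itself lie above the threshold of \Lem~\ref{Prop_Averaging}---indeed $\myconst$ was chosen so---so the direct estimate above seems the cleaner route.
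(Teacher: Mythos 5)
Your proposal is correct, but it takes a genuinely different route from the paper's. The paper argues via \Lem~\ref{Prop_Averaging}: it assumes (by maximality) that $a$ maximises $F$ subject to the distinguished entry lying in the given window, applies the averaging lemma to $J=[\cols]\setminus\{j\}$ to conclude that all other entries of row $i$ equal their common average, and then compares this two-parameter row with the flat row explicitly — the entropy gain is at least $\cols^{-1}\bc{(1.02/\unif)\ln\cols-\ln2}$ while, by the norm computation and the derivative bound (\ref{probpartial}), the energy loss is at most roughly $1.02\,\unif^{-\unif/(\unif-1)}\cols^{-1}\ln\cols$, smaller by the factor $\unif^{-1/(\unif-1)}<1$. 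You instead estimate $F(a')-F(a)$ pointwise for an arbitrary row, playing the maximum rescaled entry $m$ against both terms at once: the grouping bound $H(x)\le h(m)+(1-m)\ln(\cols-1)$ (Fact~\ref{Cor410}(i) with a singleton $J$) together with monotonicity in $m$, the bound $S\le m^{\unif-1}$, and the a priori bound $S\le x_{j_0}^\unif+(1-x_{j_0})^\unif<0.4$ coming from the hypothesis window $[1.02/\unif,\myconst]$, which yields the constant slack $m_0-S\ge 1/(3\unif)$ and hence $\cols\bc{F(a')-F(a)}\ge(m_0-S)\ln\cols-\ln2-o_\cols(\ln\cols)>0$ for $\cols>\cols_0(\unif)$; I checked the intermediate identities ($H(a')-H(a)=\cols^{-1}(\ln\cols-H(x))$, the norm difference, the $\ln(1-u)\ge -u/(1-u)$ step) and the two-case bound on $m_0-S$, and they are all sound. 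What your route buys is that it is self-contained and avoids the reduction to a maximiser — in particular it sidesteps the question, which the paper passes over, of whether hypothesis (\ref{eqProp_Averaging}) of \Lem~\ref{Prop_Averaging} is actually satisfied at that maximiser if some entry of row $i$ other than the distinguished one happens to be large; what the paper's reduction buys is a shorter, fully explicit computation (with a slightly better constant than your $1/(3\unif)$), though either constant suffices for the claim.
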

\begin{proof}
Without loss of generality we may assume that $a$ maximises $F(a)$ over the set $a\in\cS$ with respect to $a_{11}\in [1.02/(\cols\unif),\cols^{-1}\myconst]$. If we apply \pa{\Lem}~\ref{Prop_Averaging} to the set $J=[\cols]\backslash \{1\}$ with $\mu =\ln (\cols-1)/\ln \cols$ then the maximality of $F(a)$  implies that
	$a_{1j}=(1-\cols a_{11})/(\cols (\cols -1))$ for $j\geq 2$.
Let $a_1$ denote the first row of $a$.
Because $a'$ is obtained from $a$ by replacing the first row by $(\cols^{-2},\ldots,\cols^{-2})$,
the change in entropy comes to
	\begin{align}\label{eqQuickAndDirty0}
	H(a')-H(a)&=\cols^{-1}\ln \cols-H(\cols a_1)\geq q^{-1}\bc{\ln\cols-\ln2-(1-1.02/\unif)\ln\cols}
		\geq q^{-1}\bc{(1.02\ln q)/\unif-\ln2}.
	\end{align}
Furthermore,
	\begin{align}\label{eqQuickAndDirty1}
	\norm a_\unif^\unif-\|a'\|_\unif^\unif&=
		a_{11}^{\unif}-\cols^{1-2\unif}+(\cols-1)\left[\frac{1-
			\cols a_{11}}{\cols(\cols-1)}\right]^{\unif}\leq
			\cols^{-\unif}\bc{1.01/\unif}^{\unif/(\unif-1)}+4\cols^{1-2\unif}.
	\end{align}
The derivative of the function $E$ from \Lem~\ref{Lemma_DFGsecond} satisfies
	\begin{align}\label{probpartial}
		\frac{\partial E(a)}{\partial \norm{a}_\unif^\unif}=\frac{c}{1-2\cols^{1-\unif}+\norm{a}_\unif^\unif}\leq 1.001\cols^{\unif-1}\ln \cols.
	\end{align}
Hence, (\ref{eqQuickAndDirty1}) implies that
	$E(a)-E(a')\leq
		1.02\unif^{-\unif/(\unif-1)}\cols^{-1}\ln q.$
Combining this bound with (\ref{eqQuickAndDirty0}) and assuming that $\cols\geq q_0$ for a large enough constant $\cols_0$, we find 
	$F(a')-F(a)=H(a')-H(a)+E(a')-E(a)>0$.
\end{proof}

\begin{proof}[Proof of Claim \ref{Prop0abar}.]
The set $\dtame 0$ is compact.
Therefore, the continuous function $F$ attains a maximum at some point $a\in\dtame 0$.
Assume for contradiction that $a\neq\bar a$.
Then we will construct a sequence of matrices $a[i]$, $i\in\brk\cols$, such that $a[0]=a$, $a[\cols]=\bar a$, with
$F(a[i+1])\geq F(a[i])$ for all $i<\cols$ and $F(a[0])\neq F(a[\cols])$, clearly arriving at a contradiction to the maximality of $F(a)$.
Specifically, let $a[0]=a$ and obtain $a[i]$ from $a[i-1]$ by letting
	$$a_{xy}[i]=\vecone\{x\neq i\}a_{xy}[i-1]+\vecone\{x=i\}q^{-2}\qquad\mbox{for }i,x,y\in[\cols].$$
This construction ensures that $a[\cols]=\bar a$.
To show that $F(a[i+1])\geq F(a[i])$ we consider two cases.
\begin{description}
\item[Case 1: $\max_{j\in[\cols]}a_{ij}\leq1.02/(\cols\unif)$.]
	We apply \Lem~\ref{Prop_Averaging} with $J=[\cols]$ and $\mu=1$
	Since $a_{ij}\leq1.02/(\cols\unif)$, the assumption (\ref{eqProp_Averaging}) is satisfied.
	Consequently, $F(a[i])\geq F(a[i-1])$, with equality if and only if $a[i]=a[i-1]$.
\item[Case 2: $\max_{j\in[\cols]}a_{ij}>1.02/(\cols\unif)$.]
	Claim~\ref{prop_099101} shows that $F(a[i])>F(a[i-1])$.
\end{description}
Finally, since $a\neq\bar a$ we have $a[i]\neq a[i-1]$ for some $i\in[\cols]$, whence
$F(\bar a)=F(a[\cols])>F(a[0])=F(a)$. {Note that although we may temporarily leave $\mathcal{D}_{0,\text{tame}}$ during this process, we are guaranteed to return to $\bar{a}\in\mathcal{D}_{0,\text{tame}}$.}
\end{proof}

\bigskip

\subsubsection{Proof of Claim \ref{Prop0999abar}}
The strategy of this proof is to compare an arbitary element of $\mathcal{D}_{s}$ to a matrix that is more evenly distributed (using \pa{\Lem}~\ref{Prop_Averaging}),  which we then compare to the barycentre of the face of $\mathcal{D}$ (i.e.  $\bar{a}(s)$) and finally, to which we compare $\bar{a}$.
Let $1\leq s \leq \cols ^{0.999}$ and take $a\in\mathcal{D}_{s}$. It follows from Corollary \ref{prop_099101} and the definition of separability that we may assume $\cols a_{ii}\geq 1-\kappa$ for $i\leq s$ with $\kappa=\ln^{20} \cols /\cols ^{\unif -1}$, and further, that we may also assume  $\cols a_{ij}<1.02/\unif $ for all $i\ne j \leq s$ and $s<i,j\leq \cols $. Let $\cols \hat{a}$ be the singly-stochastic matrix with entries
\[
  \hat{a}_{ij} = \left\{\def\arraystretch{1.2}
  \begin{array}{@{}c@{\quad}l@{}}
    a_{ij} & \text{if $i\in[\cols ],j\leq s$},\\
   \frac{1}{\cols -s}\sum_{\ell>s}a_{i\ell} & \text{if $i\in[\cols ],j>s$}.\\
  \end{array}\right.
\]
Since $\cols -s=\cols (1-o_\cols (1))$ we may apply \pa{\Lem}~\ref{Prop_Averaging} to $J=[\cols ]\backslash[s]$ for any $i\in[\cols ]$.
	It follows that {$F(a)\leq F(\hat{a})$}. 
	We will now compare $F(\hat{a})$ and $F(\bar{a}(s))$. To this end we must first estimate $F(\hat{a})$. We start with the entropy term. As $\hat{a}$ is stochastic and $\cols \hat{a}_{ii}\geq 1-\kappa$ for $i\leq s$, we find that
	\begin{align*}
		r_i=\cols \sum_{i\ne j}\hat{a}_{ij}=1-\cols {a_{ii}}
			\leq \kappa,\hspace{1cm}\text{for }i\leq s.
	\end{align*}
Further, if we set $r_i=\cols \sum_{j=1}^s\hat{a}$ for $i>s$ then it follows from the fact that $\cols a$ is doubly-stochastic that
	\begin{align*}
		\sum_{i>s}r_i=\cols \sum_{i>s}\sum_{j=1}^s\hat{a}_{ij}= 
		\cols \sum_{i>s}\sum_{j=1}^s{a}_{ij}\leq \kappa s,\hspace{1cm}\text{ for }i>s.
	\end{align*}
Let $\hat{a}_i$ denote the $i$th row of $\hat{a}$.
We know from Fact \ref{Cor410} that
	\begin{align*}
		H(\cols \hat{a}_i)\leq h(r_i)+r_i\ln (\cols -1)\leq h(\kappa )+\kappa \ln \cols \hspace{1cm}\text{for }i\leq s,
	\end{align*}
and
	\begin{align*}
		H(\cols \hat{a}_i)\leq h(r_i) +r_i\ln s +(1-r_i)\ln (\cols -s)\leq h(r_i) +r_i\ln s +\ln (\cols -s),\hspace{1cm}\text{ for }i>s.
	\end{align*}
Since $h$ is concave, it follows that
	\begin{align*}
		\sum_{i>s}H(\cols \hat{a}_i)\leq (\cols -s)\ln(\cols -s)+\sum_{i>s}(h(r_i)-r_i\ln s)\leq  (\cols -s)\ln(\cols -s) + \cols h\left( \frac{\kappa s}{\cols }\right) +\kappa s \ln s.
	\end{align*}
Therefore
	\begin{align}\label{hataentropy}
		H(\hat{a})&=\ln \cols  +\frac{1}{\cols }\sum_{i=1}^{\cols }H(\cols \hat{a}_i)
			\leq \ln \cols +\frac{s}{\cols }\left(h(\kappa) +\kappa \ln \cols \right) 
			+\frac{\cols -s}{\cols }\ln(\cols -s)+
			h\left(\frac{\kappa s}{\cols }\right) +\frac{\kappa s}{\cols } \ln 
			s\nonumber\\
		&\leq \ln \cols  +\frac{\cols -s}{\cols }\ln(\cols -s)+o_\cols (\cols ^{1-
			\unif })\hspace{2.3cm}\text{[as $s\leq \cols ^{0.999}$ and $h(\kappa 
			s/\cols )={\widetilde{O}_\cols} (\cols ^{1-\unif })$]}\nonumber\\
		&=H(\cols \bar{a}(s)) +o_\cols (\cols ^{1-\unif })\hspace{3.95cm}\text{[by 
			(\ref{sstableentropyenergy})]}.
	\end{align}
Next we deal with estimation of the energy term. It will be convenient to break down the problem as follows:
	\begin{align*}
		\norm{\hat{a}}_\unif ^\unif &= \sum_{i\leq s}\sum_{j\leq \cols } 
			\hat{a}_{ij}^\unif +\sum_{i>s}\sum_{j>s} \hat{a}_{ij}^\unif + 
			\sum_{i>s}\sum_{j\leq s} \hat{a}_{ij}^\unif .
	\end{align*}
{As the $\unif $-norm is maximised when summands are as unequal as possible}, we have  ${\norm{\hat{a}_i}_\unif ^\unif }\leq \cols ^{1-\unif }$ for $i\leq s$. Further, by the same logic we have 
\begin{align*}
\sum_{i>s}\sum_{j>s} \hat{a}_{ij}^\unif 
 = (\cols -s)^2\left(\frac{1}{\cols - s} \sum_{\ell >s}a_{i\ell} \right)^\unif \leq (\cols -s)^{2-\unif }\cols ^{-\unif },
\end{align*}
and 
	\begin{align*}
			\sum_{i>s}\sum_{j\leq s} \hat{a}_{ij}^\unif \leq \left(	\sum_{i>s}\sum_{j\leq s} \hat{a}_{ij}\right)^\unif \leq \left(\frac{\kappa s}{\cols }\right)^\unif .
	\end{align*}
As $s/\cols \leq \cols ^{-0.001}$ we know 
	\begin{align*}
		\sum_{i>s}\sum_{j\leq s} \hat{a}_{ij}^\unif \leq\left(\frac{\kappa s}{\cols }\right)^\unif  \leq \cols ^{\unif (1-\unif )}\cols ^{-\gamma},
	\end{align*}
for some $\gamma>0$. If we combine the above results then we have shown that
	\begin{align*}
		\norm{\hat{a}}^\unif _\unif \leq s\cols ^{1-\unif }+(\cols -s)^{2-\unif }\cols ^{-\unif }+\cols ^{\unif (1-\unif )}\cols ^{-\gamma}=\norm{\bar{a}(s)}_\unif ^\unif +\cols ^{\unif (1-\unif )}\cols ^{-\gamma},
	\end{align*}
and so
	\begin{align}\label{hataprob}
		E(\hat{a})-E(\bar{a}(s))\leq \frac{\partial E(a)}{\partial \norm{a}_\unif ^\unif } \big(\norm{\hat{a}}^\unif _\unif -\norm{\bar{a}(s)}_\unif ^\unif )\leq \cols ^{-(1-\unif) ^2}\cols ^{-\gamma}\ln \cols (1+o_\cols (1/\cols ))=  o_\cols (\cols ^{1-\unif }).
	\end{align}
Therefore it follows from (\ref{hataentropy}) and (\ref{hataprob}) that
	\begin{align*}
		F(a)\leq F(\hat{a})\leq F(\bar{a}(s))+o_\cols (\cols ^{1-\unif }).
	\end{align*} 
{Recalling that $s/\cols \leq \cols ^{-0.001}$, it follows from (\ref{sstable}) that
	\begin{align*}
		F(a)\leq F&(\hat{a})\leq F(\bar{a}(s))+o_\cols (\cols ^{1-\unif })
		\\&= (1-s/\cols )\ln(1-s/\cols )+\frac{2\ln 2}{\cols ^{\unif -1}}-\frac{s\ln 2}
			{\cols ^\unif } -\frac{s\ln \cols }{2\cols ^{\unif }}+\frac{\ln \cols }{\cols ^{\unif -1}}+\frac{\ln \cols }
			{\cols ^{\unif -1}(1-s/\cols )^{\unif -2}} \nonumber\\
			&\hspace{3.7cm}-
			\frac{\cols ^{\unif -1}\ln \cols }{2}\left[\left( 
			\frac{s-2\cols }{\cols ^\unif }+\frac{(\cols -s)^2}{\cols ^{\unif }(\cols -s)^\unif } \right)^2 
			\right]+o_\cols (\cols ^{1-\unif })
		\\&= (1-s/\cols )\ln(1-s/\cols )+\frac{2\ln 2}{\cols ^{\unif -1}}+\frac{\ln \cols }		
			{\cols ^{\unif -1}}+\frac{\ln \cols }{\cols ^{\unif -1}(1-s/\cols )^{\unif -2}} \nonumber\\
			&\hspace{6.5cm}-
			\frac{\cols ^{\unif -1}\ln \cols }{2}\left( 
			\frac{s-2\cols }{\cols ^\unif }\ \right)^2 
			+o_\cols (\cols ^{1-\unif })
		\\&= -\frac{s}{\cols }(1-s/\cols )+\frac{2\ln 2}{\cols ^{\unif -1}}+\frac{\ln \cols }
			{\cols ^{\unif -1}}+\frac{\ln \cols }{\cols ^{\unif -1}(1-s/\cols )^{\unif -2}} 
			-\frac{2\ln \cols }{\cols ^{\unif -1}} 
			+o_\cols (\cols ^{1-\unif })
		\\&\leq (1-s/\cols )\ln(1-s/\cols )+\frac{2\ln 2}{\cols ^{\unif -1}}+o_\cols (\cols ^{1-\unif })=F(\bar{a})-\frac{s}{\cols }(1-s/\cols )+o_\cols (\cols ^{1-\unif }).
	\end{align*}
As the $\frac{s}{\cols }(1-s/\cols )$ is decreasing in $s$, we have shown that $F(a)<F(\bar{a})-1/\cols +1/\cols ^2+o_\cols (\cols ^{1-\unif })$. This implies our original assertion.\qed}

\subsubsection{Proof of Claim \ref{Prop1049abar}}
Let $\cols ^{0.999}<s<\cols -\cols ^{0.49}$ and take $a\in\mathcal{D}_{s}$. As before, we may assume $\cols a_{ii}\geq 1-\kappa$ for $i\leq s$, and $\cols a_{ij}<1.02/\unif $ for all $i\ne j \leq s$ and $s<i,j\leq \cols $. Let $\cols \hat{a}$ be the singly-stochastic matrix with entries 
\[
  \hat{a}_{ij} = \left\{\def\arraystretch{1.2}
  \begin{array}{@{}c@{\quad}l@{}}
    a_{ij} & \text{if $i=j\in [s]$},\\
   \frac{1}{s-1}\sum_{\ell\in [s]\backslash\{i\}}a_{i\ell} &
	 \text{if $i,j\leq s, i\ne j$},\\
   \frac{1}{\cols -s}\sum_{\ell>s}a_{i\ell} & \text{if $j>s$},\\
   \frac{1}{s}\sum_{\ell\leq s}a_{i\ell} & \text{if $j\leq s<i$}.\\
  \end{array}\right.
\]
Since $s,\cols -s>\cols ^{0.49}$ we may apply \pa{\Lem}~\ref{Prop_Averaging} to $J=[\cols ]\backslash[s]$ and $J'=[s]\backslash \{i\}$ for any $i\in[\cols ]$. It follows that $F(a)\leq F(\hat{a})$. To estimate $F(\hat{a})$ we will now define
	\begin{align*}
		r_i=\cols \sum_{j>s}a_{ij}=\cols \sum_{j>s}\hat{a}_{ij}\text{ for }i\leq s,\text{ and }r_i=\cols \sum_{j\leq s}a_{ij}=\cols \sum_{j\leq s}\hat{a}_{ij}\text{ for }i>s,
	\end{align*}
and since $\cols a$ is doubly stochastic,
	\begin{align*}
		r=\sum_{i>s}r_i=\sum_{i\leq s}r_i\leq \sum_{i\leq s}1-\cols a_{ii}\leq \kappa s.
	\end{align*}
Further, we also set
	\begin{align}\label{tidefinition1}
		t_i=\cols \sum_{j\in[s]\backslash\{i\}}\hat{a}_{ij}
		=\cols \sum_{j\in[s]\backslash\{i\}}{a}_{ij}\leq  1-\cols a_{ii}
		\leq \kappa\hspace{1cm}\text{for }i\leq s.
	\end{align}
As before we will now estimate the entropy term and the energy term separately. 
Again, we let $\hat{a}_i$ denote the $i$th row of $\hat{a}$.
We know from Fact \ref{Cor410} (ii) that
	\begin{align*}
		H(\cols \hat{a}_i)&\leq h(\cols a_{ii})+(1-\cols a_{ii})h(t_i/(1-\cols a_{ii}))+t_i\ln 
			(s-1)+(1-t_i-\cols a_{ii})\ln (\cols -s)\\
		&\leq h(\cols a_{ii})+(1-\cols a_{ii})h(r_i/(1-\cols a_{ii}))+t_i\ln 
			s+r_i\ln (\cols -s)\\
		&=-\cols a_{ii}\ln (\cols a_{ii})-t_i\ln t_i-r_i\ln r_i +t_i\ln s+r_i\ln (\cols -s) \leq h(t_i)+t_i \ln s +h(r_i) +r_i\ln (\cols -s), \hspace{1cm}i\leq s,
	\end{align*}
where the last line follows as the function $g:x\mapsto -(1-x)\ln(1-x)$ is decreasing with $g'(x)\leq 1$ for small $x$. If we set $\widetilde{H}=\frac{1}{\cols }\sum_{i\leq s}(h(t_i) +t_i\ln s)$ then it follows from the concavity of $h$ that
	\begin{align*}
		\frac{1}{\cols }\sum_{i\leq s}H(\cols \hat{a}_i)\leq \widetilde{H} +\frac{s}{\cols }h(r/s)+\frac{r}{\cols }\ln(\cols -s).
	\end{align*}
Furthermore,  by Fact \ref{Cor410} (i) and the concavity of $h$, we have 
	\begin{align*}
		\frac{1}{\cols }\sum_{i>s}H(\cols \hat{a_i})\leq \frac{\cols -s}{\cols }h(r/(\cols -
			s))+{\frac r\cols} \ln s+\frac{\cols -s-r}{\cols }\ln (\cols -s).
	\end{align*}
Combining these results, it follows that
	\begin{align*}
		H(\hat{a})\leq \ln \cols  + \widetilde{H} +\left(\frac{s}{\cols }h(r/s)+\frac{r}
			{\cols }\ln(\cols -s)\right)+\left(\frac{\cols -s}{\cols }h(r/(\cols -
			s))+{(r/\cols)} \ln s\right)+\frac{\cols -s-r}{\cols }\ln (\cols -s).
	\end{align*}
Further as $h(x)\leq x(1-\ln x)$, we have
	\begin{align}\label{hataentropy2}
		H(\hat{a})-\widetilde{H}&\leq\ln \cols  +\frac{r}{\cols }\left[2-2\ln r +2\ln s 
			+\ln (\cols -s) \right]+\frac{\cols -s}{\cols }\ln (\cols -s)\nonumber
		\\&\leq \ln \cols  +\frac{r}{\cols }\left(2+3\ln \cols  \right)+\frac{\cols -s}{\cols }\ln 					(\cols -s)+O_\cols (1/\cols ),\hspace{1cm}\text{[as $-z\ln z\leq 1$ for 
			$z\geq 0$]}\nonumber
		\\&=2\ln \cols   +\frac{r}{\cols }\left(2+3\ln \cols  \right)+(1-s/\cols )\ln(1-s/\cols )-\frac{s\ln \cols }{\cols }+O_\cols (1/\cols ).
	\end{align}
Since $s<\cols $, we obtain
	\begin{align}\label{hataentropy2_1}
		\widetilde{H}-\frac{2\ln \cols }{\cols }\sum_{i\leq s}t_i=\frac{1}{\cols }\sum_{i\leq 
			s}\Big(h(t_i)+t_i(\ln s - 2\ln \cols )\Big)\leq \frac{1}{\cols }\sum_{i\leq 
			s}(h(t_i)-t_i\ln \cols )\leq \frac{1}{\cols },
	\end{align}
where the last inequality follows from noting that $\max_{x\in[0,1]}h(x)-x\ln \cols \leq 1/\cols $. Thus, by combining (\ref{hataentropy2}) and (\ref{hataentropy2_1}) we have shown
	\begin{align}\label{prop1049entropyestimate}
		H(\hat{a})\leq 2\ln \cols   +\frac{r}{\cols }\left(2+3\ln \cols  \right)+(1-s/\cols )\ln(1-s/\cols )-\frac{s\ln \cols }{\cols }+\frac{2\ln \cols }{\cols }\sum_{i\leq s}t_i+O_\cols (1/\cols ).
	\end{align}
Next, we move on to estimating the energy term. As before we firstly estimate $\norm{a}_\unif ^\unif $, then we apply a bound for $\partial E / \partial \norm{a}_\unif ^\unif $ in order to approximate $E(\hat{a})-E(\bar{a}(s))$. Firstly note from (\ref{tidefinition1}) that for $i\leq s$, we have
	\begin{align*}
		\hat{a}_{ii}^\unif &\leq \cols ^{-\unif }(1-t_i)^\unif =\frac{1}{\cols ^\unif }-\frac{\unif t_i}{\cols ^\unif }+o_\cols (\cols ^{1-2\unif }),
		\hspace{1cm}\text{and}\\
		\sum_{j\in[s]\backslash\{i\}}\hat{a}_{ij}^\unif &=(s-1)\left(\frac{t_i/\cols }
		{s-1}\right)^\unif \leq \frac{(\kappa/\cols )^{\unif }}{(s-1)^{\unif -1}}\leq (\kappa/\cols )^{\unif }.
	\end{align*}
Moreover, since {$\cols\hat{a}$} is stochastic and $\cols \hat{a}_{ii}\geq 1-\kappa$ if $i\leq s$, we have
	\begin{align*}
		\sum_{j\in [\cols ]\backslash[s]}\hat{a}_{ij}^\unif \leq (\kappa/\cols )^\unif ,\hspace{1cm}\text{for }i\leq s.
	\end{align*}
Combining the above equations yields
	\begin{align*}
		\sum_{i\leq s}\|\hat{a}_i\|_\unif ^\unif \leq s\cols ^{-\unif }-\frac{\unif }
		{\cols ^\unif }\sum_{i\leq 
		s}t_i+o_\cols (\cols ^{1-2\unif }).
	\end{align*}
Since $\cols a_{ii}\geq 1-\kappa$ for $i\leq s$ we have $\cols a_{ij}\leq \kappa$ for $j\leq s<i$. By construction, this implies that $\cols \hat{a}_{ij}\leq \kappa$ for $j\leq s<i$. Furthermore, we have that
	\begin{align*}
		\sum_{i>s}\sum_{j \leq s} \hat{a}_{ij}^\unif \leq \frac{\kappa^\unif s}
		{\cols ^\unif }\hspace{1cm}\text{and}\hspace{1cm}\sum_{i>s}\sum_{j 
		>s}\hat{a}_{ij}=(\cols -s)^2\left(\frac{\sum_{j>s}a_{ij}}{\cols -
		s}\right)^\unif \leq \cols ^{-\unif }(\cols -s)^{2-\unif }.
	\end{align*}
We have shown that
	\begin{align*}
		\norm{\hat{a}}_\unif ^\unif \leq s\cols ^{-\unif }+\cols ^{-\unif }(\cols -s)^{2-\unif }-\frac{\unif }
		{\cols ^\unif }\sum_{i\leq 
		s}t_i +o_\cols (\cols ^{1-2\unif })= \norm{\bar{a}(s)}_\unif ^\unif -\frac{\unif }
		{\cols ^\unif }\sum_{i\leq 
		s}t_i +o_\cols (\cols ^{1-2\unif }),
	\end{align*}
and so from (\ref{sstableentropyenergy}) and (\ref{probpartial}), we have
	\begin{align}\label{prop1049probestimate}
		E(\hat{a})&=E(\bar{a}(s))+\frac{\partial E(a)}
			{\partial\norm{a}_\unif ^\unif }\cdot\big(\norm{\hat{a}}^\unif _\unif -
			\norm{\bar{a}(s)}^\unif _\unif \big)\nonumber
		\\&\leq E(\bar{a}(s)) -\cols ^{\unif -1}\ln \cols \big(1+o_\cols (1/\cols )
			\big)\cdot\left(\frac{\unif }
			{\cols ^\unif }\sum_{i\leq 
			s}t_i +o_\cols (\cols ^{1-2\unif })\right)\nonumber
		\\&\leq E(\bar{a}(s)) -\frac{\unif \ln \cols }
			{\cols }\sum_{i\leq 
			s}t_i +o_\cols (\cols ^{1-\unif })\nonumber
		\\&=-2\ln \cols +\frac{s}{\cols }\ln \cols -\frac{\unif \ln \cols }
			{\cols }\sum_{i\leq	s}t_i+O_\cols (1/\cols ).
	\end{align}
Finally then, it follows from (\ref{prop1049entropyestimate}) and (\ref{prop1049probestimate}) that
	\begin{align*}
		F(a)\leq F&(\hat{a})\leq \frac{r}{\cols }\left(2+3\ln \cols  \right)
			+(1-s/\cols )\ln(1-s/\cols )+\frac{(2-\unif )\ln \cols }	
			{\cols }\sum_{i\leq s}t_i+O_\cols (1/\cols )
			\\&=(1-s/\cols )\ln(1-s/\cols )+O_\cols (1/\cols )\leq -\frac{s}{\cols }(1-
			s/\cols )+O_\cols (1/\cols ).
	\end{align*}
Fortunately, our assumption $\cols ^{0.999}<s<\cols -\cols ^{0.49}$ ensures that $F(a)<0<F(\bar{a})$.
\qed
\subsubsection{Proof of Claim \ref{Prop1abar}}
Let $\cols -\sqrt{\cols }\leq s\leq \cols -1$ and take $a\in\mathcal{D}_{s}$. As before we may assume $\cols a_{ii}\geq 1-\kappa$ for $i\in[s]$, and $\cols a_{ij}<1.02/\unif $ for all $i\ne j \leq s$ and $s<i,j\leq \cols $. 
Let $r_i=\cols \sum_{j\ne i}a_{ij}$. As $\cols a$ is doubly-stochastic and $\cols a_{ii}\geq 1-\kappa$ for $i\leq s$, we have
	\begin{align*}
		r=\sum_{i\leq s}r_i=\cols \sum_{i\leq s}\sum_{j\ne 
		i}a_{ij}=\sum_{i\leq s}1-\cols a_{ii}\leq \kappa s.
	\end{align*}
Further, we let
	\begin{align*}
		t_i=\sum_{j>s}\cols a_{ij},
		\hspace{1cm}\text{and}\hspace{1cm}t=\sum_{i\leq s}t_i.
	\end{align*}
Since $\cols a$ is doubly-stochastic we have
	\begin{align}\label{Prop1tdefn}
		t=\sum_{i\leq s}\sum_{j>s}\cols a_{ij}=\sum_{i>s}\sum_{j\leq s}\cols a_{ij}.
	\end{align}
The strategy of this proof is to compare $F(a)$ to $F(\cols^{-1}\id)$ where $\id$ is the $\cols \times \cols $ identity matrix. To this end we firstly estimate the entropy of $a$. 
Again, let $a_i$ denote the $i$th row of $a$.
We now set {$\overline{H}=\frac{1}{\cols }\sum_{i\leq s}h(\cols a_{ii})$} and as before apply Fact \ref{Cor410} (ii) and the concavity of $h$ to observe that
	\begin{align*}
		\frac{1}{\cols }\sum_{i\leq s}&H(\cols a_i)\leq \frac{1}{\cols }\sum_{i\leq s} 
			h(\cols a_{ii})+r_ih(t_i/r_i)+t_i\ln (\cols -s) + (r_i-t_i)\ln s
		\\&\leq \overline{H}+\frac{r}{\cols }h(t/r)+\frac{t}{\cols }\ln (\cols -s)+\frac{r-
			t}{\cols }\ln s 
		\\&\leq \overline{H}+\frac{t}{\cols }(1-\ln t + \ln r)+\frac{t}{\cols }\ln (\cols -
			s)+\frac{r-t}{\cols }\ln s
			\hspace{1cm}\text{[as $h(z)\leq z(1-\ln z)$].}
	\end{align*}
As $-z\ln z\leq 1$ for $z>0$, we have that $-t\ln t \leq 1$. Furthermore, as $\cols a$ is doubly-stochastic we have that $t\leq \cols -s$, and so
	\begin{align*}
		\frac{t}{\cols }(1-\ln t + \ln r)\leq \frac{\cols -s}{\cols }\cdot(1+\widetilde{O}(1/\cols )\big).
	\end{align*}
Therefore
	\begin{align}\label{Propk_leqs}
		\frac{1}{\cols }\sum_{i\leq s}&H(\cols a_i)\leq\overline{H}+\frac{t}{\cols }\ln (\cols -s)+\frac{r-t}{\cols }\ln s +\frac{\cols -s}{\cols }\big(1+{\widetilde{O}_\cols }(1/\cols )\big).
	\end{align}
We now move to estimating $H(\cols a_i)$ for $i>s$. As is by now routine, we apply Fact \ref{Cor410} (i) along with the concavity of $h$ and (\ref{Prop1tdefn}) to conclude that
	\begin{align}\label{Propk_geqs}
		\frac{1}{\cols }\sum_{i> s}H(\cols a_i)&\leq \frac{1}{\cols }	
			\sum_{i>s}\left[h\left(\sum_{j\leq s}\cols a_{ij}\right)
			+\sum_{j\leq s}\cols a_{ij}\ln 
			(s)+\left(1-\sum_{j\leq s}\cols a_{ij}\right)\ln (\cols -s)\right] \nonumber
		\\&\leq \frac{\cols -s}{\cols }h\left(\frac{t}{\cols -s}\right)+\frac{t}{\cols }\ln s 
			+\frac{\cols -s-t}
			{\cols }\ln(\cols -s)\nonumber
		\\&\leq \frac{\cols -s}{\cols }\ln 2+\frac{t}{\cols }\ln s 
			+\frac{\cols -s-t}
			{\cols }\ln(\cols -s)\hspace{1cm}\text{[as $h(z)\leq \ln 2$ for all $z$]}.
	\end{align}
Finally then, we have from (\ref{Propk_leqs}) and (\ref{Propk_geqs}) that
	\begin{align*}
		H(&a)=\ln \cols  +\overline{H}+\frac{1}{\cols }\sum_{i\leq \cols }H(\cols a_i)
			\\&\leq\ln \cols +\frac{r}
			{\cols }\ln s +\frac{\cols -s}{\cols }\ln 2+\frac{\cols -s}
			{\cols }\ln(\cols -s)+\frac{\cols -s}{\cols }\big(1+\widetilde{O}(1/\cols )\big)
			\\&\leq\ln \cols +\overline{H}+\frac{r}
			{\cols }\ln s +\frac{\cols -s}{\cols }\ln 2+\frac{\cols -s}
			{2\cols }\ln \cols +\frac{\cols -s}{\cols }\big(1+\widetilde{O}(1/\cols )\big),\hspace{1cm}\text{[as $\cols -s\leq \sqrt{\cols }$]}.
	\end{align*} 
Moving on to the energy term, we firstly estimate $\norm{a}_\unif ^\unif $. As the norm is maximised when the summands are widely distributed, we have
	\begin{align*}
		\sum_{i\leq s}\norm{a_i}_\unif ^\unif \leq \kappa^\unif  s+\sum_{i\leq s}a_{ii}^\unif =\sum_{i\leq s}a_{ii}^\unif +o(1/\cols ^{\unif +1}).
	\end{align*}
A similar argument then applies to the remaining $\cols -s$ rows. 
Recalling Corollary \ref{prop_099101}, it follows that
	\begin{align*}
		\sum_{i>s}\norm{a_i}_\unif ^\unif \leq  (\cols -s)\left(\frac{1.02}{\cols \unif }\right)^\unif .
	\end{align*}
Therefore,
	\begin{align*}
		\norm{a}_\unif ^\unif \leq (\cols -s)\left(\frac{1.02}{\cols \unif }\right)^\unif +\sum_{i\leq s}a_{ii}^\unif +o(1/\cols ^{\unif +1}),
	\end{align*}
and so
	\begin{align*}
		\norm{a}_\unif ^\unif -\norm{\cols^{-1}\id }_\unif ^\unif \leq \frac{\cols -s}{\cols ^\unif }\left[\left(\frac{1.02}{\unif }\right)^\unif -1\right]+\frac{1}{\cols ^\unif }\sum_{i\leq 
		s}[(\cols a_{ii})^\unif -1]+o(1/\cols ^{\unif +1}).
	\end{align*}
Finally then, we have from (\ref{probpartial}) that
	\begin{align*}
		E(a)-E(\cols^{-1}\id )&=\frac{\partial E(a)}{\partial a}\big(
			\norm{a}_\unif ^\unif -\norm{\cols^{-1}\id}_\unif ^\unif \big)
		\\&\leq\ln \cols \cdot\frac{\cols -s}{\cols }\left[\left(\frac{1.02}{\unif }\right)^\unif -1\right]+\frac{\ln \cols }{\cols }\sum_{i\leq 
		s}[(\cols a_{ii})^\unif -1])+o_\cols (\ln \cols /\cols ).
	\end{align*}
If we combine our estimates for the entropy and energy terms, we have shown that
	\begin{align*}
		F(a)-F(\cols^{-1}\id )&\leq \overline{H}+\frac{r}
			{\cols }\ln s +\frac{\ln \cols }
			{\cols }\sum_{i\leq 
		s}[(\cols a_{ii})^\unif -1])+o_\cols (\ln \cols /\cols )
		\\&\hspace{6cm}+\ln \cols \cdot\frac{\cols -s}{\cols }\left[\left(\frac{1.02}
			{\unif }\right)^r+\frac{2+\ln 2}{\ln \cols }-1/2\right].
	\end{align*}
Since $\max_{0<z<1}h(z)-z\ln \cols \leq 1/\cols $ and $h(z)=h(1-z)$, if we set $\rho_{ii} =1-\cols a_{ii}$ then 
	\begin{align*}
		 &\overline{H}+\frac{r}{\cols }\ln s +\frac{\ln \cols }{\cols }\sum_{i\leq s}
			[(\cols a_{ii})^\unif -1])\leq\frac{1}{\cols }\sum_{i\leq s}\Big[h(\rho_{ii})
			+\rho_{ii}\ln \cols +\big((1-\rho_{ii})^\unif -1\big)\ln \cols \Big] \\
		\hspace{0.2cm}&\leq \frac{1}{\cols }\sum_{i\leq s}\big[h(\rho_{ii})+\rho_{ii}\ln \cols -\unif \rho_{ii}\ln \cols \big] +O(1/\cols )\leq \frac{1}{\cols }\sum_{i\leq s}\big[h(\rho_{ii})-\rho_{ii}\ln \cols \big] +O(1/\cols )=O(1/\cols ).
	\end{align*} 
Finally then
	\begin{align*}
		F(a)\leq F(\cols^{-1}\id )-\ln \cols  / 3\cols +o_\cols (\ln \cols /\cols )\leq F(\cols^{-1}\id )=\frac{1}{2}F(\bar{a}).
	\end{align*}\qed

\section*{Acknowledgement}
We would like to thank the referees for their many helpful comments which have improved the paper.


\begin{thebibliography}{29}

\bibitem{Barriers}
D.~Achlioptas, A.~Coja-Oghlan:
Algorithmic barriers from phase transitions.
Proc.~49th FOCS (2008) 793--802.


\bibitem{nae}
D.~Achlioptas, C.~Moore:
Random $k$-SAT: two moments suffice to cross a sharp threshold.
SIAM Journal on Computing {\bf 36} (2006) 740--762.


\bibitem{AchMooreHyp2}
D.\ Achlioptas, C.\ Moore: On the 2-colorability of random hypergraphs.
Proc.\ 6th RANDOM (2002) 78--90.

\bibitem{AchNaor}
D.~Achlioptas, A.~Naor:
The two possible values of the chromatic number of a random graph.
Annals of Mathematics {\bf 162} (2005), 1333--1349.

\bibitem{ANP}
D.~Achlioptas, A.~Naor, Y.~Peres:
Rigorous location of phase transitions in hard optimization problems. 
Nature \textbf{435} (2005) 759--764.


\bibitem{AlonKriv}
N.~Alon, M.~Krivelevich: The concentration of the chromatic number of random graphs.
Combinatorica {\bf 17} (1997) 303--313


\bibitem{BBColor}
B.~Bollob\'as: The chromatic number of random graphs.
Combinatorica {\bf8} (1988) 49--55

\bibitem{Covers}
A.~Coja-Oghlan: Upper-bounding the $k$-colorability threshold by counting covers.
Electronic Journal of Combinatorics {\bf 20} (2013) P32.

\bibitem{Danny}
A.~Coja-Oghlan, D.~Vilenchik:
Chasing the $k$-colorability threshold.
Proc.\ 54th FOCS (2013) 380--389.

\bibitem{Lenka}
A.~Coja-Oghlan, L.~Zdeborov\'a:
The condensation transition in random hypergraph 2-coloring.
Proc.~23rd SODA (2012) 241--250.

\bibitem{KostaNAE}
A.~Coja-Oghlan, K.~Panagiotou:
Catching the $k$-NAESAT threshold.
Proc.\ 44th STOC (2012) 899--908.

\bibitem{DSS3}
J.~Ding, A.~Sly, N.~Sun: Proof of the satisfiability conjecture for large $k$.
Proc.\ 47th STOC (2015) 59--68.


\bibitem{DFG}
M.~Dyer, A.~Frieze, C.~Greenhill:
On the chromatic number of a random hypergraph.
Journal of Combinatorial Theory, Series B.
\textbf{113} (2015), 68-122

\bibitem{ER}
P.\ Erd\H os, A.\ R\'enyi: On the evolution of random graphs.
Magayar Tud.\ Akad.\ Mat.\ Kutato Int.\ Kozl.\ {\bf 5} (1960) 17--61.


\bibitem{FriezeWormald}
A.~Frieze, N.~Wormald:
Random $k$-Sat: a tight threshold for moderately growing $k$.
Combinatorica {\bf 25} (2005) 297--305.

\bibitem{GJR}
C.~Greenhill, S.~Janson, A.~Ruci\' nski:
On the number of perfect matchings in random lifts.
Combinatorics, Probability and Computing, 19 (2010), pp. 791--817.

\bibitem{HatamiMolloy}
H.\ Hatami, M.\ Molloy: Sharp thresholds for constraint satisfaction problems and homomorphisms.
Random Structures and Algorithms {\bf 33} (2008) 310--332.

\bibitem{JLR}
S.~Janson, T.~{\L}uczak, A.~Ruci\'nski: Random Graphs, Wiley, New York,  2000.

\bibitem{Kauzmann48}
W.~Kauzmann: The nature of the glassy state and the behavior of liquids at low temperatures.
Chem.\ Rev.\ {\bf 43} (1948)  219--256.

\bibitem{KrivelevichSudakov}
M.\ Krivelevich, B.\ Sudakov: The chromatic numbers of random hypergraphs.
Random Structures and Algorithms {\bf 12} (1998) 381--403.

\bibitem{pnas}
F.~Krzakala, A.~Montanari, F.~Ricci-Tersenghi, G.~Semerjian, L.~Zdeborova:
Gibbs states and the set of solutions of random constraint satisfaction problems.
Proc.~National Academy of Sciences {\bf104} (2007) 10318--10323.

\bibitem{KS}
A.\ Kupavskii, D.\ Shabanov: 
Colorings of partial Steiner systems and their applications,
Journal of Mathematical Sciences {\bf 206} (2015) 511--538.

\bibitem{LuczakColor} 
T.~{\L}uczak: The chromatic number of random graphs.
\COMB\ {\bf11} (1991) 45--54

\bibitem{Luczak}
T.~{\L}uczak: A note on the sharp concentration of the chromatic number of random graphs.
\COMB\ {\bf 11} (1991) 295--297

\bibitem{Matula}
D.\ Matula: Expose-and-merge exploration and the chromatic number of a random graph.
Combinatorica {\bf 7} (1987) 275--284.


\bibitem{ShamirSpencer}
E.~Shamir, J.~Spencer: Sharp concentration of the chromatic number of random graphs $\gnp$.
\COMB\ {\bf 7} (1987) 121--129


\end{thebibliography}
\end{document}